\documentclass[11pt]{article}

\usepackage[T1]{fontenc}
\usepackage{lmodern}
\usepackage[margin=1in]{geometry}
\usepackage{amsmath,amssymb,amsthm,mathtools,mathrsfs}
\usepackage{booktabs}
\usepackage{enumitem}
\usepackage{microtype}
\usepackage{array}
\usepackage{float}
\usepackage{tikz}
\usepackage{pgfplots}
\usepackage[most]{tcolorbox}
\usepackage{url}
\usepackage{natbib}
\usepackage[colorlinks=true,linkcolor=blue,citecolor=blue,urlcolor=blue,
  pdftitle={Harmonic Ranking: A 0.698-Competitive Algorithm for Edge-Weighted Oblivious Matching},
  pdfauthor={Anonymous Author(s)},
  pdfsubject={Edge-weighted oblivious bipartite matching},
  pdfkeywords={oblivious matching, Ranking, randomized greedy, minimum cut, maximum flow, factor-revealing program}]{hyperref}
\pgfplotsset{compat=1.14}

\definecolor{thresholdblue}{RGB}{48,105,178}
\definecolor{thresholdred}{RGB}{213,92,67}
\definecolor{sharedviolet}{RGB}{111,82,154}

\definecolor{flowblue}{RGB}{43,101,176}
\definecolor{flowred}{RGB}{205,82,60}

\allowdisplaybreaks
\setlist[itemize]{leftmargin=1.5em}
\setlist[enumerate]{leftmargin=1.8em}
\newtheorem{theorem}{Theorem}[section]
\newtheorem{proposition}[theorem]{Proposition}
\newtheorem{lemma}[theorem]{Lemma}
\newtheorem{corollary}[theorem]{Corollary}
\theoremstyle{definition}
\newtheorem{definition}[theorem]{Definition}
\newtheorem{example}[theorem]{Example}
\theoremstyle{remark}
\newtheorem{remark}[theorem]{Remark}

\newcommand{\R}{\mathbb R}
\newcommand{\E}{\mathbb E}
\newcommand{\one}{\mathbf 1}
\newcommand{\dd}{\,\mathrm d}
\newcommand{\ALG}{\operatorname{ALG}}
\newcommand{\OPT}{\operatorname{OPT}}
\newcommand{\MC}{\operatorname{MC}}
\newcommand{\MF}{\operatorname{MF}}
\newcommand{\polyLPp}{\operatorname{polyLP}'}
\newcommand{\Unif}{\operatorname{Unif}}
\newcommand{\HR}{\textsc{Harmonic Ranking}}

\newtcolorbox{algorithmbox}[1]{
  enhanced,
  breakable,
  colback=black!2,
  colframe=black!55,
  boxrule=.55pt,
  arc=1pt,
  left=7pt,
  right=7pt,
  top=5pt,
  bottom=5pt,
  title={#1},
  fonttitle=\bfseries,
  coltitle=black,
  attach boxed title to top left={xshift=6pt,yshift=-2mm},
  boxed title style={colback=white,colframe=black!55,boxrule=.55pt,arc=1pt}
}

\title{Harmonic Ranking for Edge-Weighted Oblivious Matching}

\author{
Bo Peng\thanks{
Email: \texttt{ahqspbo@gmail.com}
}
\qquad
Zhihao Gavin Tang\thanks{
Email: \texttt{tang.zhihao@mail.shufe.edu.cn}
}\\[1mm]
Shanghai University of Finance and Economics
}
\date{}

\begin{document}
\maketitle

\begin{abstract}
We study edge-weighted oblivious bipartite matching.  The weight of
every potential edge is known, but its existence is revealed only when
the edge is probed, and a successful probe between two free vertices
must be accepted immediately.  We give an explicit randomized
algorithm with certified competitive ratio \(0.698\), improving
the previous best guarantee of \(0.659\) (Huang, Sun, Wu, and Zhao,
FOCS 2025).  The result is computer-assisted and is verified by a
reproducible exact-integer computation.  The same algorithm has
a \(0.698\)-competitive online implementation for the vertex-weighted
random-arrival model, improving the previous \(0.696\) unweighted
guarantee of Mahdian and Yan
(STOC 2011) and the \(0.686\) vertex-weighted guarantee of Peng and Tang
(EC 2025).

Our algorithm, \HR, is a role-symmetric generalization of
\textsc{Ranking}.  It assigns an independent random rank \(x_z\) to
each vertex and probes a potential edge \(uv\) in decreasing order of
\[
  w_{uv}\frac{h(x_u)h(x_v)}{h(x_u)+h(x_v)}.
\]
This harmonic priority arises from a budget-balanced gain split and a
mutual-proposal interpretation.
The analysis lifts two cutoff curves into indicators, reducing the
exponential-size factor-revealing problem to a polynomial-size directed
minimum-cut instance.  A maximum-flow computation with rounded-down
integer capacities gives a rigorous certificate. 
Independently, we observe that the finite-grid unweighted relaxation of
our factor-revealing program coincides exactly with a Mahdian--Yan program.
\end{abstract}

\medskip
\noindent\textbf{Keywords:}
oblivious matching, Ranking, randomized greedy, min-cut, max-flow,
factor-revealing program.

\newpage

\section{Introduction}

\paragraph{Oblivious Bipartite Matching.} 
We study oblivious matching on \emph{edge-weighted bipartite graphs}. The algorithm knows the bipartition $(L,R)$ and the nonnegative weight of every potential edge in $(L\times R)$, but not which potential edges are realized. Using only these weights and its private randomness, the algorithm determines an order in which to probe them. Whenever a probe reveals a realized edge whose endpoints are both free, the algorithm must accept it immediately and irrevocably. Thus probing is not a preliminary information-gathering phase: the probing order itself is the algorithm.

The performance of an algorithm is measured by its competitive ratio: the largest \(\Gamma\) such that, on every instance, the expected weight of the algorithm's matching is at least \(\Gamma\) times the weight of a maximum-weight matching.  The deterministic greedy algorithm, which probes potential edges in nonincreasing order of weight, is \(1/2\)-competitive, and no deterministic algorithm can achieve a better ratio.

A recurring principle behind successful randomized greedy matching algorithms is to attach randomness to vertices and let the resulting vertex ranks induce the probing order. The challenge in the edge-weighted setting is how to combine the ranks of the two endpoints.
\citet{TangWuZhang2023} use the rank of only one endpoint, resulting in a one-sided rule for a symmetric problem. \citet{HuangSunWuZhao2025} restore symmetry through a product of the two endpoint scores, but leave the probing priority and the gain-sharing rule as separate design choices. 
Our guiding principle is that the gain-sharing rule itself should determine the probing order. This leads to \(\HR\), the following symmetric generalization of \textsc{Ranking}.

\begin{algorithmbox}{Harmonic Ranking with score function \(h\)}
Fix a nonincreasing, right-continuous function
\(h:[0,1)\to\R_{>0}\).
Independently sample rank \(x_z\sim\Unif[0,1)\) for every vertex \(z \in L \cup R\). 
Give every potential pair \(uv\) priority
\begin{equation}\label{eq:intro-priority}
  \pi_{uv}(\boldsymbol x)
  :=
  \frac{h(x_u)h(x_v)}{h(x_u)+h(x_v)}\,w_{uv}.
\end{equation}
Probe the potential pairs in descending priority, breaking ties by
any admissible rule in Section~\ref{sec:tie-breaking}.
\end{algorithmbox}

\begin{theorem}\label{thm:main}
There exists an explicit score function \(h^\star\) such that for every
edge-weighted oblivious instance \(G\), \(\HR(h^\star)\) satisfies
\[
  \E[\ALG(G)]
  \ge
  0.698015475248\,\OPT(G).
\]
\end{theorem}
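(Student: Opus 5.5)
We will prove Theorem~\ref{thm:main} by a randomized primal--dual (gain-sharing) argument, in which the split of gains is chosen to match the harmonic priority. Whenever \(\HR\) matches a pair \(uv\) at ranks \((x_u,x_v)\), we divide \(w_{uv}\) between the endpoints as
\[
\alpha_u = w_{uv}\frac{h(x_v)}{h(x_u)+h(x_v)},\qquad \alpha_v = w_{uv}\frac{h(x_u)}{h(x_u)+h(x_v)}.
\]
Then \(\alpha_u h(x_u)=\alpha_v h(x_v)=\pi_{uv}(\boldsymbol x)\), so each endpoint's \emph{normalized gain} \(\alpha_z h(x_z)\) equals the priority of the matched pair. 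This is the budget-balanced split, and it makes the mutual-proposal reading exact. A vertex \(z\) proposes to \(uv\)-type partners in decreasing order of \(w\cdot h(x_{\text{partner}})/(h(x_z)+h(x_{\text{partner}}))\). Conversely, a pair is probed before another exactly when both endpoints agree on the comparison of normalized gains.

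By weak LP duality it suffices to show, for every edge \(uv\) of a fixed maximum-weight matching, that
\[
\E[\alpha_u+\alpha_v]\ge \Gamma\, w_{uv},
\]
with \(\Gamma=0.698015\ldots\).

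\textbf{Step 1: structural lemmas.} Fix \(uv\) in \(\OPT\) and fix the ranks of all other vertices. We show two monotonicity facts.

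\begin{itemize}
\item \emph{Dominance.} When \(u\)'s rank \(x_u\) is below a cutoff \(\theta(x_v)\), \(u\) is matched with normalized gain at least the value \(\pi_{uv}\) it would obtain with \(v\).
\item \emph{Lower bound above the cutoff.} When \(u\) is at or above the cutoff, its gain is still bounded below by quantities determined by the run of the instance with \(u\) removed. We define \(\beta_v\) as the normalized gain of \(v\) in that run, and \(\beta_u\) symmetrically.
\end{itemize}

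These are the standard alternating-path and compensation arguments of \textsc{Ranking}. They rely on the tie-breaking rules of Section~\ref{sec:tie-breaking} to keep the matching monotone in the ranks. The role symmetry of the priority lets us prove both facts once and apply them to both sides.

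Combining the two facts, \(\E[\alpha_u+\alpha_v]/w_{uv}\) is bounded below by a functional of two nonincreasing cutoff curves \(\theta_u(\cdot),\theta_v(\cdot)\) on \([0,1)^2\) and the function \(h\). The adversary chooses these curves, subject to consistency constraints.

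\textbf{Step 2: factor-revealing program.} The worst case is the minimum of this functional over admissible cutoff curves. We discretize \([0,1)\) into an \(n\)-point grid and take \(h\) piecewise constant on the grid. For the discretization to remain a valid lower bound, we round each curve to the grid in the adversary-favourable direction.

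We then \emph{lift} each curve into indicator variables \(y_{ij}=\one[\theta(i)\ge j]\). Monotonicity of a curve becomes implications \(y_{ij}\Rightarrow y_{i',j'}\), and the objective becomes a sum of unary and pairwise terms in these indicators. We must check that every pairwise coefficient has the submodular sign. If so, minimizing over all curves, which is exponentially many choices, reduces to a minimum \(s\)--\(t\) cut on a directed graph with \(O(n^2)\) nodes. Implication constraints are encoded as infinite-capacity arcs.

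\textbf{Step 3: certification.} We fix an explicit \(h^\star\) found by numerical optimization. Scaling and rounding all capacities \emph{down* to integers} can only decrease the minimum cut, so the bound remains valid. An exact maximum-flow computation then certifies that the minimum cut is at least \(\Gamma\) times the scale. By max-flow/min-cut, every admissible curve pair yields ratio at least \(0.698015475248\).

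\textbf{Main obstacle.} We expect the hard part to be Step 1 together with the submodularity check in Step 2. We need lower bounds on \(\alpha_u+\alpha_v\) that are simultaneously provably valid for arbitrary graphs and separable into pairwise terms of cutoff indicators with the correct sign. The coupling between the two cutoff curves, where \(u\)'s cutoff depends on \(x_v\) and vice versa, is where the edge-weighted symmetric analysis departs from one-sided \textsc{Ranking}.

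Our first attempt will be to bound each endpoint's gain using only its own cutoff and its marginal \(\beta\). We add cross terms only where the mutual-proposal interpretation guarantees a monotone (lattice) structure. If some cross term has the wrong sign, we fall back to relaxing it to a one-sided bound, accepting a small loss in the certified constant.
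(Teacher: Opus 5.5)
Your pipeline (harmonic gain split, cutoff and alternating-path bounds, lifting curves to indicators, a max-flow certificate with capacities rounded down) is the paper's. The reduction breaks, however, where you assert that the bound is a functional of two \emph{nonincreasing} cutoff curves and then encode ``monotonicity of a curve'' as implications between different sections.

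In the edge-weighted model this monotonicity is false. The matching is not monotone in the ranks either, because changing $x_v$ reorders the pairs incident to $v$. Concretely, take $h(t)=1-t$ and let $u$'s only edge be $uv$ with $w_{uv}=1$. Give $v$ two further neighbours: $q_1$ with $w_{q_1v}=2$, $x_{q_1}=0.9$, and $q_2$ with $w_{q_2v}=1/2$, $x_{q_2}=0$. Then $u$ obtains normalized gain at least $\pi_{uv}$ (equivalently, $uv$ is selected) exactly when $x_u<\theta(x_v)$. A direct computation gives $\theta(0)=2/3$, $\theta(1/2)=3/4$, $\theta(3/4)=2/3$. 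Example~\ref{ex:both-cutoffs-nonmonotone} shows that both curves can fail monotonicity simultaneously. What survives is only the within-section statement of Lemma~\ref{lem:one-sided}: for fixed $x_u$, the event ``$u$ is matched strictly earlier than $v$'' is upward closed in $x_v$, and symmetrically.

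Cross-section implication arcs restrict the adversary to monotone pairs and thereby exclude cutoff pairs that actual instances produce. A lower bound on the min cut of your network therefore says nothing about $\E[\ALG]/\OPT$. The network must keep only the within-section arcs $\mathsf X_{ij}\to\mathsf X_{i,j+1}$ and $\mathsf Y_{i+1,j}\to\mathsf Y_{ij}$, plus the exclusivity arc $\mathsf X_{ij}\to\mathsf Y_{ij}$. The stated constant $0.698015475248$ is the max-flow value of exactly this monotonicity-free network for the $240$-step $h^\star$.

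Two further points matter for reaching that exact constant.

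First, suppose you define the thresholds through $\beta_v$ in $G-u$ and $\beta_u$ in $G-v$. Then the regions where each endpoint lies above its cutoff can genuinely overlap, for instance when both $u$ and $v$ have heavy alternatives. The bound then involves $\max\{U,\widetilde V\}=U+\widetilde V-U\widetilde V$. Since $P_i=\sum_k q_{ik}U_{ik}$, the overlap correction $-U_{ij}\widetilde V_{ij}(P_i+Q_j)$ expands into cubic terms with negative coefficients, and this is exactly where your sign check would fail. The paper avoids this in three steps:
\begin{itemize}
\item It defines $a,b$ through the disjoint events ``$u$ strictly earlier'' and ``$v$ strictly earlier'' (Corollary~\ref{cor:cutoff}).
\item It proves, via the insertion lemma, that $\alpha_u\ge g_h(x,a(x))w_{uv}$ whenever $uv$ is not selected.
\item It labels $X=U$ and $Y=1-\widetilde V$.
\end{itemize}
With the telescoping identities $P_i=\sum_j q_{ij}U_{ij}$ and $(m-a_i)P_i=\sum_j r_{ij}U_{ij}$, every term is then either unary or of the form $c\,Z(1-Z')$ with $c\ge 0$, i.e.\ a directed arc. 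So no lossy fallback is needed; a fallback would in any case forfeit the stated constant.

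Second, ``rounding each curve to the grid in the adversary-favourable direction'' is not well defined. For an arbitrary measurable $a$, the value $a(x)$ varies within a cell. Moving a cell between $S$ (weight $1$) and $U$ (weight $P_i+Q_j$, which can exceed $1$) has no fixed sign. The paper instead draws one uniform representative per row cell and per column cell. It shows that the induced discrete thresholds lie in $\mathcal T_m$ with $g_{i,\widehat a_i+1}\le g_{h^{(m)}}(\xi_i,a(\xi_i))$, and then averages to obtain $\MC_m(\boldsymbol h)\le\Gamma(g_{h^{(m)}},a,b)$.
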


This improves on the \(1-1/e\) guarantee of the one-sided
edge-weighted algorithm of \citep{TangWuZhang2023}, and on the \(0.659\) guarantee of the two-sided algorithm of \citep{HuangSunWuZhao2025}.

\newpage
\paragraph{Online consequence for vertex-weighted matching.}

The \textsc{Ranking} algorithm was originally introduced by \citet{KarpVaziraniVazirani1990} for online
bipartite matching.  In the vertex-weighted random-arrival model,
\(\HR\) has a standard online implementation: process arrivals in
order and maximize the harmonic priority among the arriving vertex's
available neighbors.  If equal global priorities are ordered by
arrival time, this execution is pathwise equivalent to the global
harmonic-priority scan.
Section~\ref{sec:online} gives the coupling argument.

\nopagebreak[4]
\begin{corollary}\label{cor:online}
Assuming that the number of online vertices is known in advance,
Online Harmonic Ranking is \(0.698015475248\)-competitive for
vertex-weighted bipartite matching with random arrivals.
\end{corollary}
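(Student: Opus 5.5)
The plan is to reduce Corollary~\ref{cor:online} to Theorem~\ref{thm:main} by a pathwise coupling. We view a vertex-weighted random-arrival instance as an edge-weighted oblivious instance. Put \(w_{uv}:=w_u\) for every offline vertex \(u\in L\) and online vertex \(v\in R\). An online vertex \(v\) with \(x_v\) corresponds to the arrival time of \(v\). Since the number \(n=|R|\) of online vertices is known, we can generate a uniformly random arrival order together with i.i.d.\ \(\Unif[0,1)\) ranks. To do this, we sample \(n\) i.i.d.\ uniforms in advance, sort them, and assign the \(k\)-th smallest to the \(k\)-th arrival. A uniformly random order paired with the sorted i.i.d.\ sample is exactly an i.i.d.\ assignment of ranks to \(R\). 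This is where knowledge of \(n\) is used. Offline vertices get independent ranks \(x_u\) drawn up front. The maximum-weight matching has the same value \(\OPT\) in both views, because the realized edges are the same and the weights agree.

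\emph{Online algorithm.} When \(v\) arrives with rank \(x_v\), it is matched to the available realized neighbor \(u\) that maximizes \(\pi_{uv}=w_u\,h(x_u)h(x_v)/(h(x_u)+h(x_v))\). We then show this is pathwise identical to the global scan of \(\HR(h)\) when equal global priorities are broken by arrival time. Fix the ranks and the realized edge set.

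The key monotonicity is that \(h\) is nonincreasing, and \(a\mapsto ab/(a+b)\) is increasing in \(b\) for fixed \(a>0\). Hence, for fixed \(u\), the priority \(\pi_{uv}\) is nonincreasing in \(x_v\). So the global scan processes \(u\)'s potential edges in arrival order of their \(R\)-endpoints, up to ties, which are broken by arrival time.

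We argue by induction over arrivals that the two executions produce the same matching. The claim is that when \(v\) arrives, the set of \(L\)-vertices matched online equals the set matched in the global scan to earlier-arriving vertices, restricted to \(\{\pi\ge\) the relevant thresholds\(\}\).

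The cleaner route is an exchange argument. Consider any edge \(uv\) accepted by the global scan. Any higher-priority realized edge \(uv'\) with \(v'\) arriving later has \(\pi_{uv'}\le\pi_{uv}\) by monotonicity, so it cannot precede \(uv\) unless there is a tie. Ties are resolved by arrival, so they also do not precede \(uv\). Therefore, at the moment \(uv\) is probed globally, \(u\)'s status depends only on edges to earlier arrivals. It matches \(v\) to its best free neighbor, and that is the same choice the online rule makes.

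I expect the main obstacle to be making this induction rigorous at ties. The arrival-time tie rule has to be an admissible rule in the sense of Section~\ref{sec:tie-breaking}, so that Theorem~\ref{thm:main} applies to it. I will first check that it is a measurable, consistent total order on pairs of equal priority, which I expect the admissibility definition to permit. Given the coupling, \(\E[\ALG_{\text{online}}]=\E[\ALG(G)]\ge 0.698015475248\,\OPT(G)\) follows directly from Theorem~\ref{thm:main}.
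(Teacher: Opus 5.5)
Your route matches the paper's. You set \(w_{uv}=w_u\), \(x_u=r_u\), \(x_v=t_v\), and use that \(\pi_{uv}\) is nonincreasing in \(t_v\) for fixed \(u\). With arrival-time tie-breaking, each offline vertex's pairs are therefore scanned in arrival order, which gives pathwise equality with the global scan. Finally, sorted i.i.d.\ uniforms assigned to a uniformly random order give i.i.d.\ labeled ranks, independent of the offline ranks. The paper phrases the coupling as a stable sort by arrival time: a later-arrival pair can precede an earlier-arrival pair only if the two are vertex-disjoint, so adjacent swaps commute. You argue arrival by arrival instead. Once your argument is written as a clean induction on the set of offline vertices matched to earlier arrivals (drop the abandoned ``thresholds'' version), it is equally valid. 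Also say explicitly that the paper's online rule maximizes \(w_u g_{h^\star}(t_v,r_u)\), which differs from \(\pi_{uv}\) only by the common positive factor \(h^\star(t_v)\).

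Two points need fixing.

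\emph{Admissibility.} In Section~\ref{sec:tie-breaking}, admissibility is not just a measurable, consistent total order. It also requires the local monotonicity property: lowering the rank of any vertex \(z\) leaves the relative order of pairs not incident to \(z\) unchanged, and never moves a \(z\)-incident pair later relative to a non-incident one. The cutoff lemma, and hence Theorem~\ref{thm:main}, relies on exactly this property, so it is what you must verify. It does hold for your rule. Lowering \(x_z\) weakly raises \(\pi\) on every \(z\)-incident pair and leaves all other priorities unchanged. The tie key \(t_v\) of a pair changes only when \(z=v\), and then it decreases.

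\emph{Ties within one arrival.} Arrival time does not break ties among pairs of the same online vertex. You need a secondary key that coincides with the online algorithm's own tie rule; the paper uses \(\prec_L\), then a fixed order. Such a key is rank-independent, so it preserves admissibility. Without it, the global scan and the online rule may select different neighbors on exact ties, and pathwise equality fails.
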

This simultaneously improves the \(0.696\) guarantee for unweighted
graphs~\citep{MahdianYan2011} and the \(0.686\) guarantee for
vertex-weighted graphs~\citep{PengTang2025}.

\subsection{Technical overview}

\paragraph{Harmonic Ranking from score-balanced sharing.}
The design of \(\HR\) is not ad hoc. It emerges from the randomized
primal--dual framework and its economic interpretation through gain
sharing. In online bipartite matching, offline vertices are viewed as
items and online vertices as buyers. Each item posts a price determined
by its random rank and the current arrival time, and the arriving buyer
chooses the available neighboring item that offers the greatest
utility.

The main obstacle to extending this interpretation to edge-weighted
graphs is its inherent asymmetry. An edge weight belongs to a pair,
rather than to either endpoint, so there is no canonical choice of
which endpoint should be the item and which should be the buyer. We
remove this asymmetry through \emph{mutual proposals}. We first specify
how the weight of a potential pair \(uv\) would be divided between
\(u\) and \(v\) if that pair were selected. Each free vertex then
proposes along the available incident pair that offers it the largest
prospective gain. The algorithm probes a pair precisely when its two
endpoints propose to each other.

For this procedure to be well defined, a mutual proposal must exist
whenever an available pair remains. A general gain-sharing rule need
not satisfy this requirement: the proposals may instead form a
directed cycle of length greater than two. We show that, among positive
complementary rank-based gain-sharing rules, ruling out such cycles on
every instance uniquely forces score-balanced sharing, up to rescaling
its underlying score function. Under this rule, multiplying each
endpoint's prospective gain by its own score produces the same
harmonic edge priority. The formal characterization and its proof
appear in Section~\ref{sec:model}.

\paragraph{From the variational formulation to min-cut and max-flow.}
We first establish edge-weighted analogues of the two structural
ingredients underlying analyses of \textsc{Ranking}: a cutoff lemma and
an alternating-path insertion lemma. For \(\HR\), these results yield a
two-curve variational lower bound in the spirit of
\citep{HuangTangWuZhang2019,JinWilliamson2022,PengTang2025}.

A direct grid discretization that enumerates the possible curve pairs
has exponential size, making fine-grid factor-revealing optimization
infeasible. Our \(0.698\) guarantee is enabled by a new lifting: we
represent each one-dimensional threshold curve \(a\) by its
two-dimensional indicator
\[
    U_a(x,y):=\mathbf 1\{y>a(x)\}
\]
on the unit square. Under this representation, the threshold
conditions become directed precedence constraints.  Once the score is
fixed to be constant on a uniform grid, they produce a polynomial-size
directed network whose minimum-cut value lower-bounds the continuous
variational objective.  The maximum-flow problem on the same network
is its strong dual.  Numerical optimization is used only to choose a
candidate score; after that choice is fixed, a reproducible
exact-integer maximum-flow computation verifies the stated guarantee.

A second conceptual contribution of our framework is a complementary
variational hierarchy for the edge-weighted, vertex-weighted, and
unweighted analyses.  At the two-monotone unweighted level, the
complementary exact whole-cell game has the same optimum value as the polynomial-size program of \citep{MahdianYan2011}.

\subsection{Further related work}

\paragraph{Stochastic query--commit matching.}
In the stochastic query--commit model, each potential edge \(e\) has
a known weight \(w_e\) and realization probability \(p_e\).  The
standard formulation assumes independent edge realizations, permits
adaptive probes, and requires every successful probe to be committed
immediately.  Early formulations and commitment variants were studied
by \citet{ChenImmorlicaKarlinMahdianRudra2009} and \citet{CostelloTetaliTripathi2012}. \citet{GamlathKaleSvensson2019} gave a
\((1-1/e)\)-approximation for edge-weighted bipartite graphs.
\citet{DerakhshanFarhadi2023} improved this to
\(1-1/e+\delta\) for an absolute constant \(\delta>0.0014\), and
\citet{ChenHuangLiTang2025} subsequently obtained a
\(0.641\)-approximation. For the unweighted and vertex-weighted cases, which are closely
connected to online stochastic matching, the state-of-the-art
\(0.705\)-approximation for query--commit matching is due to
\citet{ChenHuangSun2024}; see their paper for the related literature.

\citet{TangWuZhang2023} observed that a guarantee holding for every
fixed realized graph transfers by averaging to stochastic instances,
even under arbitrarily correlated realizations.  The same argument
shows that Theorem~\ref{thm:main} gives a \(0.698\)-approximation for
edge-weighted stochastic query--commit matching on bipartite graphs
against the expected omniscient optimum, while using neither the
probabilities \(p_e\) nor adaptive probing.

\paragraph{Online bipartite matching.}
The randomized primal--dual analysis, also known as the gain sharing framework, was developed by \citet{DevanurJainKleinberg2013}. It provides a unified analysis of the optimal \(1-1/e\) competitive ratio for online bipartite matching on unweighted~\cite{KarpVaziraniVazirani1990} and vertex-weighted graphs~\cite{AggarwalGKM11} under adversarial arrivals.

Random arrivals permit stronger guarantees.  In the unweighted setting,
\citet{KarandeMehtaTripathi2011} obtained \(0.653\), and \citet{MahdianYan2011} obtained \(0.696\). 
For vertex-weighted matching, the primal--dual analyses of \citep{HuangTangWuZhang2019} and \citep{JinWilliamson2022} were further sharpened by \citet{PengTang2025}, who obtained a \(0.686\) guarantee.
Our online consequence belongs to this latter vertex-weighted random-arrival
setting: Section~\ref{sec:online} connects the standard online rule
directly to Harmonic Ranking through an arrival-order coupling.

\paragraph{Ranking on general graphs.}
The \textsc{Ranking} algorithm extends naturally to general, non-bipartite
graphs. A sequence of works has progressively improved its competitive
ratio to \(0.560\)~\citep{ChanChenWuZhao2018,
DerakhshanRoghaniSaneianYu2026, DerakhshanYu2026}.
In a different direction, \citet{HuangKangTangWuZhangZhu2020} adapted
\textsc{Ranking} to the fully online matching model, where vertices depart
in an adversarial order and matching decisions are made upon departure,
and established a \(0.521\) competitive ratio on general graphs. This
guarantee was subsequently improved to \(0.539\) by
\citet{DerakhshanYu2026}.

\paragraph{Randomized greedy rules beyond \textsc{Ranking}.}
\citet{DyerFrieze1991} studied the uniformly random-edge greedy rule,
whose worst-case guarantee on general graphs does not exceed \(1/2\).
\citet{AronsonDyerFriezeSuen1995} introduced
\textsc{Modified Randomized Greedy} (\textsc{MRG}), which samples a
uniformly random decision order together with independent uniformly
random preference orders, and proved a guarantee strictly above
\(1/2\).  \citet{TangWuZhang2023} analyzed the weaker
\textsc{Random Decision Order} (\textsc{RDO}) rule, which randomizes
only the decision order while allowing arbitrary fixed preference
orders.  They obtained guarantees of \(0.639\) on bipartite graphs
and \(0.531\) on general graphs; the latter immediately gives the same
\(0.531\) guarantee for \textsc{MRG}.

\subsection{Roadmap}

Section~\ref{sec:model} derives \(\HR\) from gain sharing and mutual proposals. The score-balanced characterization explains the design of the algorithm but is not used in the competitive-ratio proof.
Section~\ref{sec:structure} establishes the cutoff and alternating-path insertion lemmas and highlights the absence of cutoff-curve monotonicity.
Section~\ref{sec:game} develops the two-curve variational game.
Section~\ref{sec:finite} then restricts to uniform-grid step functions, derives an exact minimum cut for the finite cellwise game, and writes its strongly dual maximum-flow program.
Section~\ref{sec:certificate} constructs a \(240\)-step score and verifies the \(0.698\) guarantee by reproducible exact-integer computation.  The proof of Theorem~\ref{thm:main} is then complete. 
Section~\ref{sec:online} derives Corollary~\ref{cor:online} by a direct
coupling with Online Harmonic Ranking.
Section~\ref{sec:correspondences} states an independent unweighted-grid correspondence, whose details are moved to Appendix~\ref{app:grid-correspondence}. 
Finally, Section~\ref{sec:discussion} discusses open questions.

\section{Score-Balanced Sharing and Harmonic Ranking}\label{sec:model}

We first formalize the model and the primal-dual gain-sharing framework, then characterize the score-balanced rule and establish the equivalence between Mutual Proposals and Harmonic Ranking.

\subsection{Model and notation}

An edge-weighted oblivious instance is a tuple
\[
 G=(L,R,w,E),
 \qquad
 w=(w_{uv})_{uv\in L\times R}\in\R_{\ge0}^{L\times R},
 \qquad
 E\subseteq L\times R,
\]
where \(L\) and \(R\) are disjoint finite vertex sets. 
The instance \(G\) is fixed before the algorithm samples its private
randomness.
The algorithm is given \(L,R,w\), but not \(E\); probing \(uv\) reveals whether \(uv\in E\). 
When a probe reveals an edge and both endpoints are free, the pair must be
added to the matching. The reward is the total weight selected.

Define
\[
 \OPT(G)
 :=
 \max\left\{
   \sum_{uv\in M}w_{uv}:
   M\subseteq E\text{ is a matching}
 \right\},
\]
and let \(\ALG(G)\) denote the algorithm's random reward.  All
probabilities and expectations are over the algorithm's private
randomness with \(G\) fixed.  When the instance is fixed, we abbreviate
\(\ALG:=\ALG(G)\) and \(\OPT:=\OPT(G)\).  Throughout, ``pair'' means a
potential pair in \(L\times R\), whereas ``edge'' means a realized
pair in \(E\).





\subsection{Primal--dual gain sharing}

The standard fractional matching relaxation and its fractional
vertex-cover dual are
\begin{equation}\label{eq:matching-LP-pair}
\begin{array}{lll@{\qquad\qquad}lll}
 \text{\rm(P)}&
 \displaystyle\max&\displaystyle\sum_{uv\in E}w_{uv}x_{uv}
 &
 \text{\rm(D)}&
 \displaystyle\min&\displaystyle\sum_{z\in L\cup R}\alpha_z\\[4pt]
 &\text{\rm s.t.}&\displaystyle\sum_{e\ni z}x_e\le1
       \quad(z\in L\cup R)
 &&
 \text{\rm s.t.}&\alpha_u+\alpha_v\ge w_{uv}
       \quad(uv\in E),\\
 &&x_e\ge0\quad(e\in E)
 &&&\alpha_z\ge0\quad(z\in L\cup R).
\end{array}
\end{equation}
A gain-sharing rule specifies how the weight of a selected edge is
credited to its endpoints.  If the credits are nonnegative and sum to
the selected weight, then \(\sum_z\alpha_z=\ALG\) in every run.  It is
therefore enough to establish, for every edge $uv \in E$,
\[
  \E[\alpha_u+\alpha_v]\ge \Gamma w_{uv}.
\]
The scaled expectations \(\{\E[\alpha_z]/\Gamma\}_z\) are then dual
feasible, and weak duality gives
\[
  \E[\ALG]=\sum_z\E[\alpha_z]\ge\Gamma\,\OPT.
\]

This is usually presented as an analysis of a given algorithm.  Here
we also use it as an algorithm-design principle: first decide how the
weight of a hypothetical selected pair would be shared, and then let
the two prospective shares determine which pair should be probed.

\subsection{Mutual Proposals}
\label{sec:mutual-proposals}

Fix a measurable gain-sharing rule
\[
 g:[0,1)^2\to (0,1)
 \qquad\text{such that}\qquad
 g(x,y)+g(y,x)=1
 \quad(x,y\in[0,1)).
\]
Every vertex \(z\) independently samples a rank
\(x_z\sim\Unif[0,1)\).  On a potential pair \(uv\), define the two
prospective gains
\begin{equation}\label{eq:general-prospective-gains}
  G_u(uv):=g(x_u,x_v)w_{uv},
  \qquad
  G_v(uv):=g(x_v,x_u)w_{uv}.
\end{equation}
They depend only on the known weight and sampled ranks, not on whether
the pair is realized, and complementarity gives
\(G_u(uv)+G_v(uv)=w_{uv}\).

\begin{algorithmbox}{Mutual Proposals with gain-sharing rule \(g\)}
Initially every vertex is free and every potential pair is unprobed.
Call an unprobed pair with two free endpoints \emph{available}.

At each stage, every free vertex incident to an available pair
proposes along one that maximizes its own prospective gain in
\eqref{eq:general-prospective-gains}.  Resolve all ties using the same
common total order on potential pairs.  If one or more pairs are
proposed by both endpoints, probe the first such pair in that order.
A failed probe marks the pair as probed; a successful probe
matches its endpoints.  Then recompute all proposals.  If available
pairs remain but no pair is proposed by both endpoints, the procedure
is stuck.
\end{algorithmbox}

When a successful probe selects \(uv\), assign the analytical credits
\(\alpha_u:=G_u(uv)\) and \(\alpha_v:=G_v(uv)\).  Thus every selected
weight is split exactly between its endpoints, as required by the
primal--dual argument.

At any stage, orient an arc from each proposing vertex toward the
other endpoint of its proposal.  Every nonisolated vertex has
out-degree one, so each nontrivial component of this proposal graph
contains a directed cycle.  Because the graph is bipartite, such a
cycle is either a mutual two-cycle or has length at least four.  We
call the latter a \emph{long proposal cycle}, and call \(g\)
\emph{long-cycle-free on the sampled domain} if no such cycle can
arise in any finite weighted bipartite residual instance, under any
assignment of ranks in \([0,1)\), and with any common total tie order.
If \(g\) is long-cycle-free, Mutual Proposals is
well-defined: whenever an available pair remains, some component
contains a two-cycle and hence a mutually proposed pair.  Each probe
then either deletes one pair or matches two vertices, so the procedure
terminates after finitely many probes.


\begin{theorem}
\label{thm:cycle-characterization}
Let \(g:[0,1)^2\to(0,1)\) satisfy
\[
 g(x,y)+g(y,x)=1
 \qquad(x,y\in[0,1)).
\]
The following statements are equivalent.
\begin{enumerate}
 \item The rule \(g\) is long-cycle-free on the sampled domain.
 \item There is a function \(h:[0,1)\to\R_{>0}\) such that
 \[
  g(x,y)=\frac{h(y)}{h(x)+h(y)}
  \qquad(x,y\in[0,1)).
 \]
\end{enumerate}
The function \(h\) is unique up to multiplication by a positive
constant. 
\end{theorem}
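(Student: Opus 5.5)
We prove (2)\(\Rightarrow\)(1) by a potential argument and (1)\(\Rightarrow\)(2) by building four-cycle instances that force a multiplicative cocycle identity. Throughout, set \(\rho(x,y):=g(x,y)/g(y,x)\in(0,\infty)\). Complementarity gives \(\rho(y,x)=1/\rho(x,y)\) and \(\rho(x,x)=1\).

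\textbf{(2)\(\Rightarrow\)(1).} Suppose \(g(x,y)=h(y)/(h(x)+h(y))\). Then
\[
 h(x_u)\,G_u(uv)=\frac{h(x_u)h(x_v)}{h(x_u)+h(x_v)}\,w_{uv}=\pi_{uv}.
\]
So each free vertex \(u\), in maximizing its own gain, also maximizes the symmetric quantity \(\pi_{uv}\) over its available pairs. Ties are broken by the common total order, applied identically on both sides. Hence every vertex proposes along the pair that is lexicographically largest under the key (\(\pi\), tie order) among its available pairs.

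Now take a proposal cycle \(z_1\to z_2\to\cdots\to z_k\to z_1\) with \(k\ge4\). Vertex \(z_{i+1}\) prefers the pair \(z_{i+1}z_{i+2}\) to \(z_iz_{i+1}\), and the latter is available to it. So the keys strictly increase around the cycle, which is a contradiction. Thus no long cycle exists.

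\textbf{(1)\(\Rightarrow\)(2).} The goal is to show that \(\rho\) is a coboundary, i.e.
\[
 \rho(x,y)\rho(y,z)=\rho(x,z)\qquad\text{for all }x,y,z.
\]
Given this, fix a base point \(0\) and set \(h(x):=\rho(0,x)\). Then
\[
 \rho(x,y)=\rho(x,0)\rho(0,y)=h(y)/h(x),
\]
which is exactly (2). For uniqueness: if \(h\) and \(h'\) both represent \(g\), then \(h(y)/h(x)=h'(y)/h'(x)\) for all \(x,y\), so \(h'/h\) is constant.

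To obtain the cocycle identity, I would first prove the four-point identity
\[
 \rho(a,b)\rho(c,d)=\rho(a,d)\rho(c,b).
\]
Specializing \(c=b\) gives \(\rho(a,b)=\rho(a,d)\rho(b,b)\)-type relations, and combining with \(\rho(x,x)=1\) yields the cocycle identity. The intended instance is a four-cycle \(u_1v_1u_2v_2\) with \(u_1,u_2\in L\), \(v_1,v_2\in R\), ranks \(a,b,c,d\), and weights \(w_{11},w_{12},w_{21},w_{22}\) chosen so the proposals run \(u_1\to v_1\to u_2\to v_2\to u_1\). The required conditions are:
\begin{itemize}
 \item \(g(a,b)w_{11}>g(a,d)w_{12}\), so \(u_1\) proposes to \(v_1\);
 \item \(g(b,c)w_{21}>g(b,a)w_{11}\), so \(v_1\) proposes to \(u_2\);
 \item \(g(c,d)w_{22}>g(c,b)w_{21}\), so \(u_2\) proposes to \(v_2\);
 \item \(g(d,a)w_{12}>g(d,c)w_{22}\), so \(v_2\) proposes to \(u_1\).
\end{itemize}
Multiplying the four inequalities, the weights cancel. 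Such weights exist (they can be solved sequentially) precisely when
\[
 g(a,b)g(b,c)g(c,d)g(d,a)>g(a,d)g(b,a)g(c,b)g(d,c),
\]
that is, when \(\rho(a,b)\rho(b,c)\rho(c,d)\rho(d,a)>1\). Reversing the orientation of the cycle covers the case where this product is below \(1\). Long-cycle-freeness therefore forces
\[
 \rho(a,b)\rho(b,c)\rho(c,d)\rho(d,a)=1
\]
for all ranks. Setting \(d=a\) and using \(\rho(a,a)=1\) gives \(\rho(a,b)\rho(b,c)\rho(c,a)=1\), which is the cocycle identity.

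\textbf{Main obstacle.} The delicate step is the degenerate case \(d=a\). Ranks may repeat across distinct vertices, since the definition quantifies over any assignment of ranks, so this instance is legitimate. I will also verify that ties cause no trouble, because the inequalities above can be taken strict with an open margin. Finally, I will check that the residual instance may contain only these four pairs, since all other pairs in \(L\times R\) can be given weight \(0\). Weight-\(0\) pairs are never the strict maximizer, but a weight-\(0\) pair might still be proposed by some vertex. Any such vertex is outside the cycle and does not affect it.
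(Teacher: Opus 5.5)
Your proposal is correct and follows the paper's proof essentially step for step. The paper likewise forms the odds ratio and builds a four-vertex cycle whose weights realize a strict long proposal cycle whenever the cyclic odds product differs from \(1\); it picks those weights explicitly through a factor \(Q^{1/4}\) rather than solving for them sequentially. It then specializes to ranks \(x,y,z,x\) to get the cocycle identity, sets \(h(x):=O(x_0,x)\), and proves the converse via the common priority \(\pi\) and the common tie order. The only blemish is your aside that specializing \(c=b\) gives ``\(\rho(a,b)=\rho(a,d)\rho(b,b)\)''; it actually gives \(\rho(a,b)\rho(b,d)=\rho(a,d)\), which is already the cocycle identity, but the \(d=a\) specialization you actually use is exactly the paper's.
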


\begin{proof}
Define the positive odds ratio
\[
 O(x,y):=\frac{g(x,y)}{g(y,x)}.
\]
Consider a four-cycle whose consecutive vertices have ranks
\(x_1,x_2,x_3,x_4\), and let \(w_i\) be the weight of the clockwise
edge from rank \(x_i\) to rank \(x_{i+1}\), with indices modulo four.
The vertices remain distinct when some ranks coincide.  Write
\[
 Q:=
 O(x_1,x_2)O(x_2,x_3)O(x_3,x_4)O(x_4,x_1).
\]
If \(Q>1\), choose the weights so that
\[
 \frac{w_i}{w_{i-1}}
 =Q^{1/4}\frac{g(x_i,x_{i-1})}{g(x_i,x_{i+1})}.
\]
The prescribed ratios multiply to one, and at every vertex
\[
 g(x_i,x_{i+1})w_i
 =Q^{1/4}g(x_i,x_{i-1})w_{i-1}
 >g(x_i,x_{i-1})w_{i-1}
\]
so the proposal graph is a strict clockwise long cycle.  If \(Q<1\),
the reverse orientation gives the same contradiction.  Thus
long-cycle-freeness forces \(Q=1\) for every four ranks.

Assign the consecutive ranks \(x,y,z,x\).  Since \(O(x,x)=1\), the
four-cycle identity becomes
\[
 O(x,y)O(y,z)=O(x,z).
\]
Fix a reference rank \(x_0\) and set \(h(x):=O(x_0,x)\).  The last
identity gives \(O(x,y)=h(y)/h(x)\).  Combining this with
complementarity that $g(x,y)+g(y,x)=1$ yields
\[
 g(x,y)=\frac{h(y)}{h(x)+h(y)}.
\]
Changing \(x_0\) rescales \(h\) by one positive constant.
If \(\widetilde h\) is any other representation, then
\[
 \frac{\widetilde h(y)}{\widetilde h(x)}
 =O(x,y)
 =\frac{h(y)}{h(x)},
\]
so \(\widetilde h=ch\) for some \(c>0\).

Conversely, suppose \(g\) has the form
\( g(x,y) = \frac{h(y)}{h(x)+h(y)}.\)
The local preference score of \(uv\) at \(u\), multiplied by the
vertex-dependent constant \(h(x_u)\), is the common priority
\[
 h(x_u)g(x_u,x_v)w_{uv}
 =\frac{h(x_u)h(x_v)}{h(x_u)+h(x_v)}w_{uv}.
\]
Around a directed proposal cycle, the priority of each outgoing edge
is at least that of the incoming edge.  One strict comparison gives an
impossible strict cyclic inequality.  If all priorities are equal,
each outgoing edge must precede the incoming edge in the common tie
order, again producing an impossible strict cycle.  Hence no long
proposal cycle exists.
\end{proof}

The characterization itself does not orient the rank scale.
Requiring a smaller partner rank to be weakly more attractive makes
\(h\) nonincreasing. 

From now on, we restrict to \emph{score functions} \(h:[0,1)\to\R_{>0}\) that are nonincreasing and right-continuous. We denote by $g_h$ the corresponding \emph{score-balanced} gain sharing rule:
\begin{equation}\label{eq:g-def}
 g_h:[0,1)^2\to(0,1),
 \qquad
 g_h(x,y):=\frac{h(y)}{h(x)+h(y)}.
\end{equation}

Finally, we establish that Mutual Proposals equals Harmonic Ranking.


\begin{lemma}
\label{lem:proposal-priority}
For every score function \(h\) and every rank realization, Mutual Proposals with
gain-sharing rule \(g_h\) is well-defined and produces exactly the
same matching as \(\HR(h)\).
\end{lemma}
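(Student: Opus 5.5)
We argue by induction on the stages of the two procedures, coupling them on the same rank realization \(\boldsymbol x\) and the same common total tie order. The key identity is already present in the converse direction of Theorem~\ref{thm:cycle-characterization}. At a free vertex \(u\), the prospective gain \(G_u(uv)=g_h(x_u,x_v)w_{uv}\) equals \(\pi_{uv}(\boldsymbol x)/h(x_u)\). Since \(h(x_u)>0\) is a constant depending only on \(u\), maximizing \(G_u\) over the available pairs at \(u\) is the same as maximizing \(\pi_{uv}\) over them, with ties broken by the same order. We take the admissible tie-breaking rule of \(\HR\) to be this common total order, so that priority followed by tie order gives one strict total order \(\prec\) on potential pairs.

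Well-definedness follows directly. By Theorem~\ref{thm:cycle-characterization}, the rule \(g_h\) is long-cycle-free, so whenever an available pair exists, a mutually proposed pair exists. Termination was already argued: each probe either deletes a pair or matches two vertices.

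For the equivalence, we show the following invariant. At every stage, the pair probed by Mutual Proposals is the \(\prec\)-first available pair. Let \(e^\ast=uv\) be the \(\prec\)-first available pair. Both \(u\) and \(v\) rank \(e^\ast\) first among their own available pairs, so both propose along it, and \(e^\ast\) is mutually proposed. Conversely, suppose \(e=ab\) is mutually proposed. Then \(e\) is \(\prec\)-first among the available pairs at \(a\) and among those at \(b\). The \emph{main obstacle} is that the paper's rule probes the first mutually proposed pair in the tie order, not in \(\prec\). So we must check that the set of mutually proposed pairs is locally maximal, and that the choice among them does not change the final matching. Two mutually proposed pairs are vertex-disjoint, because each vertex proposes along exactly one pair. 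We plan a commutation argument. Probing one mutually proposed pair \(e\) only removes \(e\), or removes the pairs at its endpoints. Hence every other mutually proposed pair \(e'\) remains available and remains first at both of its endpoints, because each endpoint's available set only shrinks while still containing \(e'\).

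It follows that Mutual Proposals probes, in some order, a family of pairs each of which is \(\prec\)-first at both endpoints at the time of its probe. The global scan of \(\HR\) probes pairs in \(\prec\) order and skips pairs with a matched endpoint, which do not affect the matching. We show that both processes produce the same matching as the deterministic "locally dominant" greedy on the fixed strict order \(\prec\). The argument is the standard one that greedy over a strict total order equals any execution that repeatedly selects locally dominant pairs. Proceed by induction on the number of available pairs. The \(\prec\)-first pair \(e^\ast\) is mutually proposed throughout until Mutual Proposals probes it, since it stays globally first among available pairs. Any probes performed earlier are on vertex-disjoint locally dominant pairs, and by the commutation above they can be swapped past the probe of \(e^\ast\) without changing outcomes. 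So we may assume \(e^\ast\) is probed first in both procedures. Both then face the same residual instance, and the induction hypothesis applies.
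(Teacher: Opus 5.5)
Your proof is correct and uses essentially the paper's argument. Both rest on the common-priority identity $h(x_u)G_u(uv)=\pi_{uv}$ and an exchange argument in which vertex-disjoint probes commute. The only difference is direction: the paper moves each Mutual Proposals probe forward within the global scan, while you move the $\prec$-first available pair to the front of an arbitrary locally dominant execution and induct. One correction: the invariant you announce first (that Mutual Proposals always probes the $\prec$-first available pair) is false, as your own ``main obstacle'' remark shows, so delete it rather than claim to show it.
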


\begin{proof}
For every available \(uv\),
\begin{equation}\label{eq:common-priority-identity}
 h(x_u)G_u(uv)=h(x_v)G_v(uv)=\pi_{uv}.
\end{equation}
Because the multiplying factor is fixed at each vertex, every vertex
orders its incident pairs by the common priority \(\pi\).  Thus the
highest-priority available pair is always proposed by both endpoints,
so Mutual Proposals never gets stuck.

Say that one pair \emph{precedes} another when it has larger numerical
priority, or when the priorities are equal and it is earlier in the tie-breaking
order.  When Mutual Proposals probes a pair \(e\), every
unprobed pair preceding \(e\) is either unavailable or vertex-disjoint
from \(e\); otherwise a shared endpoint would propose the preceding
pair.  Starting from the global scan, we may therefore move \(e\)
ahead of all such pairs: each exchange passes either a no-op or a
vertex-disjoint probe and preserves both acceptances.  Repeating this
exchange after each Mutual Proposals probe transforms the global scan
into the Mutual Proposals sequence.  Hence both select the same matching.
\end{proof}

In the remainder of the paper, we study the  \(\HR\) algorithm. For its analysis, when \(uv\) is selected, assign endpoint gains
\begin{equation}\label{eq:gains}
 \alpha_u
 :=
 \frac{h(x_v)}{h(x_u)+h(x_v)}w_{uv},
 \qquad
 \alpha_v
 :=
 \frac{h(x_u)}{h(x_u)+h(x_v)}w_{uv}.
\end{equation}
Unmatched vertices receive gain zero.  Then
\begin{equation}\label{eq:gain-identities}
\begin{aligned}
 \alpha_u+\alpha_v&=w_{uv},&
 \pi_{uv}&=h(x_u)\alpha_u=h(x_v)\alpha_v,\\
 \ALG&=\sum_{z\in L\cup R}\alpha_z.&&
\end{aligned}
\end{equation}

\subsection{Tie-breaking}
\label{sec:tie-breaking}
The mutual proposals algorithm resolves all ties using a single common
total order on potential pairs. This global consistency is crucial for
preventing long cycles: without such a requirement, a cycle of arbitrary
length could arise when all involved pairs have identical priorities.

However, we consider the  \(\HR\) algorithm and allow a broader class of tie-breaking rules. Specifically, we call a tie-breaking rule \emph{admissible} if the
resulting total scan order is a measurable function of the vertex ranks
and satisfies the following local monotonicity property: decreasing the
rank of a vertex \(z\) leaves the relative order of all pairs not
incident to \(z\) unchanged, and cannot move any pair incident to \(z\)
later relative to any pair not incident to \(z\).

As long as this property is satisfied, all structural properties
established in the next section continue to hold. The common total order
on potential pairs used by the mutual proposals algorithm is admissible,
as are various rank-dependent tie-breaking rules. In particular, the
tie-breaking convention introduced for the online specialization in
Section~\ref{sec:online} is admissible. Consequently, all competitive
guarantees proved for Harmonic Ranking below extend to the online setting.

\section{Cutoffs and alternating paths}\label{sec:structure}

The cutoff and vertex-insertion arguments in this section are standard ingredients in analyses of unweighted and vertex-weighted \textsc{Ranking}~\citep{MahdianYan2011,HuangTangWuZhang2019,JinWilliamson2022,PengTang2025}.  
The challenge in the edge-weighted setting is not that rank-perturbation techniques cease to apply, but that the additional structure introduced by edge weights limits how finely we can classify algorithmic outcomes while preserving monotonicity under a one-rank perturbation.

In classical online bipartite matching, condition on the arrival order
and on every offline rank except that of the offline endpoint \(u\) of
a target edge \(uv\).  As this last rank ranges over \([0,1)\), the
three outcomes---\(v\) is strictly earlier, \(uv\) is selected, and
\(u\) is strictly earlier---occupy three consecutive, possibly empty
intervals, up to boundary conventions.  The same offline-rank
trichotomy underlies the vertex-weighted analyses.  It need not
survive, however, when the varied coordinate is the arrival timestamp
of the online endpoint.  In the conventional vertex-weighted
formulation, this loss is reflected in the fact that the upper boundary
\(a\), viewed as a function of the online timestamp, need not be
nondecreasing.

The robust classification is necessarily coarser.  In the notation below, after fixing \(x_u\) and varying \(y=x_v\), we distinguish whether \(v\) is matched no later than \(u\), or whether \(u\) is strictly earlier than \(v\), with respect to the probing order.
The first class is downward closed in \(y\), even though runs
in which \(v\) is strictly earlier and runs in which \(uv\) itself is
selected may be interleaved within it.  Symmetrically, after fixing
\(x_v\) and varying \(x_u\), the class in which \(u\) is no later may
interleave the outcomes ``\(u\) is strictly earlier'' and ``\(uv\) is
selected.''  This binary classification restores a cutoff in every
one-dimensional section, but it does not force either cutoff curve to
be monotone across sections.

Figure~\ref{fig:threshold-monotonicity} summarizes the structural
distinction among the three matching models.  Fix a realized target
edge \(uv\) and all other ranks.  Informally, the rank square splits
into the region \(U\) where \(u\) is strictly earlier, the region \(V\)
where \(v\) is strictly earlier, and the region \(S\) where \(uv\) is
selected.  Every vertical section of \(U\) and every horizontal
section of \(V\) is an interval.  What differs across the models is
whether their boundary curves are nondecreasing across sections.

\begin{figure}[tbp]
\centering
\begin{tikzpicture}[
  x=1.18cm,
  y=1.18cm,
  font=\footnotesize,
  line cap=round,
  line join=round,
  axis/.style={
    ->,
    >=stealth,
    draw=black!68,
    line width=.56pt
  },
  upper curve/.style={
    draw=thresholdblue,
    line width=1.20pt
  },
  lower curve/.style={
    draw=thresholdred,
    line width=1.20pt,
  },
  region label/.style={
    font=\small
  },
  panel title/.style={
    font=\small,
    align=center,
    text=black!88
  },
  panel note/.style={
    font=\footnotesize,
    align=center,
    text=black!58
  },
  legend label/.style={
    font=\small,
    anchor=west,
    text=black!82
  }
]

\begin{scope}
  \path[fill=thresholdblue!11]
    plot[smooth,tension=.55] coordinates {
      (0,1.29) (.55,1.40) (1.15,1.55) (1.70,1.69) (2.20,1.82)
    }
    -- (2.20,2.20) -- (0,2.20) -- cycle;
  \path[fill=thresholdred!11]
    plot[smooth,tension=.55] coordinates {
      (.08,0) (.48,.24) (1.02,.49) (1.60,.74) (2.20,.98)
    }
    -- (2.20,0) -- cycle;

  \draw[upper curve]
    plot[smooth,tension=.55] coordinates {
      (0,1.29) (.55,1.40) (1.15,1.55) (1.70,1.69) (2.20,1.82)
    };
  \draw[lower curve]
    plot[smooth,tension=.55] coordinates {
      (.08,0) (.48,.24) (1.02,.49) (1.60,.74) (2.20,.98)
    };

  \draw[axis] (0,0) -- (2.35,0) node[right] {$x$};
  \draw[axis] (0,0) -- (0,2.35) node[above left=-1pt] {$y$};

  \node[region label,text=thresholdblue] at (1.08,1.96) {$U$};
  \node[region label,text=black!72] at (1.06,1.18) {$S$};
  \node[region label,text=thresholdred] at (1.82,.27) {$V$};

  \node[panel title] at (1.10,-.40) {\textbf{(a)} unweighted};
  \node[panel note]  at (1.10,-.70) {both monotone};
\end{scope}

\begin{scope}[xshift=3.18cm]
  \path[fill=thresholdblue!11]
    plot[smooth,tension=.55] coordinates {
      (0,1.48) (.48,1.73) (1.00,1.40) (1.53,1.75) (2.20,1.55)
    }
    -- (2.20,2.20) -- (0,2.20) -- cycle;
  \path[fill=thresholdred!11]
    plot[smooth,tension=.55] coordinates {
      (.08,0) (.48,.24) (1.02,.49) (1.60,.74) (2.20,.98)
    }
    -- (2.20,0) -- cycle;

  \draw[upper curve]
    plot[smooth,tension=.55] coordinates {
      (0,1.48) (.48,1.73) (1.00,1.40) (1.53,1.75) (2.20,1.55)
    };
  \draw[lower curve]
    plot[smooth,tension=.55] coordinates {
      (.08,0) (.48,.24) (1.02,.49) (1.60,.74) (2.20,.98)
    };

  \draw[axis] (0,0) -- (2.35,0) node[right] {$x$};
  \draw[axis] (0,0) -- (0,2.35) node[above left=-1pt] {$y$};

  \node[region label,text=thresholdblue] at (1.08,1.98) {$U$};
  \node[region label,text=black!72] at (1.06,1.15) {$S$};
  \node[region label,text=thresholdred] at (1.82,.27) {$V$};

  \node[panel title] at (1.10,-.40) {\textbf{(b)} vertex-weighted};
  \node[panel note]  at (1.10,-.70) {$b$ monotone only};
\end{scope}

\begin{scope}[xshift=6.36cm]
  \path[fill=thresholdblue!11]
    plot[smooth,tension=.55] coordinates {
      (0,1.48) (.48,1.73) (1.00,1.40) (1.53,1.75) (2.20,1.55)
    }
    -- (2.20,2.20) -- (0,2.20) -- cycle;
  \path[fill=thresholdred!11]
    plot[smooth,tension=.42] coordinates {
      (.10,0) (.84,.22) (.48,.44) (1.42,.68) (1.06,.89) (2.20,1.08)
    }
    -- (2.20,0) -- cycle;

  \draw[upper curve]
    plot[smooth,tension=.55] coordinates {
      (0,1.48) (.48,1.73) (1.00,1.40) (1.53,1.75) (2.20,1.55)
    };
  \draw[lower curve]
    plot[smooth,tension=.42] coordinates {
      (.10,0) (.84,.22) (.48,.44) (1.42,.68) (1.06,.89) (2.20,1.08)
    };

  \draw[axis] (0,0) -- (2.35,0) node[right] {$x$};
  \draw[axis] (0,0) -- (0,2.35) node[above left=-1pt] {$y$};

  \node[region label,text=thresholdblue] at (1.08,1.98) {$U$};
  \node[region label,text=black!72] at (1.02,1.18) {$S$};
  \node[region label,text=thresholdred] at (1.82,.27) {$V$};

  \node[panel title] at (1.10,-.40) {\textbf{(c)} edge-weighted};
  \node[panel note]  at (1.10,-.70) {neither guaranteed};
\end{scope}

\draw[upper curve] (1.50,-1.18) -- (2.00,-1.18);
\node[legend label] at (2.10,-1.18) {$y=a(x)$};
\draw[lower curve] (4.62,-1.18) -- (5.12,-1.18);
\node[legend label] at (5.22,-1.18) {$x=b(y)$};

\end{tikzpicture}
\caption{Threshold geometry.  The blue region \(U\) is where \(u\) is
strictly earlier, the red region \(V\) is where \(v\) is strictly
earlier, and \(S\) is where \(uv\) is selected.  Vertical sections of
\(U\) and horizontal sections of \(V\) are intervals.  Whole-curve
monotonicity holds for both boundaries in unweighted
\textsc{Ranking}, only for \(b\) in the vertex-weighted random-arrival
model, and is not guaranteed in the edge-weighted model.  The curves
are schematic.  Example~\ref{ex:both-cutoffs-nonmonotone} gives a
concrete edge-weighted instance in which both boundaries are
nonmonotone.}
\label{fig:threshold-monotonicity}
\end{figure}

\subsection{Cutoffs}
Fix an instance \(G=(L,R,w,E)\), a target edge \(uv\in E\), and the
ranks of all vertices other than the ones explicitly varied.  For a
vertex \(z\), write \(G-z\) for the induced instance obtained by
deleting \(z\) and restricting \(w\) and \(E\).  We compare vertices
by the order in which they become matched in the resulting total scan,
treating an unmatched vertex as later than every matched vertex.
Earlier scan positions have weakly larger priority.  Because the fixed
target edge \(uv\) is realized, at least one of \(u\) and \(v\) is
eventually matched, and they become matched at the same step if and
only if \(uv\) is selected.  Hence the event that \(v\) is matched no
later than \(u\) is the disjoint union of the events that \(v\) is
strictly earlier than \(u\) and that \(uv\) is selected.

Changing one rank can reorder the incident edges of that vertex, so the
usual one-line monotonicity proof for one-sided perturbed greedy does
not apply.  The following weaker statement is exactly what the
two-curve analysis needs.

\begin{lemma}\label{lem:one-sided}
Fix all ranks except \(x_v\).  If \(u\) is matched strictly earlier than \(v\)
when \(x_v=t\), then \(u\) is matched strictly earlier than \(v\) for every
\(x_v=t'>t\).
\end{lemma}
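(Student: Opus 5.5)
The plan is to compare both runs with the run on the instance \(G-v\), using only two facts. First, raising \(x_v\) from \(t\) to \(t'\) weakly decreases \(h(x_v)\), so it weakly decreases the priority of every pair incident to \(v\) and leaves all other priorities unchanged. Second, admissibility, read in the reverse direction (going from \(t'\) down to \(t\) is a decrease of \(v\)'s rank), gives two consequences for the move from \(t\) to \(t'\):
\begin{itemize}
\item the relative order of pairs not incident to \(v\) is the same in both scans;
\item no pair incident to \(v\) moves earlier relative to a pair not incident to \(v\).
\end{itemize}
Throughout, I would use that in a greedy scan the set of free vertices only shrinks along the scan.

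\textbf{Step 1 (the run at rank \(t\)).} Let \(e=uu'\) be the pair that matches \(u\) when \(x_v=t\). Since \(u\) is strictly earlier, \(v\) is free when \(e\) is probed, so no \(v\)-incident pair before \(e\) was accepted. Every such pair was therefore a no-op. Consequently, the run on the pairs before \(e\) coincides with the run of the scan on \(G-v\), restricted to the same non-\(v\) pairs, and that \(G-v\) prefix run ends by accepting \(e\).

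\textbf{Step 2 (the run at rank \(t'\), the key step).} I will show that at \(t'\) no \(v\)-incident pair is accepted before \(e\). Suppose instead that \(vz\) is the first \(v\)-incident pair accepted at \(t'\) with \(vz\) preceding \(e\).
\begin{itemize}
\item Up to \(vz\), the \(t'\)-run agrees with the \(G-v\) run on non-\(v\) pairs, and \(z\) is free at that moment.
\item Let \(A'\) be the set of non-\(v\) pairs preceding \(vz\) at \(t'\), and \(A\) the corresponding set at \(t\). Because \(vz\) does not move earlier relative to non-\(v\) pairs when the rank rises to \(t'\), at \(t\) it can only be earlier, so \(A\subseteq A'\). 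For the same reason, \(vz\) also precedes \(e\) at \(t\).
\item Along the common \(G-v\) order, \(z\) is still free after the shorter prefix \(A\), by monotonicity of freeness along the scan.
\item Vertex \(v\) is free at \(t\) before \(e\), by Step 1.
\end{itemize}
Since \(vz\) is realized, the \(t\)-run would accept \(vz\) before \(e\). This matches \(v\) no later than \(u\), contradicting the hypothesis.

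\textbf{Step 3 (conclusion).} By Step 2, at \(t'\) every \(v\)-incident pair before \(e\) is a no-op. Hence the \(t'\)-run reproduces the \(G-v\) prefix and accepts \(e\) with \(v\) still free. Any later match of \(v\) (or none) then makes \(u\) strictly earlier than \(v\) at \(t'\).

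The main obstacle is Step 2: transferring the freeness of \(z\) from the \(t'\)-run back to the \(t\)-run. This is exactly where both parts of admissibility (preserved order among non-\(v\) pairs, and \(v\)-pairs only moving later) are needed, together with the realization of \(vz\). Ties at equal priorities need care, and I would handle them purely through the admissible scan order rather than through numerical priorities.
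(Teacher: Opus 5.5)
Your proof is correct and follows essentially the same route as the paper. Both arguments compare the runs with the greedy run on \(G-v\), use admissibility to keep the non-\(v\) order fixed while pushing \(v\)-incident pairs weakly later, and conclude that \(u\)'s matching edge is still accepted while \(v\) is free. Your Step 2 makes explicit, via the prefix inclusion \(A\subseteq A'\) and the freeness of \(z\), the point the paper compresses into the remark that \(v\) cannot be matched before the probe of \(uz\) because its pairs only move later.
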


\begin{proof}
Consider the execution with \(x_v=t\), and suppose that \(u\) is matched
through the edge \(uz\). At the time when \(uz\) is probed, every probed pair
incident to \(v\) is rejected. Consequently, the decisions made up to and
including the probe of \(uz\) are unaffected by the presence of \(v\);
equivalently, running the greedy algorithm on \(G-v\) produces the same
matching up to the probe of \(uz\), and in particular selects \(uz\).

Now increase the rank of \(v\) from \(t\) to \(t'>t\). By the tie-breaking property, every pair incident to \(v\) moves weakly later 
relative to every pair not incident to \(v\), although the relative order
among the \(v\)-incident pairs may change. Therefore, no pair that was
probed before \(uz\) in the original execution is displaced by a
\(v\)-incident pair. Since the execution restricted to \(G-v\) is
unchanged, the edge \(uz\) is still probed while both \(u\) and \(z\) are
unmatched, and hence \(u\) is matched at the same time as before.
Because all pairs incident to \(v\) can only move later, \(v\) cannot be
matched before the probe of \(uz\). Thus \(u\) remains matched strictly
earlier than \(v\), as claimed.
\end{proof}

\begin{corollary}\label{cor:cutoff}
For every fixed \(x=x_u\), there is a cutoff \(a(x)\in[0,1]\) such
that, away from the boundary \(y=a(x)\),
\[
 \begin{cases}
  y>a(x) &\Longrightarrow u\text{ is strictly earlier than }v,\\
  y<a(x) &\Longrightarrow v\text{ is no later than }u.
 \end{cases}
\]
Symmetrically, for every fixed \(y=x_v\), there is \(b(y)\in[0,1]\)
such that \(x>b(y)\) exactly describes the region where \(v\) is
strictly earlier than \(u\).
\end{corollary}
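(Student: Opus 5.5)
The corollary follows from Lemma~\ref{lem:one-sided} by a monotone-set argument. Fix \(x=x_u\) together with all ranks other than \(x_v\), and let
\[
 A(x):=\{t\in[0,1): u\text{ is matched strictly earlier than }v\text{ when }x_v=t\}.
\]
Lemma~\ref{lem:one-sided} says that \(A(x)\) is upward closed in \([0,1)\): if \(t\in A(x)\) and \(t'>t\), then \(t'\in A(x)\). We therefore set \(a(x):=\inf A(x)\), with the convention \(a(x):=1\) when \(A(x)=\emptyset\). Upward closure means \(A(x)\) is an interval of the form \((a(x),1)\) or \([a(x),1)\).

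We read off the two implications. If \(y>a(x)\), then by the definition of the infimum some \(t\in A(x)\) has \(t<y\), and upward closure gives \(y\in A(x)\); so \(u\) is strictly earlier than \(v\). If \(y<a(x)\), then \(y\notin A(x)\). Since \(uv\) is realized, at least one endpoint is matched, and the trichotomy recorded before Lemma~\ref{lem:one-sided} applies: \(u\) strictly earlier, \(v\) strictly earlier, or \(uv\) selected. Hence \(y\notin A(x)\) means \(v\) is no later than \(u\). At \(y=a(x)\) itself the outcome is unspecified, which is exactly the allowed boundary exception.

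For the symmetric statement we use role symmetry. The algorithm, the priority \(\pi_{uv}\), and the admissibility condition on tie-breaking are all symmetric in \(L\) and \(R\), and the proof of Lemma~\ref{lem:one-sided} never used which side \(v\) lies on. Applying the lemma with the roles of \(u\) and \(v\) exchanged shows that, for fixed \(y=x_v\), the set of \(x_u\) for which \(v\) is strictly earlier than \(u\) is upward closed in \(x_u\). Define \(b(y)\) as its infimum, again with \(b(y):=1\) when the set is empty. The region \(x>b(y)\) is then contained in this set, and its complement below the cutoff is exactly where \(u\) is no later than \(v\). The phrase ``exactly describes'' is understood, as in the first part, up to the single boundary point \(x=b(y)\).

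The only real obstacle is to confirm that the symmetric application of Lemma~\ref{lem:one-sided} is legitimate, namely that admissibility is stated symmetrically for raising the rank of any vertex \(z\). It is, since the condition is phrased for an arbitrary vertex \(z\). One also needs ``strictly earlier'' to be well defined when a vertex is unmatched. This holds because of the convention that unmatched vertices come after every matched vertex, and because at least one endpoint of the realized edge \(uv\) is matched.
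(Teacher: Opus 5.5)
Your proof is correct and follows the paper's argument: Lemma~\ref{lem:one-sided} makes the set of \(y\) for which \(u\) is strictly earlier upward closed, its infimum (with the convention \(a(x)=1\) for the empty set) is the cutoff, and exchanging the roles of \(u\) and \(v\) gives \(b(y)\). Your additional remarks only spell out what the paper leaves to ``natural conventions'': reading the complement via the trichotomy, treating ``exactly describes'' up to the boundary point, and noting that admissibility is stated for an arbitrary vertex \(z\).
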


\begin{proof}
For fixed \(x\), Lemma~\ref{lem:one-sided} makes the set of \(y\) for
which \(u\) is strictly earlier than \(v\) upward closed.  Its infimum,
with the natural conventions for the empty and full sets, is the
required \(a(x)\).  Interchanging \(u\) and \(v\) gives \(b(y)\).
\end{proof}

\subsubsection{Monotonicity of Cutoffs}
We emphasize that no monotonicity assumption is imposed on the cutoff
functions \(a\) and \(b\). Indeed, we provide a concrete example showing
that both cutoff curves can be non-monotone.

This stands in contrast to the vertex-weighted setting studied by \citet{HuangTangWuZhang2019}, where \(b\) is shown to be nondecreasing. By symmetry, in the unweighted setting both cutoff functions are therefore necessarily monotone; see Fig.~\ref{fig:threshold-monotonicity}. We return to this monotonicity property in Section~\ref{sec:online}.

\begin{example}
\label{ex:both-cutoffs-nonmonotone}
Let
\[
 h(t)=1-t,
 \qquad
 L=\{u,q\},
 \qquad
 R=\{v,r\},
\]
with target edge \(uv\).  Take
\[
 E=\{uv,ur,qv\},
 \qquad
 (w_{uv},w_{ur},w_{qv},w_{qr})=(1,2,2,0),
 \qquad
 x_q=x_r=\frac34.
\]
Vary
\[
 x=x_u,
 \qquad
 y=x_v,
 \qquad
 A=1-x,
 \qquad
 B=1-y.
\]
The three relevant priorities are
\[
 P:=\pi_{uv}=\frac{AB}{A+B},
 \qquad
 P_u:=\pi_{ur}=\frac{2A}{4A+1},
 \qquad
 P_v:=\pi_{qv}=\frac{2B}{4B+1}.
\]

\begin{figure}[t]
\centering
\begin{tikzpicture}[
  x=5.8cm,
  y=5.8cm,
  font=\footnotesize,
  line cap=round,
  line join=round,
  axis/.style={
    ->,
    >=stealth,
    draw=black!62,
    line width=.55pt
  },
  unit edge/.style={
    draw=black!16,
    line width=.55pt
  },
  a curve/.style={
    draw=thresholdblue,
    line width=1.15pt
  },
  b curve/.style={
    draw=thresholdred,
    line width=1.15pt,
  },
  shared curve/.style={
    draw=sharedviolet,
    line width=1.35pt
  },
  region label/.style={
    font=\small\bfseries
  },
  point label/.style={
    font=\scriptsize,
    inner sep=1pt
  },
  legend label/.style={
    anchor=west,
    font=\scriptsize,
    text=black!68
  }
]

\path[fill=thresholdblue!8]
  plot[domain=0:.25,samples=70,variable=\t]
    ({\t},{(1-2*\t)/(3-4*\t)})
  -- (1,1) -- (0,1) -- cycle;

\path[fill=thresholdred!8]
  plot[domain=0:.25,samples=70,variable=\t]
    ({(1-2*\t)/(3-4*\t)},{\t})
  -- (1,1) -- (1,0) -- cycle;

\path[fill=black!3]
  (0,0) -- ({1/3},0)
  plot[domain=0:.25,samples=70,variable=\t]
    ({(1-2*\t)/(3-4*\t)},{\t})
  plot[domain=.25:0,samples=70,variable=\t]
    ({\t},{(1-2*\t)/(3-4*\t)})
  -- cycle;

\draw[unit edge] (0,1) -- (1,1) -- (1,0);
\draw[axis] (0,0) -- (1.065,0) node[right] {$x$};
\draw[axis] (0,0) -- (0,1.065) node[above] {$y$};
\node[below left=2pt,text=black!62] at (0,0) {$0$};
\node[below=2pt,text=black!62] at (1,0) {$1$};
\node[left=2pt,text=black!62] at (0,1) {$1$};

\draw[a curve]
  plot[domain=0:.25,samples=70,variable=\t]
    ({\t},{(1-2*\t)/(3-4*\t)});
\draw[b curve]
  plot[domain=0:.25,samples=70,variable=\t]
    ({(1-2*\t)/(3-4*\t)},{\t});

\draw[shared curve] (.25,.25) -- (1,1);

\fill[thresholdblue] (0,{1/3}) circle (1.35pt);
\fill[thresholdred] ({1/3},0) circle (1.35pt);
\fill[sharedviolet] (.25,.25) circle (1.55pt);
\fill[white] (1,1) circle (1.8pt);
\draw[sharedviolet,line width=.6pt] (1,1) circle (1.8pt);

\node[point label,text=thresholdblue,anchor=east]
  at (-.018,{1/3}) {$\left(0,\frac13\right)$};
\node[point label,text=thresholdred,anchor=north]
  at ({1/3},-.022) {$\left(\frac13,0\right)$};
\node[point label,anchor=north east,text=black!70]
  at (.235,.235) {$\left(\frac14,\frac14\right)$};

\node[region label,text=thresholdblue] at (.23,.66) {$U$};
\node[region label,text=thresholdred] at (.66,.23) {$V$};
\node[region label,text=black!68] at (.105,.105) {$S$};

\node[point label,text=sharedviolet,anchor=west]
  (sharednote) at (.50,.84) {$a(t)=b(t)=t$};
\draw[sharedviolet!72,line width=.45pt]
  (sharednote.south west) -- (.64,.64);

\draw[a curve] (-.03,-.165) -- (.07,-.165);
\node[legend label] at (.10,-.165) {$y=a(x)$};
\draw[b curve] (.42,-.165) -- (.52,-.165);
\node[legend label] at (.55,-.165) {$x=b(y)$};
\draw[shared curve] (.86,-.165) -- (.96,-.165);
\node[legend label] at (.99,-.165) {shared};

\end{tikzpicture}
\caption{Exact cutoff curves for
Example~\ref{ex:both-cutoffs-nonmonotone}.  The blue boundary is
\(y=a(x)\), the red boundary is \(x=b(y)\), and the bicolored diagonal
is shared: \(a(t)=b(t)=t\) for \(1/4\le t<1\).  Each cutoff decreases
from \(1/3\) to \(1/4\) before increasing along the diagonal.  The
shading shows the interiors of \(U\) and \(V\); boundary membership is
tie-dependent.  The open circle records that ranks lie in \([0,1)\).}
\label{fig:example-nonmonotone-curves}
\end{figure}

Away from priority ties, the region $U$ occurs when $P_u>\max\{P,P_v\}$,
the region $V$ occurs when $P_v>\max\{P,P_u\}$, and $uv$ is selected when $P>\max\{P_u,P_v\}$. Boundary membership depends on the tie-breaking rule and does not
affect the cutoff values below.
A direct
comparison gives, up to boundary conventions,
\[
 a(t)=b(t)=
 \begin{cases}
  \dfrac{1-2t}{3-4t},&0\le t\le \dfrac14,\\[6pt]
  t,&\dfrac14\le t<1.
 \end{cases}
\]
In particular,
\[
 a(0)=b(0)=\frac13,
 \qquad
 a\!\left(\frac14\right)
 =b\!\left(\frac14\right)
 =\frac14.
\]
Both curves decrease on \([0,1/4]\) and increase afterward, so they
are simultaneously nonmonotone in this instance.
Figure~\ref{fig:example-nonmonotone-curves} plots their exact geometry.


This example is size-minimal in the natural sense.  A nonempty
\(U\)-region requires an alternative realized edge incident to \(u\),
and a nonempty \(V\)-region requires one incident to \(v\).  A simple
bipartite graph therefore needs an additional vertex on each side and
at least the three realized edges \(uv\), \(ur\), and \(qv\).
\end{example}

\subsection{Alternating paths}

The cutoff alone compares the order in which the two endpoints are
matched.  To lower-bound the gain of a specific endpoint, we also need
to know what happens when the opposite vertex and all of its incident
edges are restored.

\begin{lemma}\label{lem:insertion}
Let \(v\in R\), and fix a total order on the edges of a bipartite graph
\(G\).  Compare greedy on \(G\) with greedy on \(G-v\), where the latter
skips the \(v\)-incident edges in the same order.  Whenever a vertex
\(z\in L\) is matched in the reduced run, it has already been matched
in the full run by the point when the reduced run selects \(z\)'s
matching edge.
\end{lemma}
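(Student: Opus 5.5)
We prove the statement by induction along the common edge order, using the classical alternating-path invariant. Write \(e_1,e_2,\dots\) for the fixed order. After processing the first \(k\) edges, let \(M_k\) be the full-run matching and \(M_k'\) the reduced-run matching. The reduced run treats every \(v\)-incident edge as a no-op. The invariant we maintain is that the symmetric difference \(M_k\oplus M_k'\) is either empty or a single alternating path that starts at \(v\) (with a full-run edge) and whose vertices alternate sides. From this we read off two facts. First, every \(L\)-vertex on the path is matched in \(M_k\). Second, every \(L\)-vertex matched in \(M_k'\) is also matched in \(M_k\): off the path the two matchings agree, and on the path each interior \(L\)-vertex is covered by both matchings, while the endpoint of the path on the far side is covered by exactly one of them.

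For the inductive step, \(e_{k+1}=pq\) with \(p\in L\) and \(q\in R\). If \(q=v\), only the full run can act. It adds \(pv\) exactly when \(v\) and \(p\) are free in \(M_k\). In that case \(v\) is free in \(M_k\), so the path is empty (it would otherwise start at \(v\) with an \(M_k\)-edge), and the new difference is the one-edge path \(vp\). If \(q\neq v\), we split into cases by whether \(p\) and \(q\) are free in \(M_k\) and in \(M_k'\).
\begin{itemize}
\item If both runs accept, or both reject, the symmetric difference is unchanged.
\item If exactly one run accepts, then one of \(p,q\) is free in one run but not the other. Hence that vertex is the terminal vertex of the path, since only the far endpoint is covered by exactly one matching.
\item Adding \(pq\) to whichever run accepts then extends the path by one edge at its terminal end. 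It cannot close a cycle, because the other endpoint was free in the accepting run and therefore not on the path in the relevant way.
\end{itemize}
In every case the new difference is again a single alternating path from \(v\).

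The claim then follows. Suppose the reduced run selects \(z\)'s matching edge \(e_{k+1}=zq\). Then \(z\in L\) is free in \(M_k'\). If \(z\) is already matched in \(M_k\), we are done. Otherwise \(z\) is free in both matchings, and whether the full run accepts depends only on \(q\). If \(q\) is free in \(M_k\), the full run also accepts \(zq\) at the same step. If \(q\) is matched in \(M_k\) but free in \(M_k'\), then \(q\) is the terminal vertex of the path and the full run rejects \(zq\). Here \(z\) is matched in \(M_{k+1}'\) but in neither matching before this step, which would break the invariant's second fact. So this subcase has to be excluded, or the statement has to be read as "matched at or before that step."

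The main obstacle is exactly this parity bookkeeping: we must show that the terminal vertex of the path, when it lies in \(R\), is covered by \(M_k'\) and not by \(M_k\), so that the bad subcase cannot occur. The plan is to track the orientation explicitly. The path starts at \(v\) with an \(M\)-edge, so its \(L\)-vertices are reached by \(M\)-edges and its \(R\)-vertices other than \(v\) are reached by \(M'\)-edges. An \(R\)-terminal is therefore \(M'\)-covered and \(M\)-free, which contradicts the assumption that \(q\) is matched in \(M_k\). This rules out the bad subcase. Symmetrically, an \(L\)-terminal is \(M\)-covered and \(M'\)-free, which is exactly the first fact. I would state this oriented form of the invariant up front and verify that each case of the inductive step preserves it.
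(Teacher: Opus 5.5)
Your proof is correct and is essentially the paper's: the paper also couples the two scans along the common order and shows by induction that the discrepant acceptances form a single alternating path \(v-z_0-y_1-z_1-\cdots\), grown only at its exposed end, in which \(L\)-vertices are entered by full-run edges and \(R\)-vertices other than \(v\) by reduced-run edges; this is exactly your oriented invariant. State that oriented form from the outset, since it rules out your ``bad subcase'' immediately; the ``at or before'' reading is needed only for off-path vertices, which both runs match at the same step through the same edge.
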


\begin{proof}
Couple the two scans in their common edge order, skipping
\(v\)-incident edges in the reduced run.  The first discrepant
acceptance, if any, must be an edge \(vz_0\) accepted only in the full
run.  Thereafter an edge off the resulting alternating path sees the
same availability in both runs, so each discrepant acceptance must
extend the path at its exposed endpoint.  The partial matchings prevent
branching.  Moreover, every interior vertex is matched in both runs
and \(v\) is unavailable in the full run, so no accepted edge can
close a cycle.  Induction therefore shows that the discrepant accepted
edges occur in the order
\[
 v-z_0-y_1-z_1-y_2-z_2-\cdots
\]
and alternate between the two runs:
\[
 vz_0
 \succ z_0y_1
 \succ z_1y_1
 \succ z_1y_2
 \succ z_2y_2
 \succ\cdots,
\]
where \(e\succ f\) means that \(e\) is probed first.  Thus each
\(L\)-vertex on the path is matched in the full run before the edge
that matches it in the reduced run is probed.  Every \(L\)-vertex off
the path has the same status in both runs.  If no \(v\)-incident edge
is accepted, the runs coincide.  The claim follows.
\end{proof}

\begin{remark}
The symmetric difference between two arbitrary rank realizations need
not be a path; it may be an alternating even cycle.
Lemma~\ref{lem:insertion} uses the more controlled comparison between a graph
and the graph obtained by deleting one vertex.  That is the comparison
needed below.
\end{remark}

\subsection{An edgewise threshold bound}

Recall the sampled-domain gain-sharing rule \(g_h\) from~\eqref{eq:g-def}.  When a
cutoff takes the unsampled boundary value \(1\), we use the analytical
extension
\begin{equation}\label{eq:g-boundary-extension}
 g_h(t,1):=0,
 \qquad
 g_h(1,t):=1\quad(t<1),
 \qquad
 g_h(1,1):=\frac12.
\end{equation}
These values are never inputs to the algorithm.  Fix \(x=x_u<1\), let
\(a=a(x)\), and run the reference process on \(G-v\).

\begin{lemma}\label{lem:threshold-gain}
If \(uv\) is not selected, then
\begin{equation}\label{eq:threshold-u}
 \alpha_u\ge g_h(x,a(x))w_{uv}.
\end{equation}
If \(uv\) is selected, then \(\alpha_u+\alpha_v=w_{uv}\).
The symmetric statement holds for \(v\) and \(b(y)\).
\end{lemma}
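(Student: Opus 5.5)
The plan is to compare the actual run with the reference run on \(G-v\). The reference run does not depend on \(y=x_v\), and Lemma~\ref{lem:insertion} transfers information from it back to the full instance. The selected case is immediate from \eqref{eq:gain-identities}. If \(a(x)=1\), the extension \eqref{eq:g-boundary-extension} gives \(g_h(x,1)=0\), and \eqref{eq:threshold-u} is trivial since \(\alpha_u\ge0\). So assume \(uv\) is not selected and \(a:=a(x)<1\).

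\textbf{Step 1: \(u\) is matched in the reference run, at a controlled priority.} Pick any \(y'\in(a,1)\). By Corollary~\ref{cor:cutoff}, with \(x_v=y'\) the vertex \(u\) is matched strictly earlier than \(v\), say through an edge \(uz\) with \(z\neq v\). The first paragraph of the proof of Lemma~\ref{lem:one-sided} shows that the run on \(G-v\) coincides with the full run up to and including the probe of \(uz\). Hence \(u\) is matched in the reference run through \(uz\), and the reference run does not depend on \(y'\). Write \(p:=\pi_{uz}\) for this reference priority.

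In the full run with \(x_v=y'\), the pair \(uv\) is probed after \(uz\), since otherwise \(uv\) would be selected or \(v\) would be matched no later than \(u\). Earlier scan positions have weakly larger priority, so
\[
p\ge \pi_{uv}(x,y')=h(x)\,g_h(x,y')\,w_{uv}.
\]
Now let \(y'\downarrow a\). Because \(h\) is nonincreasing and right-continuous, \(h(y')\to h(a)\). The map \(t\mapsto h(x)h(t)/(h(x)+h(t))\) is continuous in \(h(t)\), so
\[
p\ge h(x)\,g_h(x,a)\,w_{uv}.
\]

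\textbf{Step 2: transfer to the actual run.} Now take the actual rank \(y=x_v\). The scan order of the full instance restricted to pairs not incident to \(v\) is the reference scan order, because admissibility leaves the relative order of pairs not incident to \(v\) unchanged. By Lemma~\ref{lem:insertion}, \(u\) is already matched in the actual full run by the time \(uz\) is probed. Its matching edge \(e\) is therefore probed no later than \(uz\), and so \(\pi_e\ge p\).

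This also rules out the case in which \(u\) ends up unmatched with \(\alpha_u=0\) while \(a<1\). Since \(uv\) is not selected, \(e\neq uv\), and \(\alpha_u\) is the gain \(u\) receives from \(e\). By \eqref{eq:gain-identities}, \(h(x)\alpha_u=\pi_e\ge p\ge h(x)g_h(x,a)w_{uv}\). Dividing by \(h(x)>0\) gives \eqref{eq:threshold-u}. The statement for \(v\) and \(b(y)\) follows by exchanging the roles of \(L\) and \(R\).

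\textbf{Main obstacle.} The delicate step is Step 1: extracting the inequality at the boundary value \(a\) itself, where membership is tie-dependent. This is exactly why I would approach \(a\) from above and use right-continuity of \(h\), rather than evaluating at \(y=a\) directly. I would also double-check that the ordering comparisons hold with ties, namely that ``probed no later'' always yields a weak priority inequality under an admissible tie rule.
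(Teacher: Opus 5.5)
Your proposal is correct and follows the paper's proof essentially step for step. You fix the reference run on \(G-v\), which is independent of \(x_v\) by admissibility. You show \(\pi_{uz}\ge\pi_{uv}(x,y')\) for every \(y'>a\), since otherwise \(uv\) would be probed while both endpoints are free, and pass to \(y'\downarrow a\) using right-continuity of \(h\). You then transfer the bound to the actual run via Lemma~\ref{lem:insertion} and the identity \(\pi_e=h(x)\alpha_u\). Your extra observations, that \(u\) cannot end up unmatched when \(a<1\) and that an earlier scan position gives weakly larger priority under any admissible tie rule, are correct and only spell out steps the paper leaves implicit.
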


\begin{proof}
If \(a=1\), the right-hand side of~\eqref{eq:threshold-u} is zero, so
assume \(a<1\).

Fix \(y>a\).  Then \(u\) is matched strictly before \(v\) through a
non-\(v\) edge.  Up to its selection, every probed \(v\)-incident edge
has changed no state, so the same edge matches \(u\) in the reference
run on \(G-v\).  The reference edge, its numerical priority \(P\), and
the relative order of all non-\(v\) pairs are independent of \(y\) under the tie-breaking
convention.  Necessarily
\(P\ge\pi_{uv}(x,y)\) for every \(y>a\);
otherwise \(uv\) would be probed while both endpoints were free.
Right-continuity as \(y\downarrow a\) gives
\[
 P\ge\pi_{uv}(x,a)
  =h(x)g_h(x,a)w_{uv}.
\]

By Lemma~\ref{lem:insertion}, when the common scan reaches \(u\)'s
reference edge, \(u\) is already matched in \(G\).  Unless \(uv\) is
selected, the priority of the edge matching \(u\) is at least \(P\), so
\[
 \alpha_u\ge\frac{P}{h(x)}
 \ge g_h(x,a)w_{uv}.
\]
The statement for \(v\) is symmetric.
\end{proof}

\section{The two-curve variational game}\label{sec:game}

Continue conditioning on all ranks except \(x=x_u\) and \(y=x_v\).
Define
\begin{align}
 U(x,y)&:=\one\{y>a(x)\},\label{eq:U}\\
 \widetilde V(x,y)&:=\one\{x>b(y)\},\label{eq:V}\\
 S(x,y)&:=(1-U(x,y))(1-\widetilde V(x,y)).\label{eq:S}
\end{align}
The first two events mean, respectively, that \(u\) is strictly earlier
and that \(v\) is strictly earlier.  They are disjoint.  Since \(uv\)
is a realized edge, if neither endpoint is earlier then \(uv\) is
selected.  Consequently, almost everywhere,
\begin{equation}\label{eq:partition}
 U(x,y)\widetilde V(x,y)=0,
 \qquad
 U(x,y)+\widetilde V(x,y)+S(x,y)=1.
\end{equation}

Lemma~\ref{lem:threshold-gain} gives
\begin{align}
 \frac{\alpha_u}{w_{uv}}
 &\ge g_h(x,a(x))\bigl(U(x,y)+\widetilde V(x,y)\bigr)
       +g_h(x,y)S(x,y),\label{eq:alpha-u-pointwise}\\
 \frac{\alpha_v}{w_{uv}}
 &\ge g_h(y,b(y))\bigl(U(x,y)+\widetilde V(x,y)\bigr)
       +g_h(y,x)S(x,y).\label{eq:alpha-v-pointwise}
\end{align}
Consequently,
\begin{equation}\label{eq:pointwise}
 \frac{\alpha_u+\alpha_v}{w_{uv}}
 \ge
 S(x,y)+(U(x,y)+\widetilde V(x,y))(g_h(x,a(x))+g_h(y,b(y))).
\end{equation}

\begin{definition}\label{def:threshold-pairs}
Let \(\mathcal T\) be the measurable pairs \(a,b:[0,1]\to[0,1]\)
whose indicators~\eqref{eq:U}--\eqref{eq:V} satisfy
\(U(x,y)\widetilde V(x,y)=0\) almost everywhere. 
\end{definition}

For every gain-sharing function $g$ and pair $(a,b) \in \mathcal{T}$, define
\begin{equation}\label{eq:Gamma}
 \Gamma(g,a,b)
 :=
 \int_0^1\!\!\int_0^1
 \left[
  S(x,y)+\bigl(U(x,y)+\widetilde V(x,y)\bigr)
  \bigl(g(x,a(x))+g(y,b(y))\bigr)
 \right]\dd x\dd y.
\end{equation}
For a score function \(h\), define
\begin{equation}\label{eq:fixed-score-value}
 \gamma_{\mathrm{EW}}(h)
 :=
 \inf_{(a,b)\in\mathcal T}\Gamma(g_h,a,b).
\end{equation}

\begin{proposition}
\label{prop:fixed-score-guarantee}
For every score function \(h\) and every edge-weighted
oblivious instance \(G\),
\[
 \E[\ALG(G)]\ge\gamma_{\mathrm{EW}}(h)\,\OPT(G).
\]
\end{proposition}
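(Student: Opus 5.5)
The plan is to combine the primal--dual reduction of Section~\ref{sec:model} with the pointwise bound~\eqref{eq:pointwise}. By~\eqref{eq:gain-identities}, \(\ALG=\sum_z\alpha_z\) in every run, and the credits are nonnegative. So by weak duality it suffices to show, for every realized edge \(uv\in E\),
\[
 \E[\alpha_u+\alpha_v]\ge\gamma_{\mathrm{EW}}(h)\,w_{uv}.
\]
Given this, the vector \(\E[\alpha_z]/\gamma_{\mathrm{EW}}(h)\) is feasible for (D), and therefore \(\E[\ALG]\ge\gamma_{\mathrm{EW}}(h)\OPT\). If \(\gamma_{\mathrm{EW}}(h)\le 0\) the claim is trivial. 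Edges with \(w_{uv}=0\) are also trivial.

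Fix \(uv\in E\) and condition on the ranks \(\boldsymbol x_{-uv}\) of all other vertices. Corollary~\ref{cor:cutoff} then supplies cutoff functions \(a,b:[0,1)\to[0,1]\), which we extend arbitrarily at the point \(1\). Two things must be checked about these cutoffs.
\begin{enumerate}
 \item \emph{Measurability.} The cutoffs \(a,b\) should be measurable in \(x\) and \(y\), respectively. The tie-breaking rule is a measurable function of the ranks, and \(a(x)\) is the infimum of an upward-closed section of a measurable subset of \([0,1)^2\). It can therefore be written as \(a(x)=\int_0^1(1-\one\{(x,y)\in\text{``}u\text{ strictly earlier''}\})\dd y\), which is measurable by Fubini.
 \item \emph{Membership in \(\mathcal T\).} The regions \(\{U=1\}\) and \(\{\widetilde V=1\}\) coincide almost everywhere with the events ``\(u\) strictly earlier'' and ``\(v\) strictly earlier.'' The two sets differ only on the graph of \(a\) (respectively \(b\)), and such a graph has planar measure zero. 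Since these two events are disjoint, \(U\widetilde V=0\) almost everywhere, so \((a,b)\in\mathcal T\).
\end{enumerate}

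Next, on the complement of these null sets, \(S\) is exactly the event that \(uv\) is selected. Lemma~\ref{lem:threshold-gain} then gives~\eqref{eq:alpha-u-pointwise} and~\eqref{eq:alpha-v-pointwise}. On \(S\) we use \(\alpha_u+\alpha_v=w_{uv}\) directly, rather than the individual bounds, to obtain~\eqref{eq:pointwise} almost everywhere. The case \(x=1\) or \(a(x)=1\) is handled by the boundary extension~\eqref{eq:g-boundary-extension}, where the bound is vacuous and still valid.

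Integrating~\eqref{eq:pointwise} over \((x,y)\sim\Unif[0,1)^2\) gives
\[
 \E[\alpha_u+\alpha_v\mid\boldsymbol x_{-uv}]\ge\Gamma(g_h,a,b)\,w_{uv}\ge\gamma_{\mathrm{EW}}(h)\,w_{uv}.
\]
Averaging over \(\boldsymbol x_{-uv}\) via the tower property finishes the argument. Integrability is not an issue, because the integrand is bounded in \([0,2]\).

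The main obstacle I expect is the measure-theoretic bookkeeping. We need joint measurability of the event sets in \((x,y)\), so that the conditional expectation equals the double integral. We also need the boundary convention in Corollary~\ref{cor:cutoff}, which is only stated ``away from the boundary,'' to cost only a null set. I would handle the boundary by noting that each vertical section of \(\{y=a(x)\}\) is a single point. For measurability, I would first attempt the argument that the algorithm's output is a finite-valued measurable function of the rank vector; this holds because each comparison of priorities \(\pi_{uv}\) is measurable and the tie-breaking rule is measurable by admissibility.
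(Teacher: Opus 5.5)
Your proposal is correct and follows the paper's proof. You condition on the other ranks, observe that the cutoff pair lies in \(\mathcal T\), integrate the pointwise bound~\eqref{eq:pointwise} to get \(\Gamma(g_h,a,b)\ge\gamma_{\mathrm{EW}}(h)\), and then average. The differences are only cosmetic: you conclude via dual feasibility and weak duality where the paper sums directly over a maximum-weight matching \(M^\star\), which avoids dividing by \(\gamma_{\mathrm{EW}}(h)\), and you spell out the measurability and null-set bookkeeping that the paper leaves implicit.
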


\begin{proof}
Fix a positive-weight realized edge \(uv\), and condition on all ranks
other than \(x=x_u\) and \(y=x_v\).  The resulting cutoff pair
\((a,b)\) belongs to \(\mathcal T\) by
Definition~\ref{def:threshold-pairs}.  Integrating~\eqref{eq:pointwise} gives
\[
 \frac{
  \E_{x,y}[\alpha_u+\alpha_v
   \mid\boldsymbol x_{-\{u,v\}}]
 }{w_{uv}}
 \ge
 \Gamma(g_h,a,b)
 \ge
 \inf_{(a,b)\in\mathcal T}\Gamma(g_h,a,b)
 =
 \gamma_{\mathrm{EW}}(h).
\]
After averaging over the conditioned ranks, sum this bound over a
maximum-weight matching \(M^\star\):
\[
 \gamma_{\mathrm{EW}}(h)\OPT(G)
 \le
 \sum_{uv\in M^\star}\E[\alpha_u+\alpha_v]
 \le
 \E\!\left[\sum_z\alpha_z\right]
 =
 \E[\ALG(G)].
\]
Zero-weight edges require no separate argument.
\end{proof}

The factor-revealing value of the score-balanced family is therefore
\begin{equation}\label{eq:maxmin}
 \gamma_{\mathrm{EW}}
 :=
 \sup_h\gamma_{\mathrm{EW}}(h)
 =
 \sup_h\inf_{(a,b)\in\mathcal T}\Gamma(g_h,a,b),
\end{equation}
where \(g_h\) is defined on sampled ranks in~\eqref{eq:g-def} and has
the boundary extension~\eqref{eq:g-boundary-extension}.
Equation~\eqref{eq:maxmin} is a max--min problem over one score function
and two arbitrary threshold curves.  The inner problem is infinite
dimensional.  Actual realizable cutoff pairs form a subset of
\(\mathcal T\); the infimum over \(\mathcal T\) is therefore a
relaxation and gives a certified lower bound.  We do not claim that
every pair in \(\mathcal T\) is realizable by an instance.  The
supremum ranges over the admissible nonincreasing, right-continuous
score functions in Section~\ref{sec:model}, including step functions.
From Section~\ref{sec:finite} onward we restrict to score functions
that are constant on a uniform grid.  The resulting cut and flow
problems are finite and satisfy strong duality.

\section{The finite-grid variational game and exact cut--flow duality}
\label{sec:finite}

The two-curve game in Section~\ref{sec:game} is stated on the full rank
square.  From this point onward we work at a fixed grid resolution.
Both the score and the cut indicators are represented cell by cell, so
every optimization problem below is finite-dimensional.
Proposition~\ref{prop:step-flow-bridge} shows that, for every grid size
and every height vector satisfying~\eqref{eq:finite-scores}, the
maximum-flow value lower-bounds \(\gamma_{\mathrm{EW}}(h^{(m)})\), and
hence \(\gamma_{\mathrm{EW}}\).  Together with the fixed-score
variational guarantee, this converts every finite flow certificate
directly into an algorithmic guarantee.

The indicator lifting turns the exponentially many threshold pairs
into the cuts of one polynomial-size directed network.  The associated
flow problem is the ordinary maximum-flow problem on that same
network.  Finite max-flow/min-cut duality therefore gives equality and
attainment on both sides.  We first derive the network for an arbitrary
step score; Section~\ref{sec:certificate} then supplies an explicit
\(240\)-step score and a reproducible exact-integer verification.

\subsection{Step scores and discrete integration by parts}

The network below represents a finite factor-revealing relaxation for
a step score.  Fix \(m\ge1\), write
\([m]:=\{1,\ldots,m\}\), and partition
\([0,1)\) into cells
\[
 J_i:=\left[\frac{i-1}{m},\frac{i}{m}\right)
 \qquad(i\in[m]).
\]
Choose step heights
\begin{equation}\label{eq:finite-scores}
 h_1\ge h_2\ge\cdots\ge h_m>0,
 \qquad h_{m+1}:=0,
\end{equation}
and define
\[
 h^{(m)}(x):=h_i\quad(x\in J_i),
 \qquad h^{(m)}(1):=0.
\]
The value \(h_{m+1}=0\) is an unsampled boundary convention used only
for bookkeeping.

For this choice of \(h\), the step-score guarantee is certified by an
ordinary finite maximum-flow/minimum-cut calculation.  No limiting
argument or numerical quadrature is needed.

For \(i,j\in[m]\), define
\begin{equation}\label{eq:finite-g}
 g_{ij}:=\frac{h_j}{h_i+h_j},
 \qquad
 g_{i,m+1}:=0,
\end{equation}
where the last equality is the discrete counterpart of the zero
endpoint condition \(g(x,1)=0\).
Thus right-continuity gives
\[
 g_{h^{(m)}}\!\left(x,\frac{a}{m}\right)
 =g_{i,a+1}
 \qquad
 (x\in J_i,\ a\in\{0,\ldots,m\}).
\]
Also define
\begin{align}
 q_{ij}
 &:=g_{ij}-g_{i,j+1},\label{eq:finite-q}\\
 r_{ij}
 &:=(m-j+1)g_{ij}-(m-j)g_{i,j+1}.\label{eq:finite-r}
\end{align}
These are exact finite increments: \(q_{ij}\) records the drop of \(g\) across a cell boundary, and \(r_{ij}\) is the corresponding
discrete integration-by-parts coefficient.
The score order makes \(q_{ij}\ge0\), and
\begin{equation}\label{eq:finite-q-linear}
 g_{ij}=\sum_{k=j}^m q_{ik},
 \qquad
 r_{ij}
 =g_{ij}+(m-j)q_{ij}
 =(m-j+1)q_{ij}+\sum_{k=j+1}^m q_{ik}.
\end{equation}
Moreover, discrete complementarity is the linear identity
\[
 \sum_{k=j}^m q_{ik}+\sum_{k=i}^m q_{jk}=1.
\]
Hence \(r_{ij}\ge0\); every \(q\)-dependent objective-arc capacity is
linear in \(\boldsymbol q=(q_{ij})\), and all capacities are affine in
\(\boldsymbol q\).
In particular, the unsampled bookkeeping height supplies the final
increment
\[
 q_{i,m}=r_{i,m}=g_{i,m}.
\]

Write \(\mathcal Q_m:=\{0,\ldots,m\}^m\).  For
\((\boldsymbol a,\boldsymbol b)\in\mathcal Q_m^2\), define
\[
 U_{ij}:=\one\{j>a_i\},
 \qquad
 \widetilde V_{ij}:=\one\{i>b_j\}.
\]
The admissible cellwise threshold pairs are
\begin{equation}\label{eq:finite-domain}
 \mathcal T_m:=
 \left\{
  (\boldsymbol a,\boldsymbol b)\in\mathcal Q_m^2:
  U_{ij}\widetilde V_{ij}=0\ \text{for all }i,j
 \right\}.
\end{equation}
For \((\boldsymbol a,\boldsymbol b)\in\mathcal T_m\), put
\begin{equation}\label{eq:finite-ABM}
 P_i:=g_{i,a_i+1},
 \qquad
 Q_j:=g_{j,b_j+1},
 \qquad
 S_{ij}:=(1-U_{ij})(1-\widetilde V_{ij}).
\end{equation}
The telescoping identities
\begin{align}
 P_i&=\sum_{j=1}^m q_{ij}U_{ij},\label{eq:finite-A}\\
 (m-a_i)P_i&=\sum_{j=1}^m r_{ij}U_{ij}\label{eq:finite-aA}
\end{align}
follow immediately from~\eqref{eq:finite-q}--\eqref{eq:finite-r};
the analogous identities hold for \(Q,\widetilde V\).

\subsection{The exact cell network}

Define binary labels
\begin{equation}\label{eq:finite-XY}
 X_{ij}:=U_{ij},
 \qquad
 Y_{ij}:=1-\widetilde V_{ij}.
\end{equation}
Every threshold pair satisfies
\begin{align}
 X_{ij}&\le Y_{ij},\label{eq:finite-local}\\
 X_{ij}&\le X_{i,j+1} &&(j<m),\label{eq:finite-X}\\
 Y_{i+1,j}&\le Y_{ij} &&(i<m).\label{eq:finite-Y}
\end{align}
Conversely, every binary pair satisfying
\eqref{eq:finite-local}--\eqref{eq:finite-Y} is induced by some
threshold vectors \(a,b\).  Notice again that the vectors
\(i\mapsto a_i\) and \(j\mapsto b_j\) are not required to be monotone.
Source-side membership will encode label one.  The local implication
leaves exactly three states:
\[
\begin{array}{@{}ccl@{}}
\toprule
(X_{ij},Y_{ij})&(U_{ij},S_{ij},\widetilde V_{ij})&\text{region}\\
\midrule
(1,1)&(1,0,0)&U\\
(0,1)&(0,1,0)&S\\
(0,0)&(0,0,1)&V\\
\bottomrule
\end{array}
\]
The last two implications link these states into a chain within each
row and column, but do not link distinct row cutoffs or distinct
column cutoffs.  Figure~\ref{fig:finite-threshold-network} previews one
representative of every arc family before we list their capacities.

\begin{figure}[tbp]
\centering
\begin{tikzpicture}[
  font=\footnotesize,
  cellnode/.style={circle,draw=black!45,fill=white,minimum size=9mm,
                   inner sep=.6pt},
  focus/.style={cellnode,draw=thresholdblue!75,fill=thresholdblue!5,
                line width=.8pt},
  terminal/.style={circle,draw=black!55,fill=black!4,
                   minimum size=8mm},
  objective/.style={->,>=stealth,draw=thresholdblue,
                    line width=1.05pt,shorten >=1pt,shorten <=1pt},
  hard/.style={->,>=stealth,draw=thresholdred,line width=1.05pt,
               dash pattern=on 4pt off 2.5pt,
               shorten >=1pt,shorten <=1pt},
  lab/.style={fill=white,inner sep=1.5pt,font=\scriptsize},
  panel/.style={draw=black!22,rounded corners=2pt,fill=black!1}
]
  \draw[panel] (-.2,-.4) rectangle (4.25,4.0);
  \draw[panel] (7.0,-.4) rectangle (11.45,4.0);
  \node[font=\small\bfseries] at (2.0,4.35) {\(\mathsf Y\)-grid};
  \node[font=\small\bfseries] at (9.2,4.35) {\(\mathsf X\)-grid};

  \node[terminal] (s) at (-1.05,1.45) {\(s\)};
  \node[focus] (Yij) at (1.6,1.45) {\(\mathsf Y_{ij}\)};
  \node[cellnode] (Yik) at (1.6,3.15) {\(\mathsf Y_{ik}\)};
  \node[cellnode] (Ykj) at (3.35,1.45) {\(\mathsf Y_{kj}\)};
  \node[cellnode] (Yipj) at (.45,.30) {\(\mathsf Y_{i+1,j}\)};

  \node[focus] (Xij) at (8.6,1.45) {\(\mathsf X_{ij}\)};
  \node[cellnode] (Xijp) at (8.6,3.15) {\(\mathsf X_{i,j+1}\)};
  \node[terminal] (t) at (12.25,1.45) {\(t\)};

  \draw[objective] (s) -- node[lab,above] {\(r_{ji}/m^2\)} (Yij);
  \draw[objective] (Yij) to[bend left=8]
       node[lab,above] {\(1/m^2\)} (Xij);
  \draw[objective] (Xij) -- node[lab,above] {\(r_{ij}/m^2\)} (t);
  \draw[objective] (Xij) to[bend right=20]
       node[lab,above,sloped] {\(q_{ij}/m^2\)} (Yik);
  \draw[objective] (Xij) to[bend left=10]
       node[lab,above] {\(q_{jk}/m^2\)} (Ykj);

  \draw[hard] (Xij) to[bend left=18]
       node[lab,below] {\(\infty\)} (Yij);
  \draw[hard] (Xij) -- node[lab,right] {\(\infty\)} (Xijp);
  \draw[hard] (Yipj) -- node[lab,below,sloped] {\(\infty\)} (Yij);

  \draw[objective] (3.25,-.9) -- (4.05,-.9);
  \node[anchor=west,text=black!65] at (4.2,-.9) {objective arc};
  \draw[hard] (7.35,-.9) -- (8.15,-.9);
  \node[anchor=west,text=black!65] at (8.3,-.9) {hard implication};
\end{tikzpicture}
\caption{Representative arcs of the finite threshold network.  The
two boxes are independent copies of the cell grid \([m]^2\).  The two
blue arcs from \(\mathsf X_{ij}\) into the \(\mathsf Y\)-grid stand
for the full \(k\)-indexed fans in~\eqref{eq:finite-network}; the red
arcs enforce the three implications in~\eqref{eq:finite-hard}.}
\label{fig:finite-threshold-network}
\end{figure}

For \((\boldsymbol a,\boldsymbol b)\in\mathcal T_m\), abbreviate
\begin{equation}\label{eq:finite-Gamma}
 \Gamma_m^{\boldsymbol h}(\boldsymbol a,\boldsymbol b)
 :=\frac1{m^2}\sum_{i,j=1}^m
 \left[
  S_{ij}+(U_{ij}+\widetilde V_{ij})(P_i+Q_j)
 \right].
\end{equation}

Create \(2m^2\) nodes
\(\mathsf X_{ij},\mathsf Y_{ij}\), together with source \(s\) and sink
\(t\).  After normalization by the cell area \(m^{-2}\), the objective
arcs are
\begin{equation}\label{eq:finite-network}
\begin{array}{@{}lll@{}}
\toprule
\text{arc} & \text{capacity} & \text{cut contribution}\\
\midrule
s\to\mathsf Y_{ij}
 &r_{ji}/m^2&r_{ji}\widetilde V_{ij}/m^2\\
\mathsf Y_{ij}\to\mathsf X_{ij}
 &1/m^2&S_{ij}/m^2\\
\mathsf X_{ij}\to t
 &r_{ij}/m^2&r_{ij}U_{ij}/m^2\\
\mathsf X_{ij}\to\mathsf Y_{ik}&q_{ij}/m^2
   &q_{ij}U_{ij}\widetilde V_{ik}/m^2\quad(k\in[m])\\
\mathsf X_{ij}\to\mathsf Y_{kj}&q_{jk}/m^2
   &q_{jk}U_{ij}\widetilde V_{kj}/m^2\quad(k\in[m])\\
\bottomrule
\end{array}
\end{equation}
The hard implication arcs are
\begin{equation}\label{eq:finite-hard}
 \mathsf X_{ij}\to\mathsf Y_{ij},\qquad
 \mathsf X_{ij}\to\mathsf X_{i,j+1},\qquad
 \mathsf Y_{i+1,j}\to\mathsf Y_{ij},
\end{equation}
whenever the indices exist, each with capacity \(+\infty\).  Adjacent
row and column arcs suffice, because their transitive closure enforces
all within-section threshold precedences.  We add no cross-coordinate
arcs encoding monotonicity of \(i\mapsto a_i\) or \(j\mapsto b_j\):
neither property is available in the edge-weighted setting.  Thus the
three hard-arc families encode, respectively, exclusivity, the row
cutoff structure, and the column cutoff structure.

A cut of value at most one
always exists, so replacing every \(+\infty\) in
\eqref{eq:finite-hard} by capacity \(2\) leaves the minimum cut
unchanged.  Let \(\mathcal N_m(\boldsymbol h)\) denote this equivalent
ordinary finite-capacity network, and let
\(\MC_m(\boldsymbol h)\) and \(\MF_m(\boldsymbol h)\) denote its
minimum-cut and maximum-flow values.

\begin{proposition}
\label{prop:finite-cut}
For every fixed \(m\)-level step profile,
\[
 \MF_m(\boldsymbol h)=\MC_m(\boldsymbol h)
 =\min_{(\boldsymbol a,\boldsymbol b)\in\mathcal T_m}
   \Gamma_m^{\boldsymbol h}(\boldsymbol a,\boldsymbol b).
\]
\end{proposition}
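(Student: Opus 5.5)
The plan is to prove the two equalities separately. The first, \(\MF_m(\boldsymbol h)=\MC_m(\boldsymbol h)\), is the max-flow/min-cut theorem applied to the finite network \(\mathcal N_m(\boldsymbol h)\). All its capacities are finite and nonnegative: \(q_{ij}\ge0\) by the score order, \(r_{ij}\ge0\) by~\eqref{eq:finite-q-linear}, and hard arcs have capacity \(2\). So only the identity \(\MC_m(\boldsymbol h)=\min_{\mathcal T_m}\Gamma_m^{\boldsymbol h}\) needs work. Identify an \(s\)--\(t\) cut with the labels \(X_{ij},Y_{ij}\in\{0,1\}\), where label one means the node is on the source side. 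Then establish:
\begin{itemize}
\item every cut avoiding the hard arcs comes from a threshold pair in \(\mathcal T_m\);
\item every threshold pair gives such a cut;
\item the cut value equals \(\Gamma_m^{\boldsymbol h}(\boldsymbol a,\boldsymbol b)\).
\end{itemize}

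\emph{Restriction to cuts avoiding hard arcs.} Take \(a_i=b_j=m\) for all \(i,j\). Then \(U\equiv\widetilde V\equiv0\) and \(S\equiv1\), so this threshold pair has value exactly \(1\) (with the correspondence below). Any cut severing a hard arc costs at least \(2\), so a minimum cut severs none. Such a cut therefore satisfies \eqref{eq:finite-local}--\eqref{eq:finite-Y}. As noted before the network, these constraints are equivalent to \(X_{ij}=U_{ij}\) and \(Y_{ij}=1-\widetilde V_{ij}\) for some \((\boldsymbol a,\boldsymbol b)\in\mathcal T_m\):
\begin{itemize}
\item \(a_i\) is the largest \(j\) with \(X_{ij}=0\), using row monotonicity \eqref{eq:finite-X};
\item \(b_j\) is read off from column monotonicity \eqref{eq:finite-Y};
\item exclusivity \(U\widetilde V=0\) is \eqref{eq:finite-local}.
\end{itemize}
Conversely, each pair in \(\mathcal T_m\) gives a cut that cuts no hard arc. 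This yields a bijection.

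\emph{Cut value.} An arc \(p\to p'\) is cut exactly when \(p\) has label one and \(p'\) has label zero, which gives the "cut contribution" column of \eqref{eq:finite-network}. Expand \(m^2\Gamma_m^{\boldsymbol h}\) as
\[
\sum_{ij}S_{ij}+\sum_{ij}U_{ij}P_i+\sum_{ij}\widetilde V_{ij}Q_j+\sum_{ij}U_{ij}Q_j+\sum_{ij}\widetilde V_{ij}P_i ,
\]
and match the five sums to the five arc families as follows.
\begin{enumerate}
\item The \(S\)-term is the \(\mathsf Y_{ij}\to\mathsf X_{ij}\) arcs, since state \((X,Y)=(0,1)\) is exactly \(S\).
\item Using \(\sum_j U_{ij}=m-a_i\) and \eqref{eq:finite-aA}, the second sum equals \(\sum_{ij}r_{ij}U_{ij}\), the \(\mathsf X\to t\) arcs.
\item Symmetrically, with the analogous identity \((m-b_j)Q_j=\sum_i r_{ji}\widetilde V_{ij}\), the third sum is the \(s\to\mathsf Y\) arcs.
\item For the first cross term, apply the column analogue of \eqref{eq:finite-A}, \(Q_j=\sum_k q_{jk}\widetilde V_{kj}\). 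This gives \(\sum_{i,j,k}q_{jk}U_{ij}\widetilde V_{kj}\), the \(\mathsf X_{ij}\to\mathsf Y_{kj}\) fan.
\item For the last term, substitute \(P_i=\sum_k q_{ik}U_{ik}\) and rename \(j\leftrightarrow k\). This gives \(\sum_{i,j,k}q_{ij}U_{ij}\widetilde V_{ik}\), the \(\mathsf X_{ij}\to\mathsf Y_{ik}\) fan.
\end{enumerate}
Dividing by \(m^2\) shows that the cut value equals \(\Gamma_m^{\boldsymbol h}(\boldsymbol a,\boldsymbol b)\). Minimizing over the bijection then gives the claim.

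The main obstacle is this bookkeeping. One must confirm the analogous telescoping identities for \(Q\) and \(\widetilde V\) with the correct index orientation (\(\widetilde V_{kj}=\one\{k>b_j\}\), so the sum runs over \(k>b_j\)). One must also check that the two \(\mathsf X\to\mathsf Y\) fans, which range over all \(k\in[m]\) rather than only \(k\ne j\), reproduce the cross terms exactly, with no double counting and nothing missing. The fan arcs with \(k=j\) go from \(\mathsf X_{ij}\) to \(\mathsf Y_{ij}\) and are cut only when \(U_{ij}\widetilde V_{ij}=1\). On \(\mathcal T_m\) this never happens, so these arcs contribute zero, consistent with the corresponding zero terms \(U_{ij}\widetilde V_{ij}\) in the expansion.
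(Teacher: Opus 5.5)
Your proposal is correct and follows essentially the same route as the paper's proof. You identify cuts with threshold pairs, exclude hard arcs by exhibiting a cheap cut, and match the five arc families to the terms of \(m^2\Gamma_m^{\boldsymbol h}\) via the telescoping identities \eqref{eq:finite-A} and \eqref{eq:finite-aA}. The only cosmetic difference is the witness cut: you use the all-\(S\) cut (\(X\equiv0\), \(Y\equiv1\), value exactly \(1\)), while the paper bounds the source-only and all-source cuts by \(1\) using \(\sum_j r_{ij}=m g_{i1}\).
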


\begin{proof}
The network has \(2m^2+2\) nodes and exactly
\(2m^3+6m^2-2m\) arcs, including both objective and implication
families and counting parallel and zero-capacity arcs separately.
For each \(i\), telescoping gives
\[
 \sum_{j=1}^m r_{ij}=m g_{i1}.
\]
Hence both the source-only cut and the cut with every nonterminal on
the source side have capacity at most one.  A minimum cut crosses no
capacity-\(2\) implication arc, so its labels satisfy
\eqref{eq:finite-local}--\eqref{eq:finite-Y} and correspond to an
admissible threshold pair.  For any such cut, its capacity \(C(X,Y)\)
is
\begin{equation}\label{eq:finite-cut-expanded}
\begin{aligned}
 m^2 C(X,Y)
 =\sum_{i,j=1}^m\biggl[
  &Y_{ij}(1-X_{ij})
   +r_{ij}X_{ij}
   +r_{ji}(1-Y_{ij})\\
  &+\sum_{k=1}^m q_{ij}X_{ij}(1-Y_{ik})
   +\sum_{k=1}^m q_{jk}X_{ij}(1-Y_{kj})
 \biggr].
\end{aligned}
\end{equation}
Substitute \(X_{ij}=U_{ij}\) and
\(Y_{ij}=1-\widetilde V_{ij}\).  The first term is \(S_{ij}\),
\eqref{eq:finite-aA} converts the two \(r\)-terms into
\(\sum_{ij}U_{ij}P_i\) and \(\sum_{ij}\widetilde V_{ij}Q_j\); and
\eqref{eq:finite-A} converts the two interaction families into
\(\sum_{ij}\widetilde V_{ij}P_i\) and \(\sum_{ij}U_{ij}Q_j\).
Thus~\eqref{eq:finite-cut-expanded} equals
\(m^2\Gamma_m^{\boldsymbol h}(\boldsymbol a,\boldsymbol b)\).
Conversely, every admissible threshold pair defines a cut with this
value.  Minimizing over cuts gives the second equality, and ordinary
max-flow/min-cut duality gives the first.
\end{proof}

\begin{proposition}
\label{prop:step-flow-bridge}
For every \(m\ge1\) and every \(\boldsymbol h\) satisfying
\eqref{eq:finite-scores},
\[
 \gamma_{\mathrm{EW}}
 \ge \gamma_{\mathrm{EW}}(h^{(m)})
 \ge \MF_m(\boldsymbol h).
\]
\end{proposition}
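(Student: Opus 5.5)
The first inequality is immediate from the definition~\eqref{eq:maxmin}. The step function \(h^{(m)}\) is positive, nonincreasing and right-continuous on \([0,1)\), so it is an admissible score, and \(\gamma_{\mathrm{EW}}\) is a supremum over admissible scores. The content is the second inequality. By Proposition~\ref{prop:finite-cut} it suffices to show that for every continuous threshold pair \((a,b)\in\mathcal T\),
\[
\Gamma(g_{h^{(m)}},a,b)\ \ge\ \min_{(\boldsymbol a,\boldsymbol b)\in\mathcal T_m}\Gamma_m^{\boldsymbol h}(\boldsymbol a,\boldsymbol b).
\]
I would prove this by exhibiting \(\Gamma(g_{h^{(m)}},a,b)\) as an average of discrete objectives \(\Gamma_m^{\boldsymbol h}\) evaluated at admissible cellwise pairs.

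The construction is a shifted-lattice sampling. Fix offsets \(\sigma,\tau\in[0,1)\) and representative points \(x_i:=(i-1+\sigma)/m\in J_i\) and \(y_j:=(j-1+\tau)/m\in J_j\). Define the cellwise labels by sampling, \(U_{ij}:=U(x_i,y_j)\) and \(\widetilde V_{ij}:=\widetilde V(x_i,y_j)\). These labels inherit the required structure:
\begin{itemize}
\item Since \(y_j\) is increasing in \(j\), each row of \(U\) is upward closed, so it is induced by an integer cutoff \(a_i\).
\item Likewise each column of \(\widetilde V\) is induced by an integer cutoff \(b_j\).
\item Exclusivity \(U\widetilde V=0\) holds at the sampled points for almost every \((\sigma,\tau)\), by Fubini applied to the a.e.\ condition in Definition~\ref{def:threshold-pairs}.
\end{itemize}
Hence \((\boldsymbol a,\boldsymbol b)\in\mathcal T_m\) for a.e.\ offset pair. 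Because \(h^{(m)}\) is constant on cells, the quantities \(g_{h^{(m)}}(x,y)\) and \(g_{h^{(m)}}(x,a(x))\) depend on \(x\) only through its cell index and on the second argument only through its cell. Averaging over uniform \((\sigma,\tau)\) therefore reproduces \(\iint S\), \(\iint U\,(\cdot)\) and \(\iint \widetilde V\,(\cdot)\) exactly, provided the coefficients match. That is, we need \(P_i=g_{i,a_i+1}\) to agree with (or lower-bound) the continuous coefficient \(g_{h^{(m)}}(x_i,a(x_i))=g_{i,\lfloor m a(x_i)\rfloor+1}\), and similarly for \(Q_j\). If these coefficient bounds hold, then \(\E_{\sigma,\tau}[\Gamma_m^{\boldsymbol h}(\boldsymbol a,\boldsymbol b)]\le\Gamma(g_{h^{(m)}},a,b)\), and the minimum is at most the average.

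The main obstacle is exactly this coefficient comparison. The sampled cutoff \(a_i\) only records that \(a(x_i)\in[y_{a_i},y_{a_i+1})\). So the cell of \(a(x_i)\) is \(a_i\) or \(a_i-1\), depending on whether \(\tau\) lies below the fractional part of \(m a(x_i)\). Since \(g_{i,\cdot}\) is nonincreasing in its second index, the bad case \(\lfloor m a(x_i)\rfloor=a_i-1\) gives a continuous coefficient \(g_{i,a_i}\) that is larger than \(P_i\). This is the favourable direction for the \(P\)-coefficient. For the same reason the \(U\)-indicator in the boundary cell is only partially present in the continuum. I would first try to pair these two effects inside the single boundary cell of each row.

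Concretely, rewrite both objectives using the integration-by-parts identities~\eqref{eq:finite-A}--\eqref{eq:finite-aA}, and their continuous analogue, which expresses the row contribution through \(\sum_j r_{ij}U_{ij}\) and \(\sum_j q_{ij}\widetilde V_{ij}\). In this linear form each term is a product of a nonnegative coefficient with indicators evaluated cellwise. Averaging over \(\tau\) then turns each boundary-cell indicator into the exact fraction of the cell it occupies. The check I expect to be delicate is that every coefficient appearing with a boundary-cell fraction is nonnegative, namely \(q\ge0\) and \(r\ge0\), which was shown above. That nonnegativity is what makes rounding toward the sampled cutoff only decrease the value.

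If this pairing fails, the fallback is to bypass sampling and directly round each row cutoff to \(a_i:=\lfloor m\cdot\operatorname{ess\,inf}_{x\in J_i}a(x)\rfloor\) or to a row-averaged value. One would then use the separability of the cut objective~\eqref{eq:finite-cut-expanded} row by row and column by column to compare with the continuum.
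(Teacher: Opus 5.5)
Your argument is correct and is essentially the paper's proof: you sample one representative point per row and column cell, read off the induced cellwise thresholds (which are admissible almost surely), use that $g_{i,\cdot}$ is nonincreasing so that $P_i=g_{i,a_i+1}\le g_{h^{(m)}}(x_i,a(x_i))$ and $Q_j\le g_{h^{(m)}}(y_j,b(y_j))$, and take expectations. Your common shifted lattice in place of the paper's independent per-cell representatives makes no difference, since only the uniformity of each $(x_i,y_j)$ on $J_i\times J_j$ is used.

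Note, however, that the proof is already finished once you derive that coefficient inequality. The sampled labels are exactly the continuum indicators at the sample points, and they multiply nonnegative coefficients, so the per-sample bound holds pointwise. There is therefore no separate boundary-cell indicator effect to pair off, and both the integration-by-parts rewriting and the fallback rounding are unnecessary.
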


\begin{proof}
The first inequality follows from~\eqref{eq:maxmin}.  For the second,
fix any \((a,b)\in\mathcal T\).
Independently choose one representative
\(\xi_i\sim\Unif(J_i)\) and one representative
\(\zeta_j\sim\Unif(J_j)\) for each cell.  The representatives are
strictly ordered by their cell indices, and the finitely many sampled
pairs almost surely avoid the null overlap set.  Hence the arrays
\(U(\xi_i,\zeta_j)\) and \(\widetilde V(\xi_i,\zeta_j)\) form
nonoverlapping row and column thresholds and define some
\((\widehat{\boldsymbol a},\widehat{\boldsymbol b})\in\mathcal T_m\).

The sampled boundary \(\widehat a_i/m\) is an endpoint of the cell
containing \(a(\xi_i)\), with the endpoint \(1\) left unchanged; the
same holds for \(\widehat b_j/m\).
Cellwise constancy and monotonicity of \(g_{h^{(m)}}\) in its second coordinate
therefore give
\[
 g_{i,\widehat a_i+1}
 \le g_{h^{(m)}}(\xi_i,a(\xi_i)),
 \qquad
 g_{j,\widehat b_j+1}
 \le g_{h^{(m)}}(\zeta_j,b(\zeta_j)).
\]
The sampled \(U,\widetilde V,S\) indicators agree with the discrete
indicators induced by
\((\widehat{\boldsymbol a},\widehat{\boldsymbol b})\).
Therefore
\[
\begin{aligned}
 \MF_m(\boldsymbol h)
 &=\MC_m(\boldsymbol h)
 \le
 \Gamma_m^{\boldsymbol h}
   (\widehat{\boldsymbol a},\widehat{\boldsymbol b})\\
 &\le
 \frac1{m^2}\sum_{i,j=1}^m
 \Bigl[
  S(\xi_i,\zeta_j)
  +\bigl(U(\xi_i,\zeta_j)+\widetilde V(\xi_i,\zeta_j)\bigr)\\[-2pt]
 &\hspace{9em}\cdot
  \bigl(
   g_{h^{(m)}}(\xi_i,a(\xi_i))
   +g_{h^{(m)}}(\zeta_j,b(\zeta_j))
  \bigr)
 \Bigr].
\end{aligned}
\]
Taking expectation over the representatives turns the last sum into
\(\Gamma(g_{h^{(m)}},a,b)\).  Thus
\(\MF_m(\boldsymbol h)\le\Gamma(g_{h^{(m)}},a,b)\) for every
\((a,b)\in\mathcal T\); taking the infimum proves the second inequality.
\end{proof}

\begin{theorem}
\label{thm:finite-guarantee}
For every \(m\) and every \(m\)-level score function \(h^{(m)}\)
whose heights satisfy~\eqref{eq:finite-scores}, and for every instance
\(G=(L,R,w,E)\), \(\HR(h^{(m)})\) satisfies
\[
 \E[\ALG(G)]\ge \MC_m(\boldsymbol h)\,\OPT(G)
             =\MF_m(\boldsymbol h)\,\OPT(G).
\]
\end{theorem}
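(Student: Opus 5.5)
The theorem is a direct chaining of three results already established, so the plan is to compose them and check that their hypotheses line up. Fix \(m\ge1\), heights \(\boldsymbol h\) satisfying~\eqref{eq:finite-scores}, and the step function \(h^{(m)}\). First I would confirm that \(h^{(m)}\) restricted to the sampled domain \([0,1)\) is an admissible score function in the sense of Section~\ref{sec:model}. It is positive because \(h_m>0\). It is nonincreasing because the heights are ordered. It is right-continuous because the cells \(J_i=[\tfrac{i-1}{m},\tfrac im)\) are closed on the left, so \(h^{(m)}\) is constant on a right neighborhood of every point. The bookkeeping value \(h^{(m)}(1)=0\) is never sampled, and it is consistent with the boundary extension~\eqref{eq:g-boundary-extension}. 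That extension, \(g(t,1)=0\), matches \(g_{i,m+1}=0\) in~\eqref{eq:finite-g}.

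Second, with admissibility in hand, I would apply Proposition~\ref{prop:fixed-score-guarantee} to \(h=h^{(m)}\). This gives \(\E[\ALG(G)]\ge\gamma_{\mathrm{EW}}(h^{(m)})\,\OPT(G)\) for every instance \(G\) and every admissible tie-breaking rule. Third, I would invoke the second inequality of Proposition~\ref{prop:step-flow-bridge}, namely \(\gamma_{\mathrm{EW}}(h^{(m)})\ge\MF_m(\boldsymbol h)\). Multiplying by \(\OPT(G)\ge0\) then yields \(\E[\ALG(G)]\ge\MF_m(\boldsymbol h)\,\OPT(G)\). Finally, Proposition~\ref{prop:finite-cut} gives \(\MF_m(\boldsymbol h)=\MC_m(\boldsymbol h)\), which produces the stated equality of the two right-hand sides.

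There is no real obstacle, since the substantive work is contained in the cited propositions. The only point needing care is the first step, the admissibility check. The cutoff lemma, the threshold-gain lemma and the fixed-score guarantee were all proved for nonincreasing right-continuous \(h\), and Lemma~\ref{lem:threshold-gain} uses right-continuity in the limit \(y\downarrow a\). So I would state explicitly that the half-open cell convention provides exactly this right-continuity, and that the discrete value \(g_{h^{(m)}}(x,a/m)=g_{i,a+1}\) used in building the network agrees with the continuous \(g_{h^{(m)}}\) appearing in~\eqref{eq:Gamma}, including at the boundary \(a=m\).
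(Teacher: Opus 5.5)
Your proposal is correct and matches the paper's proof: Proposition~\ref{prop:fixed-score-guarantee}, then the second inequality of Proposition~\ref{prop:step-flow-bridge}, then the identity \(\MF_m(\boldsymbol h)=\MC_m(\boldsymbol h)\) from Proposition~\ref{prop:finite-cut}. Your explicit check that \(h^{(m)}\) is positive, nonincreasing, and right-continuous on \([0,1)\) is sound, and it spells out a point the paper leaves implicit.
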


\begin{proof}
Proposition~\ref{prop:fixed-score-guarantee} gives
\[
 \E[\ALG(G)]
 \ge\gamma_{\mathrm{EW}}(h^{(m)})\,\OPT(G).
\]
Now apply Proposition~\ref{prop:step-flow-bridge} and the identity
\(\MF_m(\boldsymbol h)=\MC_m(\boldsymbol h)\) from
Proposition~\ref{prop:finite-cut}.
\end{proof}

\subsection{Why the formulation is polynomial}

For an \(m\)-level score function, each curve has
\((m+1)^m\) possible threshold
vectors, so a direct enumeration of the pair has
\((m+1)^{2m}\) cases.  For a fixed step profile, the minimum-cut network
represents all of them with \(O(m^2)\) nodes and \(O(m^3)\) arcs.
For a fixed table \(\boldsymbol q\), every \(q\)-dependent capacity is
linear in \(\boldsymbol q\), and the resulting flow problem is a
linear program.  The outer map from scores
to that table remains nonlinear, however, through
\(g_{ij}=h_j/(h_i+h_j)\) and
\(q_{ij}=g_{ij}-g_{i,j+1}\).  We use numerical search only to choose a
candidate step profile, and then certify that fixed parameter choice
exactly.

\section[A 240-Step Score and Exact-Integer Verification]
{A \(240\)-Step Score and Exact-Integer Verification}
\label{sec:certificate}

Section~\ref{sec:finite} constructed a finite-flow lower bound for
every step score.  We now specify a \(240\)-step score function
\(h^\star\) and a reproducible exact-integer maximum-flow computation
for its network. A \href{https://github.com/ahqspxy/harmonic-ranking-certificate}
{complete verification artifact} is publicly available.

Positive rescaling changes neither the algorithm nor its gain-sharing rule:
for every \(c>0\),
\[
 \pi_{uv}^{\,ch}=c\,\pi_{uv}^{\,h},
 \qquad
 g_{ch}=g_h.
\]
We therefore normalize the first height to one when displaying the
score.

Set \(m=240\).  The supplementary file
\path{finite_rank_scores_n240.txt} contains exactly 240 integers
\(H_1>\cdots>H_{240}>0\).  Write
\(\boldsymbol H:=(H_1,\ldots,H_{240})\); these are the step heights of
\[
 h^\star(x)=H_i
 \quad\text{for }x\in[(i-1)/240,i/240),
 \qquad h^\star(1)=0.
\]

\begin{figure}[tbp]
\centering
\begin{tikzpicture}
\begin{axis}[
 width=.72\linewidth,
 height=.36\linewidth,
 xmin=0,xmax=1,
 ymin=0,ymax=1.02,
 xlabel={rank \(x\)},
 ylabel={normalized height},
 grid=major,
]
\addplot[
 thresholdblue,
 thick,
 no marks,
 const plot,
]
table[
 skip first n=1,
 x expr={(\coordindex+.5)/240},
 y expr={\thisrowno{0}/1e30},
] {supplement/finite_rank_scores_n240.txt};
\end{axis}
\end{tikzpicture}
\caption{The normalized heights \(H_i/H_1\), plotted at their cell
midpoints.  The algorithm uses the corresponding \(240\)-step
piecewise-constant score.}
\label{fig:certified-score}
\end{figure}

The numerical search was guided by the three-parameter profile
\begin{equation}\label{eq:smooth-score-guide}
 \widetilde h(x)
 :=
 (1-x)^p
 \exp\!\left(
  c_1\bigl((1-x)-1\bigr)
  +c_2\bigl((1-x)^2-1\bigr)
 \right),
\end{equation}
with
\[
 p=0.9938,\qquad
 c_1=-0.74112772,\qquad
 c_2=0.58515414.
\]
Using the sampling offset \(\delta_{\mathrm{samp}}=0.30\), the checked
integers are generated by
\[
 H_i
 =
 \left\lfloor
 10^{30}
 \frac{
  \widetilde h((i-1+\delta_{\mathrm{samp}})/m)
 }{
  \widetilde h(\delta_{\mathrm{samp}}/m)
 }
 \right\rfloor
 \qquad(i\in[m]).
\]
This smooth formula explains only how the candidate was selected.  The
proof uses the checked integer list: the artifact verifier confirms
its count, positivity, and monotonicity.  Put \(H_{m+1}=0\) only for
the telescoping formulas.
The capacities are computed from the exact identities
\begin{align}
 q_{ij}
 &=
 \frac{H_i(H_j-H_{j+1})}
 {(H_i+H_j)(H_i+H_{j+1})},\label{eq:exact-q}\\
 r_{ij}
 &=
 \frac{
 (m-j+1)H_j(H_i+H_{j+1})
 -(m-j)H_{j+1}(H_i+H_j)}
 {(H_i+H_j)(H_i+H_{j+1})}.\label{eq:exact-r}
\end{align}
The verifier independently recomputes all
\(2m^2=115{,}200\) stored \(q\)- and \(r\)-capacities from these
identities, including the \(m^{-2}\) normalization in
\eqref{eq:finite-network}, and checks that the declared integer scale
is \(10^{12}\).
For every objective arc of exact rational capacity \(c_e\), the checker
uses integer capacity
\[
 \underline c_e:=\lfloor 10^{12}c_e\rfloor.
\]
Every hard implication arc receives the scaled big-\(M\) capacity
\(2\cdot10^{12}\).
The source-only cut and the cut placing every nonterminal node on the
source side each have normalized capacity at most one, so a minimum
cut below one never crosses a hard arc.

\begin{center}
\begin{tabular}{@{}lr@{}}
\toprule
grid size \(m\) & \(240\)\\
step heights & \(240\) positive integers\\
stored rounded \(q/r\) capacities & \(115{,}200\)\\
network nodes & \(115{,}202\)\\
forward arcs & \(27{,}993{,}120\)\\
integer scale & \(10^{12}\)\\
computed flow value & \(698015475248\)\\
source-reachable cut value & \(698015475248\)\\
implied factor & \(0.698015475248\)\\
\bottomrule
\end{tabular}
\end{center}

\begin{lemma}
\label{lem:rounding}
If the rounded integer network has a feasible flow of value \(F\), then
\[
 \MC_{240}(\boldsymbol H)=\MF_{240}(\boldsymbol H)
 \ge \frac{F}{10^{12}}.
\]
\end{lemma}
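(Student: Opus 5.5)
The proof is a monotonicity argument for feasible flows under capacity reduction, followed by rescaling.

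Let \(\mathcal N_{240}(\boldsymbol H)\) be the exact network with rational capacities \(c_e\). Its objective arcs carry the capacities in~\eqref{eq:finite-network}, and its implication arcs carry capacity \(2\). Let \(\underline{\mathcal N}\) be the rounded integer network. It has the same nodes and arcs, with capacity \(\underline c_e=\lfloor 10^{12}c_e\rfloor\) on each objective arc and \(2\cdot 10^{12}\) on each hard arc.

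\textbf{Step 1: capacities only shrink.} For every objective arc, \(\underline c_e\le 10^{12}c_e\), because the floor never exceeds its argument. We need \(c_e\ge0\) so that the capacities are legitimate. This holds because \(q_{ij}\ge0\) by the score order and \(r_{ij}\ge0\) by~\eqref{eq:finite-q-linear}. For every hard arc, \(2\cdot10^{12}=10^{12}\cdot 2\), so equality holds. Hence every arc satisfies \(\underline c_e\le 10^{12}c_e\).

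\textbf{Step 2: transfer the flow.} Let \(f\) be a feasible flow in \(\underline{\mathcal N}\) of value \(F\), and define \(f'=f/10^{12}\). Conservation at internal nodes is linear, so it is preserved under scaling. Capacity feasibility holds because \(0\le f_e/10^{12}\le \underline c_e/10^{12}\le c_e\). Thus \(f'\) is a feasible flow in \(\mathcal N_{240}(\boldsymbol H)\) of value \(F/10^{12}\), and therefore \(\MF_{240}(\boldsymbol H)\ge F/10^{12}\).

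\textbf{Step 3: conclude.} Proposition~\ref{prop:finite-cut} gives \(\MC_{240}(\boldsymbol H)=\MF_{240}(\boldsymbol H)\), which yields the stated chain.

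The only point needing care is to check that the big-\(M\) replacement is consistent, namely that the integer hard capacity is exactly the scaled version of the capacity \(2\) used to define \(\mathcal N_m\). Alternatively, one can note that raising hard capacities only increases the max flow. Either way this is immediate, so there is no real obstacle.

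Optionally, an independent check comes from weak duality. The verifier's source-reachable cut certifies that the rounded max flow equals \(F\), but this is not needed for the lower bound. Any feasible flow bounds every cut from below, and so bounds \(\MC_{240}(\boldsymbol H)\) directly.
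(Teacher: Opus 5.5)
Your proposal is correct and follows the paper's proof. Rounding down the objective capacities, while scaling the hard arcs exactly from capacity \(2\), makes the rescaled integer flow feasible in the exact network \(\mathcal N_{240}(\boldsymbol H)\), and Proposition~\ref{prop:finite-cut} supplies \(\MC_{240}(\boldsymbol H)=\MF_{240}(\boldsymbol H)\); your explicit nonnegativity check on \(q_{ij}\) and \(r_{ij}\) is a detail the paper leaves implicit.
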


\begin{proof}
Every rounded objective capacity is no larger than its exact capacity,
while each scaled big-\(M\) capacity is exact.  After division by
\(10^{12}\), any feasible flow in the rounded network is therefore
feasible in the exact rational network.
\end{proof}

\begin{theorem}[Exactly verified \(240\)-step score function]
\label{thm:certificate}
The supplied step function \(h^\star\) satisfies
\[
 \MF_{240}(\boldsymbol H)=\MC_{240}(\boldsymbol H)
 \ge\frac{698015475248}{10^{12}}
 =0.698015475248>0.698.
\]
\end{theorem}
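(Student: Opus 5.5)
The plan is to combine Lemma~\ref{lem:rounding} with an explicit integer flow certificate for the rounded network, and to cross-check it by exhibiting a matching cut. The computational content lives entirely in the verification artifact, so the proof consists of stating exactly what is checked and why each check is sound.

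First, fix the integer list \(\boldsymbol H\) and confirm that it satisfies~\eqref{eq:finite-scores}: there are exactly \(240\) entries, all positive and strictly decreasing. From these heights, compute every \(q_{ij}\) and \(r_{ij}\) as an exact rational using~\eqref{eq:exact-q}--\eqref{eq:exact-r}, with \(H_{241}=0\). Then form the capacities of~\eqref{eq:finite-network}, including the \(m^{-2}\) factor, and round each down to \(\lfloor 10^{12}c_e\rfloor\) using exact integer arithmetic, so that no floating point enters. Hard arcs get capacity \(2\cdot 10^{12}\). This produces the integer network on \(115{,}202\) nodes, whose arc count can be checked against the formula \(2m^3+6m^2-2m\) up to the parallel-arc convention.

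Second, run an exact integer maximum-flow algorithm, for instance push--relabel or Dinic, to obtain a flow of value \(F=698015475248\). The flow itself is the certificate, so its correctness must be checked independently of the solver. For every arc, verify \(0\le f_e\le\underline c_e\). At every nonterminal node, verify conservation. Finally, verify that the net outflow at \(s\) equals \(F\). These are integer checks and involve no tolerance. Lemma~\ref{lem:rounding} then gives \(\MF_{240}(\boldsymbol H)\ge F/10^{12}\). Proposition~\ref{prop:finite-cut} supplies the equality \(\MF=\MC\), which completes the stated chain \(\ge 0.698015475248>0.698\).

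Third, as a consistency check that is not logically required, extract the source-reachable set in the residual network and confirm that its rounded cut value equals \(F\). This shows the flow is maximum for the rounded network, so the bound is essentially tight for this discretization. The main obstacle is practical rather than mathematical. With roughly \(2.8\times10^7\) arcs and exact big rationals, the challenges are the memory and running time of the max-flow computation, and keeping the independent verifier fully integer so that it is trustworthy. I would store \(q\) and \(r\) once per index pair, as the \(115{,}200\) stored capacities, and generate the fan arcs implicitly rather than materialising all capacities separately.
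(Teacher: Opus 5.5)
Your proposal matches the paper's proof. The exact-integer checker exhibits a feasible flow of value \(698015475248\) in the rounded-down network (hard arcs at \(2\cdot 10^{12}\)), Lemma~\ref{lem:rounding} transfers it to the exact rational network, and finite max-flow/min-cut duality gives \(\MF_{240}=\MC_{240}\); as in the paper, the equal-valued source-reachable residual cut certifies optimality of the rounded flow but is not needed for the lower bound.
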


\begin{proof}
The integer checker exhibits a feasible flow and the source-reachable
cut in its terminal residual graph, both of value
\[
 \texttt{flow\_integer}
 =
 \texttt{cut\_integer}
 =
 698015475248.
\]
Thus max-flow/min-cut duality verifies this value exactly for the
rounded network, and Lemma~\ref{lem:rounding} transfers its feasible
flow to the exact rational network.  The equal-valued cut verifies
optimality but is not needed for the lower-bound direction.
\end{proof}

The supplementary README records the artifact hashes, exact commands,
expected output, resource requirements, and implementation details.

\begin{proof}[Proof of Theorem~\ref{thm:main}]
Apply Theorem~\ref{thm:finite-guarantee} to
\(\HR(h^\star)\), and use Theorem~\ref{thm:certificate}.  This gives
\[
 \E[\ALG(G)]
 \ge 0.698015475248\,\OPT(G)
\]
for every \(G\).
\end{proof}

\section{Online Harmonic Ranking}\label{sec:online}

In the vertex-weighted online model, every offline vertex \(u\in L\) has a
known weight \(w_u\), and every edge incident to \(u\) has that weight.
Online vertices arrive in a uniformly random order and reveal their
neighborhoods on arrival.  After discarding zero-valued offline
vertices, we assume \(w_u>0\).  We also assume that \(n=|R|\) is known
in advance.

For the coupling below, fix an arbitrary order $\prec_L$ on $L$ and
use the following admissible specialization of
Section~\ref{sec:tie-breaking}: among equal-priority pairs, order first
by increasing rank of the $R$-endpoint and then by $\prec_L$, with
any remaining ties broken by a fixed order. This rule is admissible:
lowering a rank weakly improves every incident pair, while comparisons
among nonincident pairs are unchanged.

\begin{algorithmbox}{Online Harmonic Ranking}
Independently sample an offline rank
$r_u\sim\Unif[0,1)$ for every $u\in L$. Independently sample $n$ uniform
timestamps with order statistics
\[
 t_{(1)}<t_{(2)}<\cdots<t_{(n)}.
\]
Assign the \(k\)-th arriving online vertex \(v\) the timestamp
\(t_v:=t_{(k)}\).  Match \(v\), if possible, to an available neighbor
\(u\) maximizing
\begin{equation}\label{eq:online-harmonic-gain}
 w_u g_{h^\star}(t_v,r_u)
 =
 w_u\frac{h^\star(r_u)}
 {h^\star(t_v)+h^\star(r_u)},
\end{equation}
breaking ties by \(\prec_L\).
\end{algorithmbox}

\begin{proof}[Proof of Corollary~\ref{cor:online}]
Fix the graph, the arrival order, and all sampled ranks and timestamps.
For a potential pair \(uv\), its Harmonic Ranking priority is
\[
 \pi_{uv}
 =w_u\frac{h^\star(t_v)h^\star(r_u)}
 {h^\star(t_v)+h^\star(r_u)}.
\]
For fixed \(v\), the positive factor \(h^\star(t_v)\) is common to all
neighbors.  Hence maximizing~\eqref{eq:online-harmonic-gain} is exactly
maximizing \(\pi_{uv}\).

Under the identification $x_u=r_u$ and $x_v=t_v$, the above
admissible specialization orders equal numerical priorities first by
increasing $t_v$. Within a fixed arrival $v$, this first key is
constant, so tied pairs are ordered exactly by $\prec_L$, as in
Online Harmonic Ranking. For fixed \(u\), \(\pi_{uv}\) is nonincreasing in \(t_v\),
with equality resolved in favor of the earlier arrival.  Thus, if a
pair of a later arrival precedes a pair of an earlier arrival, the two
pairs are vertex-disjoint.  A stable sort by arrival time swaps only
adjacent pairs of this kind.  Their probes commute, so the sort groups
the scan into arrival blocks without changing the matching. 
Within each block, the scan is precisely Online Harmonic Ranking,
since both executions use the same order $\prec_L$ on tied offline
neighbors; unrealized pairs and probes after the online vertex is
matched are harmless no-ops.

Thus the online matching is pathwise equal to \(\HR(h^\star)\) on the
edge-weighted instance \(w_{uv}=w_u\).  Assigning the order statistics
of independent uniforms to a uniformly random arrival order makes the
labeled timestamps independent and uniform, also independently of the
offline ranks.  Theorem~\ref{thm:main} therefore gives the claimed
\(0.698015475248\) ratio.
\end{proof}

The online guarantee above uses the score-balanced gain-sharing rule
\(g_{h^\star}\), because it is inherited from the preceding
edge-weighted analysis.  The vertex-weighted variational benchmark
considered below, however, allows the full class of complementary
monotone gain-sharing rules.  We record this distinction next.

\subsection{Complementary gain-sharing rules and the variational hierarchy}
\label{sec:variational-hierarchy}

Let \(\mathfrak G_{\mathrm{comp}}\) be
the class of measurable
gain-sharing rules \(g:[0,1]^2\to[0,1]\) that are nondecreasing in the first
coordinate, nonincreasing in the second, and satisfy
\[
 g(x,y)+g(y,x)=1,\qquad g(x,1)=0\quad(x<1),\qquad g(1,y)=1\quad(y<1).
\]
Let
\[
 \mathfrak H
 :=
 \{g_h:h\text{ satisfies the score conditions in
 Section~\ref{sec:model}}\}.
\]
Here every \(g_h\) carries the boundary
extension~\eqref{eq:g-boundary-extension}.
Then \(\mathfrak H\subseteq\mathfrak G_{\mathrm{comp}}\).  Define
\begin{align*}
 \mathcal T_{\mathrm{EW}}&:=\mathcal T,\\
 \mathcal T_{\mathrm{VW}}
 &:=\{(a,b)\in\mathcal T:
       b\text{ has a nondecreasing representative}\},\\
 \mathcal T_{\mathrm{UW}}
 &:=\{(a,b)\in\mathcal T:
       a,b\text{ have nondecreasing representatives}\}.
\end{align*}
Thus
\(\mathcal T_{\mathrm{UW}}\subseteq
 \mathcal T_{\mathrm{VW}}\subseteq
 \mathcal T_{\mathrm{EW}}\).

\begin{table}[H]
\centering
\begin{tabular}{@{}ccl@{}}
\toprule
model&gain-sharing rule&cutoff geometry\\
\midrule
edge-weighted (EW)
 &\(g=g_h\)&\(a,b\) arbitrary\\
vertex-weighted (VW)
 &\(g\in\mathfrak G_{\mathrm{comp}}\)
 &\(b\) nondecreasing, \(a\) arbitrary\\
unweighted (UW)
 &\(g\in\mathfrak G_{\mathrm{comp}}\)
 &\(a,b\) nondecreasing\\
\bottomrule
\end{tabular}
\caption{Gain-sharing classes and cutoff geometry in the three matching
models.}
\label{tab:model-hierarchy}
\end{table}

The common functional~\eqref{eq:Gamma} gives
\begin{equation}\label{eq:model-hierarchy}
\begin{aligned}
 \gamma_{\mathrm{EW}}
 &=
 \sup_{g\in\mathfrak H}
 \inf_{(a,b)\in\mathcal T_{\mathrm{EW}}}\Gamma(g,a,b),\\
 \gamma_{\mathrm{VW}}
 &:=
 \sup_{g\in\mathfrak G_{\mathrm{comp}}}
 \inf_{(a,b)\in\mathcal T_{\mathrm{VW}}}\Gamma(g,a,b),\\
 \gamma_{\mathrm{UW}}
 &:=
 \sup_{g\in\mathfrak G_{\mathrm{comp}}}
 \inf_{(a,b)\in\mathcal T_{\mathrm{UW}}}\Gamma(g,a,b).
\end{aligned}
\qquad
\gamma_{\mathrm{EW}}
\le\gamma_{\mathrm{VW}}
\le\gamma_{\mathrm{UW}}.
\end{equation}
The first inequality uses both
\(\mathfrak H\subseteq\mathfrak G_{\mathrm{comp}}\) and
\(\mathcal T_{\mathrm{VW}}\subseteq\mathcal T_{\mathrm{EW}}\);
the second uses
\(\mathcal T_{\mathrm{UW}}\subseteq\mathcal T_{\mathrm{VW}}\).
These are structural factor-revealing values.

\subsection{Exact recovery of the Mahdian--Yan finite-grid value}
\label{sec:correspondences}

The variational hierarchy above is formulated in the language of
complementary gain sharing and cutoff geometry.  At its unweighted
endpoint, the natural classical benchmark is the strongly
factor-revealing program underlying the previous \(0.696\) analysis
of random-arrival Ranking by Mahdian and Yan.  The two descriptions
appear quite different: our formulation is a max--min game over
complementary gain-sharing tables and pairs of monotone cutoff paths,
whereas the Mahdian--Yan formulation is a polynomial-size linear
program with cumulative rows and transpose-balance constraints.  The
following result shows that their optimal values agree exactly at
every finite grid resolution.

For every grid size \(m\),
Appendix~\ref{app:grid-correspondence} defines the exact whole-cell
unweighted value \(\gamma_{\mathrm{UW},m}\), its reflected
complementary two-path value \(V_m\), and the Mahdian--Yan
program \(\polyLPp(m)\).

\begin{theorem}[Exact finite-grid value correspondence]
\label{thm:MY-equivalence}
For every \(m\ge1\),
\begin{equation}
  \gamma_{\mathrm{UW},m}
  =
  V_m
  =
  \operatorname{val}\bigl(\polyLPp(m)\bigr).
  \label{eq:MY-equality}
\end{equation}
\end{theorem}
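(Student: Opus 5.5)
I will prove the two equalities in~\eqref{eq:MY-equality} separately, passing through the reflected two-path value \(V_m\) as an intermediary.

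\textbf{First equality.} The whole-cell unweighted value \(\gamma_{\mathrm{UW},m}\) is a max--min over complementary monotone gain-sharing tables \((g_{ij})\) and pairs of nondecreasing cellwise cutoff vectors \((\boldsymbol a,\boldsymbol b)\), using the cellwise functional \(\Gamma_m\) from~\eqref{eq:finite-Gamma}. The first step is to reparametrize. Monotone \(\boldsymbol a,\boldsymbol b\) are exactly monotone lattice paths in \([m]^2\), and disjointness \(U\widetilde V=0\) says the \(a\)-path lies weakly above the reflected \(b\)-path. The plan is therefore to reflect \(b\) across the diagonal. This turns \((\boldsymbol a,\boldsymbol b)\) into two paths in a common square, and complementarity \(g_{ij}+g_{ji}=1\) lets every \(Q_j\)-term be rewritten as \(1-g\) of the reflected cell. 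Checking that \(\Gamma_m\) becomes the objective defining \(V_m\) term by term is a bookkeeping identity, carried out with the telescoping formulas~\eqref{eq:finite-A}--\eqref{eq:finite-aA}.

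\textbf{Second equality.} Here I would linearize the inner problem.
\begin{itemize}
\item For a fixed table \(g\), the inner minimum over monotone pairs is a minimum cut. Take the network of Proposition~\ref{prop:finite-cut} and add the hard arcs enforcing monotonicity of \(i\mapsto a_i\) and \(j\mapsto b_j\).
\item Its value is a max-flow LP whose capacities are affine in \(\boldsymbol q\), where \(\boldsymbol q\) is the increment table of \(g\).
\item In the outer max, the bilinear terms \(q\cdot\text{flow}\) must be removed. The plan is to substitute variables as in Mahdian--Yan: take their cumulative-row variables to be the partial sums \(\sum_{k\ge j}q_{ik}\), and let their transpose-balance constraints encode complementarity \(\sum_{k\ge j}q_{ik}+\sum_{k\ge i}q_{jk}=1\).
\end{itemize}
I would then prove two inequalities:
\begin{itemize}
\item \(V_m\le\operatorname{val}(\polyLPp(m))\): map any feasible \((g,\text{flow certificate})\) to a feasible LP point with the same objective.
\item \(V_m\ge\operatorname{val}(\polyLPp(m))\): from an optimal LP solution, read off a complementary monotone table \(g\). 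Then show, by LP duality on each fixed-\(g\) cut problem, that every two-path pair has value at least the LP objective. In other words, the LP constraints are exactly the dual-feasibility conditions for all path pairs.
\end{itemize}

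\textbf{Main obstacle.} The hard step is this last identification: showing that the Mahdian--Yan constraints are not merely implied by, but equivalent to, the requirement that the minimum over monotone two-path cuts be at least the objective. In the unweighted two-monotone case the min-cut polytope should be integral and totally unimodular (an interval/path structure). I would first try to verify that the constraint matrix of the monotone-path min-cut LP, after the reflection, coincides row by row with \(\polyLPp(m)\)'s cumulative rows. The fallback is to exhibit explicit primal and dual solutions of matching value.
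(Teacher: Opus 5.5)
Your plan has genuine gaps in both equalities. For the first, reflection is the right idea: the paper reverses both grid coordinates of the table and of both paths, $g(k,r)=\widehat g_{m+1-k,m+1-r}$, $c_{k-1}=m-a_{m+1-k}$, $d_{k-1}=m-b_{m+1-k}$, and the identity $\widehat\Gamma_m=\Phi_g$ is indeed bookkeeping. But the reflection maps $\mathcal T_{\mathrm{UW},m}$ bijectively only onto the transpose-overlap-free pairs $\mathcal P_m^\perp=\{X_{kr}Y_{rk}=0\}$. By contrast, $V_m$ minimizes $\Phi_g$ over all of $\mathcal B_m^2$, with no disjointness constraint. Bookkeeping therefore gives only $\gamma_{\mathrm{UW},m}\ge V_m$. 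The reverse inequality needs an overlap-removal step that your plan does not contain. At an overlap, lower the first entry $\bar c_i=s$ of the offending plateau by one. The change in $m^2\Phi_g$ is $g(\bar d_s,s)-g(i,s-1)-(s+L_d(i))\Delta_{is}\le(1-s-L_d(i))\Delta_{is}\le0$. This uses complementarity, monotonicity of $g$ in its first coordinate, and $\bar d_s\ge i$.

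For the second equality, your dictionary is inverted. $\polyLPp(m)$ is a \emph{minimization} over cover variables $x(\ell,r,p)$. Its transpose-balance constraints $W_{\ell r}=W_{r\ell}$ are homogeneous and do not express complementarity of a primal table. The gain table enters instead as the Lagrange multiplier of those constraints. Dualizing them and writing the antisymmetric multipliers as $F_{\ell r}=\tfrac12-\tfrac m2\eta_{\ell r}$ gives $\operatorname{val}(\polyLPp(m))=\sup_{F\in\mathcal D_m}\mathscr R_m(F)$, with $F$ ranging over \emph{all} complementary tables. Hence your two inequalities are not well posed. A feasible point of a minimization LP bounds its value from above, so mapping feasible $(g,f)$ into it cannot yield $V_m\le\operatorname{val}(\polyLPp(m))$. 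Also, an optimal primal solution contains no $g$ to read off.

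Your observation that adding monotonicity arcs to the network of Proposition~\ref{prop:finite-cut} makes $\gamma_{\mathrm{UW},m}$ a joint LP in $(g,f)$ is correct, and there are in fact no bilinear terms to remove. But the dual of that LP carries multipliers for the monotonicity of $g$ and is not $\polyLPp(m)$. Relating the two is exactly the content of the two missing steps.

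First, for monotone $F$ one must prove $\mathscr R_m(F)=\min_{\mathcal B_m^2}\Phi_g$. This is not a row-by-row match of constraint matrices: $\polyLPp(m)$ has an auxiliary state index $p$, tail equalities, and crossed coverage pairing a diagonal state with a whole transposed row. The paper needs summation by parts in the increments $d_{\ell r}$, truncation, a totally unimodular doubled crossed-cover system, and a canonical-cover analysis (diagonal cones, generator paths, monotone and overlap normalization).

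Second, and hardest, one must show that non-monotone prices do not help: $\sup_{\mathcal D_m}\mathscr R_m=\max_{\mathcal A_m}\mathscr R_m$. Your plan tacitly assumes the optimal table is monotone. The paper proves this via an interval-order minimax with penalty $\sigma(\delta)$, predecessor transport in saturated canonical covers, an interval max-flow, and a homogeneous completion to a balanced feasible point.
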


Equation~\eqref{eq:MY-equality} is exact for each fixed \(m\); it is
neither an asymptotic statement nor a numerical coincidence.  At the
level of optimal certificate values, it shows that the complementary
primal--dual formulation loses no strength relative to the
Mahdian--Yan program at the unweighted endpoint.  Equivalently,
\(\polyLPp(m)\) provides a polynomial-size value characterization of
the exact whole-cell cutoff game.  In this precise sense, the
variational hierarchy developed here recovers the full finite-grid
strength of the classical unweighted analysis rather than defining a
parallel relaxation with a merely similar value.

The first equality is proved in
Appendix~\ref{app:grid-correspondence} by reflecting both grid
coordinates and both cutoff paths, followed by an overlap-removal
argument.  The second equality is proved in
Appendix~\ref{app:balance-price}: transpose balance is partially
dualized into a complementary price table, fixed-price contraction
identifies the inner problem with the two-path game, and
monotone-price normalization restricts the price to the required
monotone class.  The outer gain-sharing optimization
ranges over the grid discretization of the full class
\(\mathfrak G_{\mathrm{comp}}\), rather than over the score-balanced
subclass; the theorem does not assert that the optimum is attained
within that subclass.





\section{Discussion}\label{sec:discussion}

In this work, we introduce the Harmonic Ranking algorithm and prove a
competitive ratio of \(0.698\). We conclude with several observations that
emerged from our investigation but are not needed for the main results.

In Section~\ref{sec:online}, we formalize a hierarchy of factor-revealing problems
corresponding to the edge-weighted, vertex-weighted, and unweighted
settings:
\[
\gamma_{\mathrm{EW}}
\le
\gamma_{\mathrm{VW}}
\le
\gamma_{\mathrm{UW}}.
\]
Our computer-assisted analysis shows that
\(\gamma_{\mathrm{EW}}>0.698\), while \citet{PengTang2025} showed that
\(\gamma_{\mathrm{UW}}<0.703\). Thus, although the remaining quantitative
gap is small, it is natural to ask for the exact values of these three
quantities and, in particular, whether the inequalities in the hierarchy
are strict.
Our numerical experiments suggest a rather delicate picture.

\paragraph{Edge-weighted versus vertex-weighted.}
As discussed in Section~\ref{sec:finite}, if we relax the gain-sharing functions from the score-balanced family \(\mathfrak H\) to the more general family
\(\mathfrak G_{\mathrm{comp}}\), then, after discretization, the resulting
max-flow formulation becomes a linear program. Numerically, imposing the
additional monotonicity constraint on one cutoff curve---the structural
property available in the vertex-weighted setting---does not appear to
improve the optimal value. More precisely, our computations suggest
\[
\sup_{g\in\mathfrak H}
 \inf_{(a,b)\in\mathcal T_{\mathrm{EW}}}\Gamma(g,a,b)
\;\le\;
\sup_{g\in\mathfrak G_{\mathrm{comp}}}
 \inf_{(a,b)\in\mathcal T_{\mathrm{EW}}}\Gamma(g,a,b)
\;\overset{!}{=}\;
\sup_{g\in\mathfrak G_{\mathrm{comp}}}
 \inf_{(a,b)\in\mathcal T_{\mathrm{VW}}}\Gamma(g,a,b),
\]
where \(\overset{!}{=}\) denotes a numerical observation rather than a
proved equality.

At the same time, we see no structural reason for the optimal gain-sharing
function in the larger family \(\mathfrak G_{\mathrm{comp}}\) to be
score-balanced. We conjecture that the first inequality above is strict. 
If so, the distinction between the
edge-weighted and vertex-weighted analyses would arise not from
the additional monotonicity of the cutoff curve, but from its interaction
with the admissible class of gain-sharing functions.

\paragraph{Vertex-weighted versus unweighted.}
For both the vertex-weighted and unweighted settings, the discretized
factor-revealing problems can be formulated as linear programs. At every
grid size we tested, the corresponding optimal values are distinct, with
the unweighted program attaining the larger value. It remains unclear,
however, whether this separation persists in the continuum limit or
whether the two sequences converge to the same value as the grid becomes
arbitrarily fine.

The numerical scale of these possible separations is reminiscent of the
fine distinctions among the approximation thresholds for problems such as
MAX 2-AND, MAX DI-CUT, and MAX CUT~\citep{BrakensiekHPZ26}. We do not intend to suggest
that the hierarchy studied here has comparable significance. In
particular, our hierarchy concerns the power of a specific
factor-revealing, primal--dual analysis rather than a separation between
the underlying matching problems themselves. Nevertheless, we find the
phenomenon conceptually intriguing: seemingly modest changes in the
structural information available to the analysis may lead to distinct
optimal constants, even when the differences occur only at the third
decimal place. We record these observations both as directions for future
work and as potentially useful clues toward understanding the limits of
this analysis framework.

\section*{AI Disclosure}

The \HR{} algorithm and the Mutual Proposals framework were conceived
entirely by the authors, without AI assistance.  The minimum-cut and
maximum-flow formulation was developed through iterative discussions
between the authors and OpenAI's GPT-5.6 Sol.  Prompted by a numerical
coincidence between values reported by Peng and Tang and the
Mahdian--Yan bound, the authors strongly conjectured that the variational
problem admits a polynomial-size representation.  The proof of the
exact correspondence with the Mahdian--Yan program reported in the
appendix was subsequently completed independently by GPT.

The final score function \(h^\star\) was constructed through numerical
experimentation.  The authors initially proposed \(h(x)=1-x\), and early
experiments strongly suggested that a ratio of \(\ln 2\approx0.693\)
was attainable.  Experiments with a general complementary gain-sharing rule \(g\) then produced ratios near \(0.698\).  GPT subsequently
proposed the final \(240\)-step function by fitting the numerical
results.  The authors performed the early numerical experiments; as the
capabilities
of available AI systems improved, the subsequent numerical exploration
was soon taken over by AI.

The overall architecture of the paper, including the organization of
its arguments, was designed by the authors.  AI assistance was used
extensively to develop technical details and fill in substantial parts
of the exposition.  The authors have reviewed the final manuscript and
assume full responsibility for all definitions, claims, proofs,
computations, and presentation.

\clearpage

\bibliographystyle{plainnat}
\bibliography{edge_weighted_oblivious_matching}

\clearpage
\appendix


\section{Proof of the Unweighted Grid Correspondence}
\label{app:grid-correspondence}

This appendix proves the unweighted correspondence stated in
Theorem~\ref{thm:MY-equivalence}.  The proof is independent of the edge-weighted
competitive-ratio analysis.

We work throughout on the finite grid of Section~\ref{sec:finite}.  In the
unweighted specialization, both threshold vectors are nondecreasing,
and the gain rule may range over all complementary monotone tables.
The proof has two conceptual steps.  First, reflecting both grid
coordinates turns the whole-cell game into a complementary two-path
game.  Second, Appendix~\ref{app:balance-price} identifies that
two-path game with the Mahdian--Yan program
\(\polyLPp(m)\).

Fix \(m\ge1\).  Let \(\mathfrak G_{\mathrm{comp},m}\) be the set of
tables
\[
  \widehat g=(\widehat g_{ij})_{i,j\in[m]}\in[0,1]^{m\times m}
\]
that are nondecreasing in \(i\), nonincreasing in \(j\), and satisfy
\[
  \widehat g_{ij}+\widehat g_{ji}=1
  \qquad(i,j\in[m]),
\]
with the appended boundary values
\(\widehat g_{i,m+1}:=0\) and \(\widehat g_{m+1,j}:=1\).

Let
\[
  \mathcal T_{\mathrm{UW},m}
  :=
  \{(\boldsymbol a,\boldsymbol b)\in\mathcal T_m:
    \boldsymbol a,\boldsymbol b
    \text{ are nondecreasing}\}.
\]
For \((\boldsymbol a,\boldsymbol b)\in\mathcal T_{\mathrm{UW},m}\),
define
\begin{equation}
  \widehat\Gamma_m(\widehat g,\boldsymbol a,\boldsymbol b)
  :=
  \frac1{m^2}\sum_{i,j=1}^m
  \left[
    S_{ij}
    +(U_{ij}+\widetilde V_{ij})
    \bigl(
      \widehat g_{i,a_i+1}
      +\widehat g_{j,b_j+1}
    \bigr)
  \right],
  \label{eq:finite-unweighted-Gamma}
\end{equation}
and
\begin{equation}
  \gamma_{\mathrm{UW},m}
  :=
  \max_{\widehat g\in\mathfrak G_{\mathrm{comp},m}}
  \min_{(\boldsymbol a,\boldsymbol b)\in
       \mathcal T_{\mathrm{UW},m}}
  \widehat\Gamma_m(\widehat g,\boldsymbol a,\boldsymbol b).
  \label{eq:finite-unweighted-value}
\end{equation}

\subsection{The complementary two-path game}

Let
\[
  \mathcal B_m
  :=
  \left\{
    \boldsymbol a=(a_0,\ldots,a_m)
    \in\{0,\ldots,m\}^{m+1}:
    a_0\le\cdots\le a_m=m
  \right\}.
\]
For \(\boldsymbol a\in\mathcal B_m\) and
\(j\in\{0,\ldots,m-1\}\), put
\[
  a_j^-:=\min\{k\in\{0,\ldots,m\}:a_k>j\},
  \qquad
  \ell_{\boldsymbol a}(i)
  :=
  1-\frac{a_{i-1}^-}{m}
  \quad(i\in[m]).
\]

Let \(\mathcal K_m\) consist of the tables \(g(i,r)\),
\(i,r\in[m]\), with the implicit boundary \(g(i,0):=0\), satisfying
\[
  g(i,1)\ge0,\qquad
  g(i,r)\le g(i,r+1)\quad(r<m),
  \qquad
  g(i,r)+g(r,i)=1.
\]
Complementarity and monotonicity imply \(0\le g\le1\) and that
\(g\) is nonincreasing in its first coordinate.

For \(\boldsymbol a,\boldsymbol b\in\mathcal B_m\), define
\begin{equation}
\begin{aligned}
  \Phi_g(\boldsymbol a,\boldsymbol b)
  :={}&1
  +\frac1{m^2}\sum_{i=1}^m
  \Bigl[
    a_{i-1}\bigl(g(i,a_{i-1})-1\bigr)
    +b_{i-1}\bigl(g(i,b_{i-1})-1\bigr)
  \Bigr]\\
  &+\frac1m\sum_{i=1}^m
  \Bigl[
    \ell_{\boldsymbol b}(i)g(i,a_{i-1})
    +\ell_{\boldsymbol a}(i)g(i,b_{i-1})
  \Bigr],
  \label{eq:Phi}
\end{aligned}
\end{equation}
and
\begin{equation}
  V_m
  :=
  \max_{g\in\mathcal K_m}
  \min_{\boldsymbol a,\boldsymbol b\in\mathcal B_m}
  \Phi_g(\boldsymbol a,\boldsymbol b).
  \label{eq:Vm}
\end{equation}

Under the reflection used below, the whole-cell gain-sharing table
\(\widehat g\) and cutoff pair
\((\boldsymbol a,\boldsymbol b)\in\mathcal T_{\mathrm{UW},m}\)
are mapped to the complementary gain-sharing table \(g\in\mathcal K_m\) and a
reflected path pair
\((\boldsymbol c,\boldsymbol d)\in\mathcal B_m^2\), respectively.
The reflected pair initially satisfies the additional
transpose-overlap-free condition stated in Lemma~\ref{lem:direct-whole-cell-reflection}.

\begin{lemma}[Direct whole-cell reflection]
\label{lem:direct-whole-cell-reflection}
For \(\widehat g\in\mathfrak G_{\mathrm{comp},m}\), define
\[
  (\mathscr R\widehat g)(k,r)
  :=
  \widehat g_{m+1-k,m+1-r},
  \qquad
  (\mathscr R\widehat g)(k,0):=0.
\]
Then \(\mathscr R\) is a bijection from
\(\mathfrak G_{\mathrm{comp},m}\) onto \(\mathcal K_m\).

Moreover, if \(g=\mathscr R\widehat g\), then
\[
  \min_{(\boldsymbol a,\boldsymbol b)\in
       \mathcal T_{\mathrm{UW},m}}
  \widehat\Gamma_m(\widehat g,\boldsymbol a,\boldsymbol b)
  =
  \min_{(\boldsymbol c,\boldsymbol d)\in\mathcal P_m^\perp}
  \Phi_g(\boldsymbol c,\boldsymbol d),
\]
where
\[
  \mathcal P_m^\perp
  :=
  \left\{
    (\boldsymbol c,\boldsymbol d)\in\mathcal B_m^2:
    X_{kr}Y_{rk}=0
    \text{ for all }k,r\in[m]
  \right\},
\]
with
\(X_{kr}:=\one_{\{c_{k-1}\ge r\}}\) and
\(Y_{kr}:=\one_{\{d_{k-1}\ge r\}}\).
\end{lemma}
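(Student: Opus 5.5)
The plan is to prove both claims by the explicit reindexing \(k=m+1-i\), \(r=m+1-j\), and then to compare the two objectives term by term.

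\textbf{Bijection \(\mathscr R\).} Under the reindexing, monotonicity of \(\widehat g\) (nondecreasing in \(i\), nonincreasing in \(j\)) becomes \(g\) nonincreasing in \(k\) and nondecreasing in \(r\). Complementarity \(\widehat g_{ij}+\widehat g_{ji}=1\) is preserved, since the same reflection acts on both coordinates. The condition \(g(k,1)=\widehat g_{m+1-k,m}\ge0\) holds because \(\widehat g\in[0,1]\). The boundary convention \(\widehat g_{i,m+1}=0\) is exactly \(g(k,0)=0\). Conversely, every \(g\in\mathcal K_m\) takes values in \([0,1]\) and has the required monotonicity, as remarked after the definition of \(\mathcal K_m\). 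Hence the inverse reflection lands in \(\mathfrak G_{\mathrm{comp},m}\), and \(\mathscr R\) is a bijection.

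\textbf{Bijection of cutoff pairs.} For nondecreasing \(\boldsymbol a,\boldsymbol b\), I would set
\[
 c_{k-1}:=m-a_{m+1-k},\qquad d_{r-1}:=m-b_{m+1-r}\qquad(k,r\in[m]),\qquad c_m=d_m:=m.
\]
Both sequences are nondecreasing and end at \(m\), so they lie in \(\mathcal B_m\). A direct check gives
\[
 U_{ij}=\one\{j>a_i\}=\one\{r\le c_{k-1}\}=X_{kr},
 \qquad
 \widetilde V_{ij}=\one\{i>b_j\}=\one\{k\le d_{r-1}\}=Y_{rk}.
\]
Hence the exclusivity constraint \(U\widetilde V=0\) is precisely the transpose-overlap-free condition \(X_{kr}Y_{rk}=0\). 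The map is invertible, since \(a_i=m-c_{m-i}\) recovers \(\boldsymbol a\), and similarly for \(\boldsymbol b\). So it is a bijection \(\mathcal T_{\mathrm{UW},m}\to\mathcal P_m^\perp\).

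\textbf{Equality of the objectives.} I would substitute \(S_{ij}=1-U_{ij}-\widetilde V_{ij}\) into \eqref{eq:finite-unweighted-Gamma}, which gives
\[
 m^2\widehat\Gamma_m
 = m^2+\sum_{ij}U_{ij}(P_i-1)+\sum_{ij}\widetilde V_{ij}(Q_j-1)
   +\sum_{ij}\widetilde V_{ij}P_i+\sum_{ij}U_{ij}Q_j ,
\]
where \(P_i=\widehat g_{i,a_i+1}\) and \(Q_j=\widehat g_{j,b_j+1}\). The terms then match \eqref{eq:Phi} as follows.
\begin{enumerate}
 \item The reflection gives \(P_i=g(k,m-a_i)=g(k,c_{k-1})\), and the row count is \(\sum_jU_{ij}=m-a_i=c_{k-1}\). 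So the first sum is \(\sum_k c_{k-1}(g(k,c_{k-1})-1)\), and symmetrically for \(\widetilde V\) with \(\boldsymbol d\).
 \item For the cross term, \(\sum_j\widetilde V_{ij}=\#\{r\in[m]:d_{r-1}\ge k\}\). Since \(\boldsymbol d\) is nondecreasing, this count equals \(m-d^-_{k-1}=m\,\ell_{\boldsymbol d}(k)\). So \(\sum_{ij}\widetilde V_{ij}P_i=m\sum_k\ell_{\boldsymbol d}(k)g(k,c_{k-1})\), and symmetrically for \(\sum_{ij}U_{ij}Q_j\).
\end{enumerate}
Dividing by \(m^2\) gives \(\widehat\Gamma_m(\widehat g,\boldsymbol a,\boldsymbol b)=\Phi_g(\boldsymbol c,\boldsymbol d)\) termwise. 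Minimizing over the bijective domains then yields the lemma.

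The main obstacle is the index bookkeeping in the last step. In particular one must check the boundary cases: \(a_i=m\) gives \(P_i=\widehat g_{i,m+1}=0=g(k,0)\), and \(a_i=0\) gives \(c_{k-1}=m\). One must also confirm that the definition of \(a_j^-\) (minimum over \(\{0,\dots,m\}\), using \(c_m=m\) so the minimum always exists) produces exactly the column count \(m-d^-_{k-1}\), including the edge values \(d^-_{k-1}=0\) and \(d^-_{k-1}=m\).
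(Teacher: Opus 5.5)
Your proposal is correct and follows the paper's proof essentially step for step. It uses the same index reversal for \(\mathscr R\), the same reflected paths \(c_{k-1}=m-a_{m+1-k}\) and \(d_{r-1}=m-b_{m+1-r}\) with \(U_{ij}=X_{kr}\) and \(\widetilde V_{ij}=Y_{rk}\), the substitution \(S_{ij}=1-U_{ij}-\widetilde V_{ij}\) (valid since \(U_{ij}\widetilde V_{ij}=0\) on \(\mathcal T_{\mathrm{UW},m}\)), and the same row and column counts \(c_{k-1}\) and \(m\ell_{\boldsymbol d}(k)\). Your explicit checks of the inverse direction of \(\mathscr R\) and of the boundary cases \(a_i\in\{0,m\}\), \(d^-_{k-1}\in\{0,m\}\) are somewhat more careful than the paper's write-up, but the argument is the same.
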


\begin{proof}
Reversing both indices turns monotonicity of \(\widehat g\) in its
second coordinate into monotonicity of \(g(k,r)\) in \(r\).
Complementarity is preserved, and
\(\widehat g_{i,m+1}=0\) becomes \(g(k,0)=0\).
The same index reversal is its own inverse, so \(\mathscr R\) is a bijection.

Fix
\((\boldsymbol a,\boldsymbol b)\in\mathcal T_{\mathrm{UW},m}\)
and define reflected paths by
\begin{equation}
  c_{k-1}:=m-a_{m+1-k},
  \qquad
  d_{k-1}:=m-b_{m+1-k}
  \quad(k\in[m]),
  \qquad
  c_m=d_m=m.
  \label{eq:direct-reflected-paths}
\end{equation}
Because \(\boldsymbol a\) and \(\boldsymbol b\) are nondecreasing,
\(\boldsymbol c,\boldsymbol d\in\mathcal B_m\).

Under the index change
\(i=m+1-k\), \(j=m+1-r\),
\[
  U_{ij}=X_{kr},
  \qquad
  \widetilde V_{ij}=Y_{rk}.
\]
Hence the whole-cell nonoverlap condition
\(U_{ij}\widetilde V_{ij}=0\) becomes
\(X_{kr}Y_{rk}=0\).  Conversely,
\[
  a_i=m-c_{m-i},
  \qquad
  b_i=m-d_{m-i},
\]
recovers the original threshold vectors.  Thus
\eqref{eq:direct-reflected-paths} bijects
\(\mathcal T_{\mathrm{UW},m}\) with \(\mathcal P_m^\perp\).

Write
\[
  P_k:=g(k,c_{k-1}),
  \qquad
  Q_r:=g(r,d_{r-1}).
\]
The reflection gives
\(\widehat g_{i,a_i+1}=P_k\) and
\(\widehat g_{j,b_j+1}=Q_r\).
Since \(X_{kr}Y_{rk}=0\), the middle-state indicator is
\(1-X_{kr}-Y_{rk}\).  Therefore
\[
  m^2\widehat\Gamma_m(\widehat g,\boldsymbol a,\boldsymbol b)
  =
  m^2-\sum_{k,r}(X_{kr}+Y_{rk})
  +\sum_{k,r}(X_{kr}+Y_{rk})(P_k+Q_r).
\]
The required counts are
\[
  \sum_r X_{kr}=c_{k-1},
  \qquad
  \sum_k Y_{rk}=d_{r-1},
\]
and
\[
  \sum_r Y_{rk}=m\ell_{\boldsymbol d}(k),
  \qquad
  \sum_k X_{kr}=m\ell_{\boldsymbol c}(r).
\]
Expanding the four gain terms now gives
\[
  \widehat\Gamma_m(\widehat g,\boldsymbol a,\boldsymbol b)
  =
  \Phi_g(\boldsymbol c,\boldsymbol d),
\]
which proves the claim.
\end{proof}

\begin{lemma}[Transpose-overlap removal]
\label{lem:uw-overlap-removal}
For every \(g\in\mathcal K_m\),
\[
  \min_{(\boldsymbol c,\boldsymbol d)\in\mathcal P_m^\perp}
  \Phi_g(\boldsymbol c,\boldsymbol d)
  =
  \min_{\boldsymbol c,\boldsymbol d\in\mathcal B_m}
  \Phi_g(\boldsymbol c,\boldsymbol d).
\]
\end{lemma}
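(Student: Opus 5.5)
The inclusion \(\mathcal P_m^\perp\subseteq\mathcal B_m^2\) gives ``\(\ge\)'' at once.  For ``\(\le\)'' I would show that every overlapping pair can be replaced by an overlap-free pair whose \(\Phi_g\)-value is no larger.  Then finiteness of \(\mathcal B_m\) and induction on the number of overlap cells finish the proof.

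\textbf{Step 1: a cellwise formula.}  The counting identities in the proof of Lemma~\ref{lem:direct-whole-cell-reflection} use only the path structure, not disjointness.  So for arbitrary \((\boldsymbol c,\boldsymbol d)\in\mathcal B_m^2\) I expect
\[
 m^2\Phi_g(\boldsymbol c,\boldsymbol d)
 =m^2-\sum_{k,r}(X_{kr}+Y_{rk})+\sum_{k,r}(X_{kr}+Y_{rk})(P_k+Q_r),
\]
with \(P_k=g(k,c_{k-1})\) and \(Q_r=g(r,d_{r-1})\).  I will check this directly from~\eqref{eq:Phi}.  The overlap set \(O:=\{(k,r):X_{kr}=Y_{rk}=1\}\) is monotone in the grid, because both \(\boldsymbol c\) and \(\boldsymbol d\) are nondecreasing.

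\textbf{Step 2: two truncations.}  I define \(\boldsymbol c'\) by lowering each \(c_{k-1}\) to the largest value that avoids \(O\) in row \(k\).  Symmetrically, \(\boldsymbol d'\) lowers each \(d_{r-1}\) to avoid \(O\) in column \(r\).  Monotonicity of \(O\) should keep both \(\boldsymbol c'\) and \(\boldsymbol d'\) in \(\mathcal B_m\), with \(c'_m=d'_m=m\) untouched.  Each of \((\boldsymbol c',\boldsymbol d)\) and \((\boldsymbol c,\boldsymbol d')\) is overlap-free: it has strictly fewer overlap cells, and in fact none.

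\textbf{Step 3: the averaging inequality.}  I will show
\[
 2\Phi_g(\boldsymbol c,\boldsymbol d)\ge\Phi_g(\boldsymbol c',\boldsymbol d)+\Phi_g(\boldsymbol c,\boldsymbol d'),
\]
which forces one of the two truncations to be no worse.  Going from \(\boldsymbol c\) to \(\boldsymbol c'\) has two effects.  First, the \(O\)-cells of row \(k\) leave \(X\); each such cell \((k,r)\) turns from a doubly counted cell, contributing \(-2+2(P_k+Q_r)\), into a \(V\)-type cell, contributing \(-1+(P'_k+Q_r)\).  Second, \(P_k\) drops to \(P'_k\le P_k\), because \(g\) is nondecreasing in its second argument; this drop also multiplies the remaining non-\(O\) cells of row \(k\).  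The change from \(\boldsymbol d\) to \(\boldsymbol d'\) is the mirror image.  Adding the two changes, the \(O\)-cell terms pair up as \(\sum_{(k,r)\in O}\bigl(1-P_k-Q_r\bigr)\) plus price terms.  Complementarity \(g(k,r)+g(r,k)=1\), together with the fact that overlap cells satisfy \(c_{k-1}\ge r\) and \(d_{r-1}\ge k\), should give \(P_k+Q_r\ge g(k,r)+g(r,k)=1\) on \(O\).  This makes the paired terms nonpositive.  The price-drop terms are nonpositive by monotonicity of \(g\).

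\textbf{Main obstacle.}  The hard part is the bookkeeping in Step 3.  When a price \(P_k\) drops, it multiplies every remaining cell of row \(k\) and column contributions through \(\ell\); I must verify these are all nonnegatively weighted, so that lowering \(P_k\) only decreases \(\Phi_g\).  If the clean two-truncation averaging fails, my fallback is to remove overlap one extreme corner cell of \(O\) at a time.  At each such step I would choose whichever single-coordinate decrement, either \(c_{k-1}\mapsto r-1\) or \(d_{r-1}\mapsto k-1\), is cheaper.  The same complementarity identity \(g(k,r)+g(r,k)=1\) should show that the two decrements cannot both increase \(\Phi_g\).
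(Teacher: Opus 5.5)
Your proof is correct, but it is organized differently from the paper's. The paper removes overlap one cell at a time. For an overlap cell it lowers $\bar c_i=s$ to $s-1$ at the first index $i$ of the plateau, shows that the change in $m^2\Phi_g$ is at most $(1-s-L_d(i))\Delta_{is}\le 0$, and terminates because $\sum_i\bar c_i$ decreases. You instead remove all overlap at once with $c'_{k-1}=\min\{c_{k-1},d^-_{k-1}\}$ (or its mirror). You do the accounting through the cellwise identity
\[
 m^2\Phi_g(\boldsymbol c,\boldsymbol d)=\sum_{k,r}\bigl[1+(X_{kr}+Y_{rk})(P_k+Q_r-1)\bigr],
\]
which does hold on all of $\mathcal B_m^2$; it is \eqref{eq:direct-compact-Phi} read cell by cell. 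The details you flagged with ``should'' check out. Since $\boldsymbol d$ is nondecreasing, the row-$k$ slice of $O$ is the upper interval $d^-_{k-1}<r\le c_{k-1}$ of the $X$-row, so the truncation deletes exactly the cells of $O$. Since $k\mapsto d^-_{k-1}$ is nondecreasing, $\boldsymbol c'\in\mathcal B_m$, and $(\boldsymbol c',\boldsymbol d)$ is overlap-free. Both arguments hinge on the same inequality, $P_k+Q_r\ge g(k,r)+g(r,k)=1$ on an overlap cell; in the paper it appears as $g(\bar d_s,s)\le g(i,s)$.

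Your bookkeeping actually proves more than the averaging inequality. The single change
\[
 m^2\bigl[\Phi_g(\boldsymbol c',\boldsymbol d)-\Phi_g(\boldsymbol c,\boldsymbol d)\bigr]
 =\sum_{k,r}\bigl(X_{kr}+Y_{rk}-\one_{\{(k,r)\in O\}}\bigr)(P'_k-P_k)-\sum_{(k,r)\in O}(P_k+Q_r-1)
\]
is already a sum of two nonpositive terms. So lowering only the $\boldsymbol c$-path never hurts, exactly as in the paper, which only ever decrements $\boldsymbol c$. The averaging step, the induction on overlap cells, and the corner-cell fallback are therefore unnecessary. Your route buys an explicit one-shot map with no termination argument, using only second-coordinate monotonicity and complementarity of $g$. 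The paper's unit decrement is the form it reuses in Lemma~\ref{lem:bp-canonical-reduction}, where the same move must also be checked at the level of doubled covers.
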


\begin{proof}
Write
\(\bar c_i:=c_{i-1}\), \(\bar d_i:=d_{i-1}\), and define
\[
  L_c(r):=|\{j\in[m]:\bar c_j\ge r\}|,
  \qquad
  L_d(r):=|\{j\in[m]:\bar d_j\ge r\}|.
\]
Since \(L_c(r)=m\ell_{\boldsymbol c}(r)\) and similarly for
\(\boldsymbol d\), multiplying
\eqref{eq:Phi} by \(m^2\) gives
\begin{equation}
\begin{aligned}
  m^2\Phi_g(\boldsymbol c,\boldsymbol d)
  ={}&m^2-\sum_i(\bar c_i+\bar d_i)\\
  &+\sum_i
  \Bigl[
    (\bar c_i+L_d(i))g(i,\bar c_i)
    +(\bar d_i+L_c(i))g(i,\bar d_i)
  \Bigr].
  \label{eq:direct-compact-Phi}
\end{aligned}
\end{equation}

Suppose \(X_{i_0t}Y_{ti_0}=1\).  Put
\(s:=\bar c_{i_0}\ge t\), and let \(i\) be the first index in the
plateau on which \(\bar c_i=s\).  Since \(\boldsymbol d\) is
nondecreasing,
\[
  \bar d_s\ge\bar d_t\ge i_0\ge i.
\]
Replace \(\bar c_i=s\) by \(s-1\).  Monotonicity of
\(\boldsymbol c\) is preserved, at least one transpose overlap is
removed, and no new overlap is created.

Let
\(\Delta_{is}:=g(i,s)-g(i,s-1)\ge0\).
By \eqref{eq:direct-compact-Phi}, the change in \(m^2\Phi_g\), new value
minus old value, is
\begin{align*}
  &1-g(i,s-1)
   -(s+L_d(i))\Delta_{is}
   -g(s,\bar d_s)\\
  &\qquad
   =g(\bar d_s,s)-g(i,s-1)
    -(s+L_d(i))\Delta_{is}\\
  &\qquad
   \le (1-s-L_d(i))\Delta_{is}
   \le0.
\end{align*}
The equality uses complementarity.  For the first inequality,
\(g\) is nonincreasing in its first coordinate and
\(\bar d_s\ge i\), hence
\[
  g(\bar d_s,s)
  \le g(i,s)
  =g(i,s-1)+\Delta_{is}.
\]

Repeating the operation terminates because
\(\sum_i\bar c_i\) strictly decreases.  Thus every pair in
\(\mathcal B_m^2\) can be transformed into one in
\(\mathcal P_m^\perp\) of no larger value.  Since
\(\mathcal P_m^\perp\subseteq\mathcal B_m^2\), the two minima are
equal.
\end{proof}

\begin{corollary}
\label{cor:uw-kernel-value}
\[
  \gamma_{\mathrm{UW},m}=V_m.
\]
\end{corollary}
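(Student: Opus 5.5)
The corollary should follow by chaining the two preceding lemmas and then passing through the outer maximization via the reflection bijection. By definition~\eqref{eq:finite-unweighted-value},
\[
  \gamma_{\mathrm{UW},m}
  =\max_{\widehat g\in\mathfrak G_{\mathrm{comp},m}}
   \min_{(\boldsymbol a,\boldsymbol b)\in\mathcal T_{\mathrm{UW},m}}
   \widehat\Gamma_m(\widehat g,\boldsymbol a,\boldsymbol b).
\]

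First, fix \(\widehat g\in\mathfrak G_{\mathrm{comp},m}\) and put \(g:=\mathscr R\widehat g\in\mathcal K_m\). Lemma~\ref{lem:direct-whole-cell-reflection} rewrites the inner minimum over \(\mathcal T_{\mathrm{UW},m}\) as the minimum of \(\Phi_g\) over \(\mathcal P_m^\perp\). Lemma~\ref{lem:uw-overlap-removal} then enlarges the domain to all of \(\mathcal B_m^2\) without changing the value. Together, for each \(\widehat g\),
\[
  \min_{(\boldsymbol a,\boldsymbol b)\in\mathcal T_{\mathrm{UW},m}}
   \widehat\Gamma_m(\widehat g,\boldsymbol a,\boldsymbol b)
  =\min_{\boldsymbol c,\boldsymbol d\in\mathcal B_m}
   \Phi_{\mathscr R\widehat g}(\boldsymbol c,\boldsymbol d).
\]

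Second, take the outer maximum. Because \(\mathscr R\) is a bijection from \(\mathfrak G_{\mathrm{comp},m}\) onto \(\mathcal K_m\), maximizing the right-hand side over \(\widehat g\) is the same as maximizing \(\min_{\mathcal B_m^2}\Phi_g\) over \(g\in\mathcal K_m\). That maximum is exactly \(V_m\) as defined in~\eqref{eq:Vm}.

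The only step needing care is checking that both maxima exist, so the identity holds as an equality of maxima and not merely of suprema. For each fixed table, the inner minimum is over a finite set (\(\mathcal T_{\mathrm{UW},m}\), respectively \(\mathcal B_m^2\)), so it is a minimum of finitely many continuous functions of the table and hence continuous. The outer domains \(\mathfrak G_{\mathrm{comp},m}\) and \(\mathcal K_m\) are compact polytopes in \([0,1]^{m\times m}\), being cut out by closed linear constraints. A continuous function on a compact set attains its maximum, so both maxima are attained. Even without attainment, equal suprema would already give \(\gamma_{\mathrm{UW},m}=V_m\).
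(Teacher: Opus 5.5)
Your proposal is correct and follows essentially the same route as the paper: for each fixed table, chain Lemma~\ref{lem:direct-whole-cell-reflection} with Lemma~\ref{lem:uw-overlap-removal}, then pass the outer maximum through the bijection \(\mathscr R\) between \(\mathfrak G_{\mathrm{comp},m}\) and \(\mathcal K_m\). Your check that both maxima are attained (the inner minimum is over a finite set of affine functions, and the table classes are compact polytopes) is a small detail the paper leaves implicit.
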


\begin{proof}
Lemma~\ref{lem:direct-whole-cell-reflection} identifies the two table classes, and
Lemma~\ref{lem:uw-overlap-removal} removes the only additional
restriction created by the reflection.  Maximizing over the
corresponding gain-sharing
tables gives the result.
\end{proof}

\subsection{The Mahdian--Yan program}

For completeness, we state the program used in
Theorem~\ref{thm:MY-equivalence}.
The index \(\ell\) is the cumulative active-rank row, \(r\) is the
passive rank, and \(p\) is an auxiliary state rank.  Let
\(x(\ell,r,p)\ge0\), and define
\[
  y(\ell,r,p)
  :=
  \sum_{h=1}^{\ell}x(h,r,p),
  \qquad
  W_{\ell r}
  :=
  \sum_{p=1}^m x(\ell,r,p).
\]
The program \(\polyLPp(m)\) is
\begin{equation}
  \text{minimize}\qquad
  \frac1m\sum_{\ell,r,p=1}^m x(\ell,r,p)
  \label{eq:MY-obj}
\end{equation}
subject to
\begin{align}
  y(\ell,r,\ell)+y(r,\ell,p)
  &\ge\frac1m
  &&(\ell,r,p\in[m]),
  \label{eq:MY-P1}\\
  y(\ell+1,r,p+1)
  &\ge y(\ell,r,p)
  &&(p\le\ell<m),
  \label{eq:MY-P2}\\
  y(\ell,r,p)
  &=y(\ell,r,\ell+1)
  &&(\ell<p),
  \label{eq:MY-P3}\\
  y(\ell+1,r,p)
  &\ge y(\ell,r,\ell+1)
  &&(p\le\ell<m),
  \label{eq:MY-P4}\\
  W_{\ell r}
  &=W_{r\ell}
  &&(\ell,r\in[m]).
  \label{eq:MY-P5}
\end{align}
We call \eqref{eq:MY-P1} the \emph{crossed-coverage constraints} and
\eqref{eq:MY-P5} the \emph{transpose-balance constraints}. The notation \(\polyLPp(m)\) follows Mahdian and Yan's original
formulation; we state the program in full here for self-containment.


\paragraph{The remaining balance--price step.}
The reflection argument above does not by itself prove
\[
  V_m=\operatorname{val}\bigl(\polyLPp(m)\bigr).
\]
That identity requires dualizing the transpose-balance equations and
normalizing the resulting complementary price table. These steps are carried out in
Appendix~\ref{app:balance-price}.


\begin{proof}[Proof of Theorem~\ref{thm:MY-equivalence}]
Corollary~\ref{cor:uw-kernel-value} gives $\gamma_{\mathrm{UW},m}=V_m$.
Corollary~\ref{cor:bp-balance-price} gives
$V_m=\operatorname{val}\bigl(\polyLPp(m)\bigr)$.
Combining the two identities proves
\[
  \gamma_{\mathrm{UW},m}
  =
  V_m
  =
  \operatorname{val}\bigl(\polyLPp(m)\bigr).
\]
\end{proof}

\begin{remark}[Scope of the correspondence]
Theorem~\ref{thm:MY-equivalence} is an equality of optimum values for
the exact whole-cell unweighted discretization and the Mahdian--Yan program
\(\polyLPp(m)\).  It is not a variable-by-variable isomorphism, and
it does not assert that every optimizer in
\(\mathfrak G_{\mathrm{comp},m}\) has the score-balanced
representation used in Theorem~\ref{thm:main}.
\end{remark}


\section{Balance--Price Duality}
\label{app:balance-price}

This appendix proves the second equality in
Theorem~\ref{thm:MY-equivalence},
\[
  V_m
  =
  \operatorname{val}\bigl(\polyLPp(m)\bigr).
\]
The argument has three steps:
\[
  \text{partial balance dualization}
  \longrightarrow
  \text{fixed-price contraction}
  \longrightarrow
  \text{monotone-price normalization}.
\]

Let \(\mathcal C_m\) be the feasible region defined by
\eqref{eq:MY-P1}--\eqref{eq:MY-P4}, the identities
\[
  y(\ell,r,p)=\sum_{h=1}^{\ell}x(h,r,p),
  \qquad
  x(\ell,r,p)\ge0,
\]
and no transpose-balance equations.  Define
\[
  \mathcal D_m
  :=
  \{F\in\R^{m\times m}:
    F_{\ell r}+F_{r\ell}=1
    \text{ for all }\ell,r\},
\]
and
\begin{equation}
  \mathscr R_m(F)
  :=
  \inf_{(x,y)\in\mathcal C_m}
  \frac2m\sum_{\ell,r,p}
  F_{\ell r}x(\ell,r,p).
  \label{eq:bp-R}
\end{equation}
For an arbitrary real \(F\), this infimum is allowed to be
\(-\infty\).

The monotone price class is
\begin{equation}
  \mathcal A_m
  :=
  \left\{
    F\in\mathcal D_m:
    \begin{array}{l}
      0\le F_{\ell r}\le1,\\
      F_{\ell r}\ge F_{\ell+1,r}\quad(\ell<m),\\
      F_{\ell r}\le F_{\ell,r+1}\quad(r<m)
    \end{array}
  \right\}.
  \label{eq:bp-A}
\end{equation}

\subsection{Partial balance dualization}

\begin{lemma}[Exact partial balance dual]
\label{lem:bp-partial-dual}
\[
  \operatorname{val}\bigl(\polyLPp(m)\bigr)
  =
  \sup_{F\in\mathcal D_m}\mathscr R_m(F).
\]
\end{lemma}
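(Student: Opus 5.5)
This identity is Lagrangian duality for the linear program \(\polyLPp(m)\), with only the transpose-balance equations \eqref{eq:MY-P5} moved into the objective. I would first rewrite the objective of \(\polyLPp(m)\) so that multipliers on \(W_{\ell r}-W_{r\ell}\) appear in the form \(\frac2m\sum F_{\ell r}x(\ell,r,p)\). For any \(F\in\mathcal D_m\), the identity \(F_{\ell r}+F_{r\ell}=1\) gives
\[
 \frac1m\sum_{\ell,r}W_{\ell r}
 =\frac1m\sum_{\ell,r}(F_{\ell r}+F_{r\ell})W_{\ell r}
 =\frac2m\sum_{\ell,r}F_{\ell r}W_{\ell r}
   -\frac1m\sum_{\ell,r}F_{\ell r}\bigl(W_{\ell r}-W_{r\ell}\bigr),
\]
where the last step reindexes \(\sum F_{r\ell}W_{\ell r}=\sum F_{\ell r}W_{r\ell}\). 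On the balanced region the correction term vanishes. Hence for every feasible point of \(\polyLPp(m)\), the objective equals \(\frac2m\sum F_{\ell r}x(\ell,r,p)\), which is at least \(\mathscr R_m(F)\). This proves weak duality: \(\operatorname{val}(\polyLPp(m))\ge\sup_{F\in\mathcal D_m}\mathscr R_m(F)\).

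For the reverse inequality I would invoke LP strong duality on the relaxed polyhedron \(\mathcal C_m\). Write \(\mathcal C_m\) as a finite system of linear inequalities and equations in \(x\), with \(y\) eliminated. Adjoin the equations \(W_{\ell r}-W_{r\ell}=0\) with free multipliers \(\lambda_{\ell r}\). Only antisymmetric combinations matter, so we may assume \(\lambda_{\ell r}=-\lambda_{r\ell}\). Standard partial Lagrangian duality for a feasible LP with finite value then gives
\[
 \operatorname{val}(\polyLPp(m))
 =\max_{\lambda}\ \inf_{x\in\mathcal C_m}
 \Bigl[\tfrac1m\sum W_{\ell r}+\sum_{\ell,r}\lambda_{\ell r}(W_{\ell r}-W_{r\ell})\Bigr].
\]
The supremum is attained because the full dual LP is attained. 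Setting \(F_{\ell r}:=\tfrac12+\tfrac m2(\lambda_{\ell r}-\lambda_{r\ell})\) makes the bracket equal \(\frac2m\sum F_{\ell r}W_{\ell r}\), and \(F\in\mathcal D_m\). Conversely every \(F\in\mathcal D_m\) arises this way, since we can take \(\lambda=\frac1m(F-\tfrac12)\). So the Lagrangian dual value is exactly \(\sup_{F\in\mathcal D_m}\mathscr R_m(F)\).

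The main obstacle is justifying strong duality, which needs \(\polyLPp(m)\) to be feasible and to have a finite value. Finiteness is immediate because the objective is a sum of nonnegative \(x\). For feasibility I would exhibit an explicit point: take \(x(1,r,p)=1/m\) for all \(r,p\) and zero elsewhere. Then \(y\equiv1/m\), so \eqref{eq:MY-P1}--\eqref{eq:MY-P4} hold (the cumulative \(y\) is constant in \(\ell\) and \(p\)). The row sums \(W_{1r}=1\) and \(W_{\ell r}=0\) for \(\ell>1\) violate \eqref{eq:MY-P5}, though, so this point must be symmetrized. I would instead use \(x(\ell,r,p)=c\,\one\{\ell=r\}\), giving \(y(\ell,r,p)=c\,\one\{r\le\ell\}\). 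With this choice \eqref{eq:MY-P3} holds, \eqref{eq:MY-P2} and \eqref{eq:MY-P4} hold because \(y\) is nondecreasing in \(\ell\) and constant in \(p\), and \eqref{eq:MY-P5} holds because \(W\) is diagonal. Constraint \eqref{eq:MY-P1} holds once \(c\ge1/m\), since either \(r\le\ell\) or \(\ell<r\) activates one of the two terms. With feasibility in hand, finite-dimensional LP duality makes the equality exact, and the values for infeasible-direction \(F\) are \(-\infty\), which is harmless in the supremum.
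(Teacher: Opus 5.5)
Your proposal is correct and follows essentially the same route as the paper: you dualize only the transpose-balance equations with free (antisymmetrizable) multipliers, reparametrize them as \(F\in\mathcal D_m\) so that the Lagrangian becomes \(\frac2m\sum F_{\ell r}x(\ell,r,p)\), and invoke finite-dimensional LP strong duality from feasibility and the nonnegative objective. The differences are cosmetic: you add an explicit weak-duality computation, and you use the diagonal feasibility witness \(x(\ell,r,p)=\frac1m\one\{\ell=r\}\) where the paper uses the simpler point \(x\equiv1\).
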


\begin{proof}
For each \(\ell<r\), dualize the independent balance equation
\(W_{\ell r}-W_{r\ell}=0\) with a free multiplier
\(\eta_{\ell r}\).  For notational convenience extend the
multipliers antisymmetrically by
\[
  \eta_{r\ell}:=-\eta_{\ell r},
  \qquad
  \eta_{\ell\ell}:=0.
\]
This introduces no additional degrees of freedom.  The partial
Lagrangian over \(\mathcal C_m\) is
\[
  \mathcal L(x,\eta)
  =
  \frac1m\sum_{\ell,r}W_{\ell r}
  -
  \sum_{\ell<r}\eta_{\ell r}
  (W_{\ell r}-W_{r\ell}).
\]
Since
\[
  \sum_{\ell<r}\eta_{\ell r}
  (W_{\ell r}-W_{r\ell})
  =
  \sum_{\ell,r}\eta_{\ell r}W_{\ell r},
\]
we may write
\[
  \mathcal L(x,\eta)
  =
  \sum_{\ell,r}
  \left(\frac1m-\eta_{\ell r}\right)W_{\ell r}.
\]

Set
\[
  F_{\ell r}:=\frac12-\frac m2\eta_{\ell r}.
\]
Then \(F_{\ell r}+F_{r\ell}=1\), and this parametrizes
\(\mathcal D_m\) bijectively.  Since
\(\frac1m-\eta_{\ell r}=\frac2mF_{\ell r}\) and
\(W_{\ell r}=\sum_p x(\ell,r,p)\),
\[
  \mathcal L(x,\eta)
  =
  \frac2m\sum_{\ell,r,p}
  F_{\ell r}x(\ell,r,p).
\]
Hence
\[
  \inf_{(x,y)\in\mathcal C_m}\mathcal L(x,\eta)
  =
  \mathscr R_m(F).
\]

It remains only to justify that the partial Lagrangian dual is exact.
The deleted constraints are linear equalities in a finite-dimensional
LP.  Maximizing over their free multipliers together with the dual
variables for the constraints defining \(\mathcal C_m\) recovers the
full dual of \(\polyLPp(m)\).  The primal is feasible, for example at
\(x(\ell,r,p)=1\), and its nonnegative objective is bounded below.
LP strong duality therefore gives the claim.
\end{proof}

\subsection{Fixed-price contraction}

We next show that, for a monotone price \(F\), the fixed-price LP
contracts exactly to the two-path optimization problem.

For \(F\in\mathcal A_m\), define the reflected table
\begin{equation}
  \vartheta_{qi}
  :=
  F_{m+1-q,m+1-i},
  \qquad q,i\in[m],
  \qquad
  \vartheta_{0i}:=0.
  \label{eq:bp-reflection}
\end{equation}
Then \(\vartheta\) is nondecreasing in \(q\), nonincreasing in \(i\),
and satisfies \(\vartheta_{qi}+\vartheta_{iq}=1\).  Hence
\(g(i,q):=\vartheta_{qi}\), with \(g(i,0):=0\), belongs to
\(\mathcal K_m\), and this reflection bijects
\(\mathcal A_m\) with \(\mathcal K_m\).

Put \(F_{m+1,r}:=0\) and define
\begin{equation}
  d_{\ell r}:=F_{\ell r}-F_{\ell+1,r}\ge0,
  \qquad
  \Delta_{qi}:=\vartheta_{qi}-\vartheta_{q-1,i}\ge0.
  \label{eq:bp-Delta}
\end{equation}
The two increments are related by
\(d_{\ell r}=\Delta_{m+1-\ell,m+1-r}\).

Using
\(x(\ell,r,p)=y(\ell,r,p)-y(\ell-1,r,p)\),
summation by parts gives
\begin{equation}
  \frac2m\sum_{\ell,r,p}
  F_{\ell r}x(\ell,r,p)
  =
  \frac2m\sum_{\ell,r,p}
  d_{\ell r}y(\ell,r,p).
  \label{eq:bp-sbp}
\end{equation}
Scale \(v:=my\).  The constraints on \(v\) are order/equality
constraints together with covers \(v_u+v_w\ge1\).  Since
\(d_{\ell r}\ge0\), coordinatewise truncation
\(v\mapsto\min\{v,1\}\) preserves feasibility and cannot increase the
cost, so we may impose \(0\le v\le1\).

Duplicate every variable into copies \(A\) and \(B\), duplicate every
order and equality constraint, and replace every cover \(v_u+v_w\ge1\) by
\[
  A_u+B_w\ge1,
  \qquad
  B_u+A_w\ge1.
\]
We call these the \emph{crossed-cover constraints}.  Giving each copy
half of the original objective coefficient preserves the optimum:
\(v\) gives \(A=B=v\), while a feasible pair gives
\(v=(A+B)/2\).

After setting \(Z:=1-B\), all order and crossed-cover constraints are
directed implications between variables in \([0,1]\).  Their matrix
is a directed node--arc incidence matrix with unit bound rows and is
totally unimodular.  Since all right-hand sides are integral, the entire bounded
doubled-cover polytope is integral.  Hence every feasible point is a
finite convex combination of binary feasible points; in particular,
a binary optimum exists.  Let
\(\mathfrak C_m^{(2)}\) denote the binary feasible set; we call its
elements \emph{binary doubled covers}.  Equation~\eqref{eq:bp-sbp}
becomes
\begin{equation}
  m^2\mathscr R_m(F)
  =
  \min_{(A,B)\in\mathfrak C_m^{(2)}}
  \sum_{\ell,r,p}d_{\ell r}
  \bigl(A(\ell,r,p)+B(\ell,r,p)\bigr).
  \label{eq:bp-cover-value}
\end{equation}

We now describe the structure of a binary doubled cover.  Each
variable \(X(\ell,r,p)\), \(X\in\{A,B\}\), is called a
\emph{state}.  A state is \emph{occupied} if it equals \(1\), and
the \(X\)-row \((\ell,r)\) is \emph{full} if
\(X(\ell,r,p)=1\) for every \(p\in[m]\).

Fix the passive rank \(r\) and one copy, and suppress both from the
notation.  Constraint~\eqref{eq:MY-P3} identifies all states
\((\ell,p)\) with \(p>\ell\); call their common value
\(\tau_\ell\), and write \(\xi_{\ell p}\) for \(p\le\ell\).
For two states \(u,v\), write \(u\preceq v\) if the order constraints
imply \(u\le v\).  The partial order is generated by
\begin{align}
  \xi_{\ell p}&\preceq\xi_{\ell+1,p},&
  \xi_{\ell p}&\preceq\xi_{\ell+1,p+1}
  &&(p\le\ell<m),\notag\\
  \tau_\ell&\preceq u
  &&\text{for every state \(u\) in row \(\ell+1\)}.
  \label{eq:bp-state-poset}
\end{align}
For a set of states \(S\), let \(\operatorname{cl}(S)\) be its
upward closure under \(\preceq\).  In particular,
\begin{equation}
  \operatorname{cl}(\{\xi_{kk}\})
  \cap\{\text{states in row }\ell\}
  =
  \{\xi_{\ell p}:k\le p\le\ell\}
  \qquad(\ell\ge k).
  \label{eq:bp-diagonal-cone}
\end{equation}
We call \(\xi_{kk}\) a \emph{diagonal generator} and its closure a
\emph{diagonal cone}.  By
\eqref{eq:bp-diagonal-cone}, a diagonal cone contains no tail state;
in contrast, the closure of a full row contains every state in every
later row.

For binary \(A,B\), define
\[
  \overline A_{\ell r}
  :=
  \min_{p\in[m]}A(\ell,r,p),
  \qquad
  \overline B_{\ell r}
  :=
  \min_{p\in[m]}B(\ell,r,p).
\]
Thus \(\overline A_{\ell r}=1\) exactly when the \(A\)-row
\((\ell,r)\) is full.  The crossed-cover constraints are equivalent
to
\begin{equation}
  A(\ell,r,\ell)\vee\overline B_{r\ell}=1,
  \qquad
  B(\ell,r,\ell)\vee\overline A_{r\ell}=1.
  \label{eq:bp-crossed-cover}
\end{equation}

For each passive rank \(r\), define the generator sets
\[
\begin{aligned}
  S_A(r)
  &:=
  \bigcup_{\ell:\,\overline A_{\ell r}=1}
  \{(\ell,p):p\in[m]\}
  \;\cup\;
  \{(\ell,\ell):\overline B_{r\ell}=0\},\\
  S_B(r)
  &:=
  \bigcup_{\ell:\,\overline B_{\ell r}=1}
  \{(\ell,p):p\in[m]\}
  \;\cup\;
  \{(\ell,\ell):\overline A_{r\ell}=0\}.
\end{aligned}
\]
Replace, for each \(r\), the occupied states in the \(A\)- and
\(B\)-copies by
\(\operatorname{cl}(S_A(r))\) and
\(\operatorname{cl}(S_B(r))\), respectively.  Every generator was
already occupied in the original cover by
\eqref{eq:bp-crossed-cover}, and the original occupied sets were
upward closed.  Hence the new cover is componentwise contained in
the original one.  It remains feasible and has the same full-row
generators.  We call such a cover \emph{canonical}.

Reflect the row and passive-rank coordinates by
\(t:=m+1-\ell\) and \(i:=m+1-r\).  For each \(i\), the full-row
generators form a prefix in \(t\).  Let \(a_i\) and \(b_i\) be their
respective prefix lengths:
\begin{equation}
\begin{aligned}
  t\le a_i
  &\iff
  \text{the reflected \(A\)-row \((t,i)\) is a full-row generator},\\
  t\le b_i
  &\iff
  \text{the reflected \(B\)-row \((t,i)\) is a full-row generator}.
\end{aligned}
  \label{eq:bp-generator-paths}
\end{equation}
We call these rows \emph{declared full}; a row is
\emph{physically full} if all of its states are occupied after taking
the order closure.

For \(c\in\{0,\ldots,m\}^m\), define
\begin{equation}
  z_i(c)
  :=
  \max\bigl(\{j\in[m]:c_j<i\}\cup\{0\}\bigr).
  \label{eq:bp-z}
\end{equation}
If \(b_j<i\), then the reflected \(B\)-row \((i,j)\) is not declared
full, and \eqref{eq:bp-crossed-cover} forces the corresponding
\(A\)-diagonal state to equal \(1\).  Hence \(z_i(b)\) is the largest
position of an \(A\)-diagonal generator forced by the \(B\)-copy.

Let
\[
  \mathcal R^A_{ti}(a,b)
  :=
  \left\{
    p\in[m]:
    A(m+1-t,m+1-i,p)=1
  \right\}.
\]
Equation~\eqref{eq:bp-diagonal-cone} gives the exact row structure
\begin{equation}
  \mathcal R^A_{ti}(a,b)
  =
  \begin{cases}
    [m],
      &t\le a_i,\\[1mm]
    \{m+1-z_i(b),\ldots,m+1-t\},
      &a_i<t\le z_i(b),\\[1mm]
    \varnothing,
      &t>\max\{a_i,z_i(b)\}.
  \end{cases}
  \label{eq:bp-row-set}
\end{equation}
The \(B\)-row formula is obtained by interchanging \(a\) and \(b\).
In particular,
\begin{equation}
  C^A_{ti}(a,b)
  :=
  |\mathcal R^A_{ti}(a,b)|
  =
  m\one_{\{t\le a_i\}}
  +\one_{\{t>a_i\}}\bigl(z_i(b)-t+1\bigr)^+.
  \label{eq:bp-row-count}
\end{equation}

Since \(d_{\ell r}=\Delta_{m+1-\ell,m+1-r}\), the cost of the
canonical cover is
\begin{equation}
  \mathcal C_\vartheta(a,b)
  =
  \sum_{i=1}^m
  \left[
    H_i(a_i,z_i(b))
    +H_i(b_i,z_i(a))
  \right],
  \label{eq:bp-canonical-cost}
\end{equation}
where
\begin{equation}
  H_i(x,z)
  :=
  m\vartheta_{xi}
  +\sum_{t=x+1}^{z}(z-t+1)\Delta_{ti},
  \label{eq:bp-H}
\end{equation}
and empty sums are zero.

Conversely, given any pair \(a,b\in\{0,\ldots,m\}^m\), declare the
rows \(t\le a_i\) and \(t\le b_i\), add the diagonal generators
forced by \eqref{eq:bp-crossed-cover}, and take upward closures.
The resulting pair satisfies the order and crossed-cover constraints,
hence is a feasible canonical doubled cover with
\eqref{eq:bp-row-set}--\eqref{eq:bp-canonical-cost}.

We now identify nondecreasing generator sequences with the active
coordinates of the paths in \(\mathcal B_m\).  Let
\begin{equation}
  \mathcal G_m
  :=
  \{a=(a_1,\ldots,a_m)\in\{0,\ldots,m\}^m:
    a_1\le\cdots\le a_m\}.
  \label{eq:bp-G}
\end{equation}
For \(a\in\mathcal G_m\), let
\(\boldsymbol a\in\mathcal B_m\) be given by
\(\boldsymbol a_{i-1}=a_i\) for \(i\in[m]\) and
\(\boldsymbol a_m=m\).  Put
\[
  L_a(i)
  :=
  |\{j\in[m]:a_j\ge i\}|.
\]
Then \(L_a(i)=m\ell_{\boldsymbol a}(i)\).
Writing
\(\Phi_\vartheta(a,b):=\Phi_g(\boldsymbol a,\boldsymbol b)\),
equation~\eqref{eq:Phi} becomes
\begin{equation}
\begin{aligned}
  m^2\Phi_\vartheta(a,b)
  ={}&m^2-\sum_i(a_i+b_i)\\
  &+\sum_i
  \left[
    (a_i+L_b(i))\vartheta_{a_i,i}
    +(b_i+L_a(i))\vartheta_{b_i,i}
  \right].
  \label{eq:bp-compact-Phi}
\end{aligned}
\end{equation}

For \(a,b\in\mathcal G_m\), define the transpose-overlap set
\begin{equation}
  \mathcal O(a,b)
  :=
  \{(i,t)\in[m]^2:t\le a_i,\ i\le b_t\}.
  \label{eq:bp-overlap-free}
\end{equation}
We call \((a,b)\) \emph{transpose-overlap-free} if
\(\mathcal O(a,b)=\varnothing\).

\begin{lemma}[Canonical reduction and cost identity]
\label{lem:bp-canonical-reduction}
Every canonical doubled cover has a componentwise contained canonical
representative whose generator sequences belong to
\(\mathcal G_m\) and are transpose-overlap-free.  Moreover, a
minimizer of \(\Phi_\vartheta\) over \(\mathcal G_m^2\) may be chosen
transpose-overlap-free, and every transpose-overlap-free
\(a,b\in\mathcal G_m\) satisfies
\[
  \mathcal C_\vartheta(a,b)
  =
  m^2\Phi_\vartheta(a,b).
\]
\end{lemma}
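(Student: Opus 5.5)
The plan has three parts: a monotonization step, an overlap-removal step, and a direct cost computation.

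\emph{Monotonization.} Start from a canonical doubled cover with generator sequences \(a,b\in\{0,\ldots,m\}^m\), defined through \eqref{eq:bp-generator-paths}. If \(a_i>a_{i+1}\) for some \(i\), lower \(a_i\) to \(a_{i+1}\), i.e.\ undeclare the reflected rows \((t,i)\) with \(a_{i+1}<t\le a_i\).

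\begin{itemize}
\item Feasibility is the issue: undeclaring a row may force new diagonal generators through \eqref{eq:bp-crossed-cover}.
\item The claim to prove is that each such diagonal state is already occupied in the original cover. In unreflected coordinates, \(i<i+1\) means a larger passive rank \(r\), and a full \(A\)-row \((\ell,r)\) should imply, via the constraints on \(y\), that the corresponding states at neighbouring passive ranks are occupied.
\item I expect this to be the main obstacle. The order constraints \eqref{eq:MY-P2}--\eqref{eq:MY-P4} only link rows \(\ell\) and \(\ell+1\) at the \emph{same} \(r\). The cross-\(r\) information must therefore come from the crossed-cover constraints combined with the row structure \eqref{eq:bp-row-set}.
\item The first thing I would try is to show directly from \eqref{eq:bp-row-set} that a declared-full row not forced by the prefix property can be replaced by its diagonal-cone part at no extra cost.
\end{itemize}

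Each such step yields a componentwise-contained canonical cover, and \(\sum_i(a_i+b_i)\) strictly decreases, so the procedure terminates with \(a,b\in\mathcal G_m\).

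\emph{Overlap removal.} For monotone \(a,b\), I would transport Lemma~\ref{lem:uw-overlap-removal} through the identification \(\Phi_\vartheta(a,b)=\Phi_g(\boldsymbol a,\boldsymbol b)\). Under the reflection \(\bar c_i=c_{i-1}=a_i\), the overlap set \eqref{eq:bp-overlap-free} is exactly the condition \(X_{it}Y_{ti}=1\). The plateau-decrement argument there shows that some minimizer of \(\Phi_\vartheta\) over \(\mathcal G_m^2\) can be chosen transpose-overlap-free. On the cover side, the analogous decrement of \(a_i\) at the start of a plateau produces a contained cover. Here I must check that the diagonal generator newly forced at \(t=a_i\) is already occupied because \(i\le b_t\). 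This is exactly the overlap hypothesis, since the \(B\)-row \((t,i)\) is declared full.

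\emph{Cost identity.} For overlap-free monotone \(a,b\), I would compute \(\mathcal C_\vartheta(a,b)\) from \eqref{eq:bp-canonical-cost}. Telescoping \eqref{eq:bp-H} by summation by parts gives
\[
H_i(x,z)=m\vartheta_{xi}+\sum_{t=x+1}^{z}(\vartheta_{ti}-\vartheta_{xi})
=(m-(z-x)^+)\vartheta_{xi}+\sum_{t=x+1}^z\vartheta_{ti}.
\]
Overlap-freeness should make \(z_i(b)=\max\{j:b_j<i\}=m-L_b(i)\) when \(b\) is monotone. The \(\sum_t\vartheta_{ti}\) terms should then be converted, via complementarity \(\vartheta_{ti}=1-\vartheta_{it}\), into the transpose terms of \eqref{eq:bp-compact-Phi}. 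Overlap-freeness ensures that the index ranges match: the cells with \(a_i<t\le z_i(b)\) are exactly those in neither region. Summing and collecting terms should reproduce \(m^2-\sum_i(a_i+b_i)+\sum_i[(a_i+L_b(i))\vartheta_{a_i,i}+(b_i+L_a(i))\vartheta_{b_i,i}]\). This is routine bookkeeping once the ranges are aligned.
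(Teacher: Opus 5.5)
Your monotonization step has a genuine gap, exactly where you flag the ``main obstacle.'' You correctly note that undeclaring the reflected rows \((t,i)\), \(a_{i+1}<t\le a_i\), forces new \(B\)-diagonal generators through \eqref{eq:bp-crossed-cover}, and that these must already be occupied. But you do not prove this, and the mechanism you suggest does not exist. There is no way for a full \(A\)-row at one passive rank to force occupancy at neighbouring passive ranks through the constraints on \(y\): \eqref{eq:MY-P2}--\eqref{eq:MY-P4} never link different \(r\), and a crossed cover only links \((\ell,r)\) in one copy with \((r,\ell)\) in the other.

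The missing idea is the following. In reflected \(B\)-column \(q\), the forced generators sit at the positions \(\{j:a_j<q\}\). By \eqref{eq:bp-diagonal-cone} the diagonal cones are nested: the cone at a larger reflected position contains the cones at all smaller ones. Hence the occupied \(B\)-states depend on \(a\) only through \(z_q(a)\). Undeclaring \((t,i)\) adds a generator at position \(i\) in column \(t\). But since \(a_{i+1}<t\), the \(A\)-row \((t,i+1)\) is not declared full, so position \(i+1\) is already forced in that column, and its cone contains the new generator. Thus \(z_t(a)\) and the whole \(B\)-closure are unchanged, while the \(A\)-closure only shrinks.

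The paper does this in one shot: the suffix minimum \(\widehat a_i:=\min_{j\ge i}a_j\) satisfies \(z_q(\widehat a)=z_q(a)\) for every \(q\). This is also where your iterated local step ends up. Your fallback of replacing a row by its diagonal-cone part ``at no extra cost'' targets the wrong quantity: the issue is containment in the opposite copy, not cost in the current one.

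The other two parts are sound. Transporting Lemma~\ref{lem:uw-overlap-removal} through \(g(i,q)=\vartheta_{qi}\in\mathcal K_m\) is legitimate, since \(\mathcal O(a,b)\) is exactly that lemma's overlap condition \(X_{it}Y_{ti}=1\). It also spares you the paper's repetition of the same plateau computation.

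On the cover side, take \((i_0,t)\in\mathcal O(a,b)\), put \(s:=a_{i_0}\), and let \(i\) start the plateau \(a_i=s\). The row that must already be declared full is the reflected \(B\)-row \((i,s)\) in column \(s\). This holds because \(b_s\ge b_t\ge i_0\ge i\), which uses monotonicity of \(b\) and the choice of plateau start, not the overlap hypothesis alone.

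In the cost identity, \(z_i(b)=m-L_b(i)\) holds for every nondecreasing \(b\). Overlap-freeness is what gives \(a_i\le z_i(b)\), so \((z-x)^+=z-x\). The two residual sums then run over \(\mathcal Q=\{(q,i):a_i<q,\ b_q<i\}\) and its transpose. Complementarity collapses them to \(|\mathcal Q|\), and \(|\mathcal Q|=m^2-\sum_i(a_i+b_i)\) follows from \(\sum_iL_b(i)=\sum_ib_i\).
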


\begin{proof}
\emph{Monotone generators.}
For an arbitrary sequence \(a\), define
\[
  \widehat a_i:=\min_{j\ge i}a_j.
\]
Then \(\widehat a\) is nondecreasing, and
\(z_q(\widehat a)=z_q(a)\) for every \(q\in[m]\).  Indeed, if
\(z=z_q(a)\), then \(\widehat a_i<q\) holds exactly for \(i\le z\).
Thus replacing \(a\) by \(\widehat a\) leaves the largest diagonal
cone forced at every height unchanged, while it can only remove
\(A\)-row generators and their upward closures.  The resulting cover
is therefore componentwise contained in the original one.  Apply the
same operation to \(b\).

\emph{Overlap removal.}
Suppose \((i_0,t)\in\mathcal O(a,b)\), and put
\(s:=a_{i_0}\ge t\).  Since \(b\) is nondecreasing,
\(b_s\ge b_t\ge i_0\).  Let \(i\) be the first index in the plateau
\(a_i=s\), and replace \(a_i=s\) by \(s-1\).
Monotonicity is preserved.

At the cover level, the reflected \(A\)-row \((s,i)\) ceases to be
declared full.  The only additional diagonal requirement that could
arise from \eqref{eq:bp-crossed-cover} lies in the transposed
\(B\)-row \((i,s)\).  But \(b_s\ge i\), so that row is already
declared full.  Hence no new occupied state is introduced, while one
\(A\)-row generator is removed.  After taking closures again, the new
canonical cover is componentwise contained in the old one.  No new
overlap is created because only \(a_i\) decreases.  Repetition
terminates because \(\sum_i a_i\) decreases.

For the compact path objective, the same update changes only \(a_i\)
and \(L_a(s)\), the latter decreasing by one.  The change in
\(m^2\Phi_\vartheta\), new value minus old value, is
\begin{align*}
  &1-\vartheta_{s-1,i}
   -(s+L_b(i))\Delta_{si}
   -\vartheta_{b_s,s}\\
  &\qquad
   =\vartheta_{s,b_s}-\vartheta_{s-1,i}
    -(s+L_b(i))\Delta_{si}\\
  &\qquad
   \le(1-s-L_b(i))\Delta_{si}
   \le0.
\end{align*}
The equality uses complementarity, while
\(b_s\ge i\) and monotonicity in the second coordinate give
\[
  \vartheta_{s,b_s}
  \le\vartheta_{s,i}
  =\vartheta_{s-1,i}+\Delta_{si}.
\]
Thus a minimizing pair may be chosen transpose-overlap-free.

\emph{Cost identity.}
For nondecreasing \(a,b\),
\[
  z_i(b)=m-L_b(i),
  \qquad
  z_i(a)=m-L_a(i).
\]
The condition \(\mathcal O(a,b)=\varnothing\) is equivalent to
\[
  a_i\le m-L_b(i),
  \qquad
  b_i\le m-L_a(i)
  \qquad(i\in[m]).
\]
Summation by parts in \eqref{eq:bp-H} gives
\begin{align}
  \mathcal C_\vartheta(a,b)
  =\sum_i\Biggl[
  &(a_i+L_b(i))\vartheta_{a_i,i}
   +\sum_{q=a_i+1}^{m-L_b(i)}\vartheta_{qi}\notag\\
  &+(b_i+L_a(i))\vartheta_{b_i,i}
   +\sum_{q=b_i+1}^{m-L_a(i)}\vartheta_{qi}
  \Biggr].
  \label{eq:bp-cost-telescoped}
\end{align}

Let
\[
  \mathcal Q
  :=
  \{(q,i)\in[m]^2:
    a_i<q\le m-L_b(i)\}.
\]
Since \(b\) is nondecreasing,
\[
  (q,i)\in\mathcal Q
  \iff
  a_i<q\ \text{ and }\ b_q<i.
\]
Thus the two residual sums in
\eqref{eq:bp-cost-telescoped} are indexed by
\(\mathcal Q\) and \(\mathcal Q^{\mathsf T}\), respectively.
Complementarity gives
\[
  \sum_{(q,i)\in\mathcal Q}\vartheta_{qi}
  +
  \sum_{(q,i)\in\mathcal Q^{\mathsf T}}\vartheta_{qi}
  =
  |\mathcal Q|.
\]
Finally, double counting the pairs
\(\{(i,j):i\le b_j\}\) gives
\(\sum_iL_b(i)=\sum_i b_i\), and therefore
\[
  |\mathcal Q|
  =
  \sum_i(m-L_b(i)-a_i)
  =
  m^2-\sum_i(a_i+b_i).
\]
Comparison with \eqref{eq:bp-compact-Phi} proves the cost identity.
\end{proof}

\begin{theorem}[Fixed-price contraction]
\label{thm:appendix-fixed-price}
For every \(F\in\mathcal A_m\), define
\(g(i,q):=F_{m+1-q,m+1-i}\) for \(i,q\in[m]\), with
\(g(i,0):=0\).  Then
\[
  \mathscr R_m(F)
  =
  \min_{\boldsymbol a,\boldsymbol b\in\mathcal B_m}
  \Phi_g(\boldsymbol a,\boldsymbol b).
\]
\end{theorem}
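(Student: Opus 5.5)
The proof assembles the chain of reductions already built in this appendix into two inequalities. The chain runs from $\mathscr R_m(F)$ through the binary doubled-cover value \eqref{eq:bp-cover-value}, then canonical covers, and finally to the path objective.

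We first record the common starting point. By the summation by parts \eqref{eq:bp-sbp}, truncation to $0\le v\le 1$, the $A/B$ doubling, and total unimodularity, we have
\[
 m^2\mathscr R_m(F)=\min_{(A,B)\in\mathfrak C_m^{(2)}}\sum d_{\ell r}(A+B).
\]
Since $d_{\ell r}\ge0$, the cost is monotone under componentwise containment. So we may restrict the minimum to canonical covers: passing to $\operatorname{cl}(S_A(r)),\operatorname{cl}(S_B(r))$ keeps feasibility and only removes occupied states. Every canonical cover is determined by its generator sequences $(a,b)\in\{0,\ldots,m\}^{2m}$, and its cost is $\mathcal C_\vartheta(a,b)$ by \eqref{eq:bp-row-count}--\eqref{eq:bp-canonical-cost}. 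Conversely, every pair $(a,b)$ yields a feasible canonical cover of that cost. Hence
\[
 m^2\mathscr R_m(F)=\min_{a,b}\mathcal C_\vartheta(a,b).
\]

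\emph{Upper bound.} Let $(\boldsymbol a,\boldsymbol b)\in\mathcal B_m^2$ minimize $\Phi_g$. Its active coordinates give $a,b\in\mathcal G_m$. By Lemma~\ref{lem:bp-canonical-reduction}, the minimizer over $\mathcal G_m^2$ may be taken transpose-overlap-free. For such a pair, the canonical cover built from $(a,b)$ has cost $\mathcal C_\vartheta(a,b)=m^2\Phi_\vartheta(a,b)$. This gives $m^2\mathscr R_m(F)\le m^2\min\Phi_g$.

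\emph{Lower bound.} Take an optimal canonical cover. By the first part of Lemma~\ref{lem:bp-canonical-reduction}, it contains a canonical representative whose generators lie in $\mathcal G_m$ and are overlap-free. By monotonicity of the cost, this representative is still optimal. Its cost is $m^2\Phi_\vartheta(a,b)\ge m^2\min_{\mathcal B_m^2}\Phi_g$. Note that the identification $\Phi_\vartheta(a,b)=\Phi_g(\boldsymbol a,\boldsymbol b)$ uses $g(i,q)=\vartheta_{qi}=F_{m+1-q,m+1-i}$, and $\mathcal G_m$ is in bijection with $\mathcal B_m$ via $\boldsymbol a_{i-1}=a_i$, $\boldsymbol a_m=m$.

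The main obstacle is that the reduction to canonical covers must be justified for an optimal cover, not merely a feasible one. Concretely, one must check that every binary feasible cover dominates the canonical cover generated by its own full rows and forced diagonals. The crossed-cover reformulation \eqref{eq:bp-crossed-cover} provides this: every generator is occupied in the original cover, and the occupied set is upward closed. One must also check that the canonical cost formula \eqref{eq:bp-canonical-cost} correctly counts the rows \eqref{eq:bp-row-count} weighted by $d_{\ell r}=\Delta_{m+1-\ell,m+1-r}$. The latter is routine telescoping, so I would spend the care on the domination argument and on the boundary conventions $\vartheta_{0i}=0$ and $F_{m+1,r}=0$.
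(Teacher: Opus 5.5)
Your proposal is correct and follows essentially the same route as the paper. For $\mathscr R_m(F)\ge\min\Phi_g$ you take an optimal binary doubled cover and pass to a componentwise-contained canonical cover with monotone, transpose-overlap-free generators via Lemma~\ref{lem:bp-canonical-reduction}, using $d_{\ell r}\ge0$; for the reverse inequality you realize an overlap-free minimizer of $\Phi_\vartheta$ by its canonical cover and apply the cost identity, concluding with the $\mathcal G_m\leftrightarrow\mathcal B_m$ bijection (your intermediate identity $m^2\mathscr R_m(F)=\min_{a,b}\mathcal C_\vartheta(a,b)$ is a harmless extra step consistent with the paper's setup).
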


\begin{proof}
Let \(\vartheta\) be the reflected table associated with \(F\).
Start with a minimum doubled cover in
\eqref{eq:bp-cover-value} and replace it by its canonical reduction.
Lemma~\ref{lem:bp-canonical-reduction} gives a componentwise
contained canonical cover with nondecreasing,
transpose-overlap-free generators \(a,b\in\mathcal G_m\).
Because all \(d_{\ell r}\ge0\), its cost cannot increase, and the
cost identity gives
\[
  m^2
  \min_{a,b\in\mathcal G_m}\Phi_\vartheta(a,b)
  \le
  m^2\mathscr R_m(F).
\]

Conversely, take a minimizer of \(\Phi_\vartheta\) over
\(\mathcal G_m^2\) and use
Lemma~\ref{lem:bp-canonical-reduction} to remove all transpose
overlaps without increasing its value.  The canonical construction
above gives a feasible doubled cover of cost
\(m^2\Phi_\vartheta(a,b)\).  Equation~\eqref{eq:bp-cover-value}
gives the reverse inequality.  The bijection between
\(\mathcal G_m\) and \(\mathcal B_m\) completes the proof.
\end{proof}

\subsection{Minimax formulation of monotone prices}

We next evaluate the maximization over monotone price tables
explicitly.  The resulting expression consists of the total aggregate
mass together with an explicit penalty \(\sigma(\delta)\) determined
by the differences between transposed entries of the aggregate
matrix.

\begin{lemma}[Compact price domain]
\label{lem:bp-finite-price}
For an arbitrary real table \(F\),
\[
  \mathscr R_m(F)>-\infty
  \iff
  F_{\ell r}\ge0
  \quad(\ell,r\in[m]).
\]
Consequently, for \(F\in\mathcal D_m\), finiteness is equivalent to
\(0\le F\le1\).  For every \(F\ge0\), the infimum defining
\(\mathscr R_m(F)\) is unchanged if one imposes
\(0\le y(\ell,r,p)\le1/m\).
\end{lemma}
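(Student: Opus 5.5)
The proof splits into necessity, sufficiency, and the box restriction; all three use only the structure of \(\mathcal C_m\).

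\textbf{Necessity.} Suppose some \(F_{\ell_0 r_0}<0\). The plan is to exhibit a feasible ray along which the objective \(\frac2m\sum F_{\ell r}x(\ell,r,p)\) tends to \(-\infty\). Start from a feasible point, for instance \(x\equiv 1\), which the proof of Lemma~\ref{lem:bp-partial-dual} already notes is feasible. Add \(\lambda>0\) to \(x(\ell_0,r_0,p)\) for every \(p\). This shifts \(y(\ell,r_0,p)\) by \(\lambda\) for all \(\ell\ge\ell_0\) and all \(p\), and leaves every other \(y\) unchanged.

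We then check each constraint:
\begin{itemize}
\item The covers \eqref{eq:MY-P1} only involve \(y\) values, which weakly increase, so they still hold.
\item The order constraints \eqref{eq:MY-P2} and \eqref{eq:MY-P4} compare a row \(\ell+1\) on the left with a row \(\ell\) on the right, at the same passive rank \(r\). A uniform shift of all rows \(\ge\ell_0\) at \(r_0\) raises the left side whenever it raises the right side. So these are preserved.
\item The equalities \eqref{eq:MY-P3} stay within a single row \(\ell\), where the shift is uniform in \(p\). So they are preserved too.
\end{itemize}
The objective changes by \(\frac2m F_{\ell_0r_0}\,m\lambda\to-\infty\), which gives \(\mathscr R_m(F)=-\infty\).

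\textbf{Sufficiency.} If \(F\ge0\), every term \(F_{\ell r}x(\ell,r,p)\) is nonnegative because \(x\ge0\), so \(\mathscr R_m(F)\ge0>-\infty\). For \(F\in\mathcal D_m\), the condition \(F\ge0\) together with \(F_{\ell r}=1-F_{r\ell}\) gives \(F\le1\), and the converse is immediate.

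\textbf{Box restriction.} For \(F\ge0\) we show that truncating \(y\) to \(1/m\) preserves feasibility and does not increase cost. This is the step I expect to be the main obstacle, since the objective is written in \(x\), not \(y\), and truncation changes the increments.

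Set \(y'=\min\{y,1/m\}\) coordinatewise and \(x'(\ell,r,p)=y'(\ell,r,p)-y'(\ell-1,r,p)\). Since \(y\) is nondecreasing in \(\ell\), so is \(y'\), hence \(x'\ge0\). The remaining constraints transfer as follows:
\begin{itemize}
\item \eqref{eq:MY-P2}--\eqref{eq:MY-P4} are monotone or equality relations, preserved by the monotone map \(t\mapsto\min\{t,1/m\}\).
\item \eqref{eq:MY-P1} survives: if either term was truncated, that term alone equals \(1/m\).
\end{itemize}

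For the cost, I would not compare \(x'\) and \(x\) termwise. Instead I would use telescoping within each \((r,p)\) column. The quantity \(\sum_\ell F_{\ell r}x(\ell,r,p)\) is a weighted sum of increments of \(y\). Truncation caps the path \(y(\cdot,r,p)\) at \(1/m\), and since \(y\) is monotone in \(\ell\), each new increment satisfies \(x'(\ell,r,p)\le x(\ell,r,p)\): before the crossing index the two agree, at the crossing the truncated increment is smaller, and after it the truncated increment is zero. Because \(F_{\ell r}\ge0\), the cost does not increase. Hence the infimum is unchanged under \(0\le y\le1/m\).
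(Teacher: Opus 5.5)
Your proof is correct and follows essentially the same route as the paper. Your single-entry ray is the paper's homogeneous direction $x^{(N)}$ with $N=E^{\ell r}$, and your truncation argument matches the paper's use of $T(z)=\min\{z,1/m\}$: the order and equality constraints are preserved, the covers survive because $y\ge0$, and $0\le x'\le x$. The paper states the ray for an arbitrary nonnegative $N$ only because it reuses that direction later, in the homogeneous completion step.
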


\begin{proof}
For any nonnegative matrix \(N\in\R_+^{m\times m}\), define
\[
  x^{(N)}(\ell,r,p):=\frac{N_{\ell r}}m,
  \qquad
  y^{(N)}(\ell,r,p)
  :=
  \frac1m\sum_{h=1}^{\ell}N_{hr}.
\]
The array \(y^{(N)}\) is independent of \(p\) and nondecreasing in
\(\ell\), so \((x^{(N)},y^{(N)})\) satisfies the homogeneous forms of
\eqref{eq:MY-P1}--\eqref{eq:MY-P4}.  Hence, for every feasible
\((x,y)\) and every \(s\ge0\), the point
\((x+sx^{(N)},y+sy^{(N)})\) is feasible, and
\(\sum_p x(\ell,r,p)\) increases by \(sN_{\ell r}\).
If some \(F_{\ell r}<0\), choose \(N=E^{\ell r}\), where \(E^{\ell r}\) denotes the \(m\times m\) matrix unit with a \(1\) in position \((\ell,r)\) and zeros elsewhere, and let
\(s\to\infty\); then
\(\mathscr R_m(F)=-\infty\).  Conversely, if \(F\ge0\), the objective
is nonnegative.  Since \(F_{\ell r}+F_{r\ell}=1\) on
\(\mathcal D_m\), entrywise nonnegativity is equivalent to
\(0\le F\le1\).

Let \(T(z):=\min\{z,1/m\}\).  For the compactness statement, replace
a feasible \(y\) by its coordinatewise truncation
\(\widehat y:=T\circ y\), set \(\widehat y(0,r,p):=0\), and recover
\(\widehat x\) from consecutive differences in \(\ell\).  Thus
\(\widehat x(\ell,r,p):=\widehat y(\ell,r,p)-
\widehat y(\ell-1,r,p)\).  The order relations and equalities are
preserved.  Also, if two nonnegative numbers have sum at least
\(1/m\), then truncating both at \(1/m\) leaves their sum at least
\(1/m\).  Hence every crossed-coverage constraint is preserved.
Since \(y(\ell,r,p)\) is nondecreasing in \(\ell\) and \(T\) is
nondecreasing and \(1\)-Lipschitz,
\(0\le\widehat x\le x\).  Hence, for \(F\ge0\), every feasible point
has a capped feasible replacement of no larger cost.  Since the
capped region is a subset of the original feasible region, the two
infima are equal.  The capped feasible region is closed and bounded
in a finite-dimensional space, and hence compact.
\end{proof}

Let
\(\widehat{\mathcal C}_m:=\{(x,y)\in\mathcal C_m:
0\le y(\ell,r,p)\le 1/m\text{ for all }\ell,r,p\}\).
By Lemma~\ref{lem:bp-finite-price}, restricting the inner minimization to
\(\widehat{\mathcal C}_m\) does not change its value for \(F\ge0\). And set
\begin{equation}
  v:=my,
  \qquad
  \xi:=mx,
  \qquad
  M_{\ell r}
  :=
  \sum_{p=1}^m\xi(\ell,r,p)
  =
  mW_{\ell r}.
  \label{eq:bp-scaled-M}
\end{equation}
Thus \(M_{\ell r}\) is the scaled total mass assigned to the ordered
pair \((\ell,r)\), after summing over the state index \(p\).  In this
notation the objective of \(\polyLPp(m)\) is
\(m^{-2}\sum_{\ell,r}M_{\ell r}\).

For \(1\le\ell<r\le m\), identify the ordered pair with the interval
\(I=[\ell,r]\subseteq[m]\), and define
\(h_I:=2F_{\ell r}-1\) and
\(\delta_I:=M_{\ell r}-M_{r\ell}\).
Thus \(\delta_I\) measures the difference between a matrix entry and
its transposed entry; it vanishes exactly when
\(M_{\ell r}=M_{r\ell}\).
Let \(\mathcal I_m:=\{[\ell,r]:1\le\ell<r\le m\}\), ordered by
inclusion.  A family
\(\mathcal U\subseteq\mathcal I_m\) is \emph{upward closed} if
\(I\in\mathcal U\), \(J\in\mathcal I_m\), and \(I\subseteq J\) imply
\(J\in\mathcal U\).  Define
\begin{equation}
  \sigma(\delta)
  :=
  \max_{\mathcal U\subseteq\mathcal I_m:
       \,\mathcal U\text{ upward closed}}
  \sum_{I\in\mathcal U}\delta_I.
  \label{eq:bp-sigma}
\end{equation}

\begin{lemma}[Interval-order minimax]
\label{lem:bp-interval-minimax}
The map \(F\mapsto(h_I)_{I\in\mathcal I_m}\) is a bijection from
\(\mathcal A_m\) onto the order polytope
\[
  0\le h_I\le1,
  \qquad
  I\subseteq J\Longrightarrow h_I\le h_J.
\]
Moreover, if
\(V_m^A:=\max_{F\in\mathcal A_m}\mathscr R_m(F)\), then
\begin{equation}
  V_m^A
  =
  \min_{(x,y)\in\widehat{\mathcal C}_m}
  \frac1{m^2}
  \left[
    \sum_{\ell,r}M_{\ell r}
    +\sigma(\delta)
  \right].
  \label{eq:bp-A-minimax}
\end{equation}
\end{lemma}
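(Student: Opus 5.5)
The plan has two parts. First I identify $\mathcal A_m$ with the order polytope through the coordinates $h_I$. Then I compute $V_m^A$ by exchanging the max over prices with the min over the compact feasible region.

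\emph{Bijection.} For $F\in\mathcal D_m$, complementarity forces $F_{\ell\ell}=1/2$, and $F_{r\ell}=1-F_{\ell r}$. Hence $F$ is determined by its strictly upper-triangular entries, which is the same as knowing $(h_I)_{I\in\mathcal I_m}$. The bound $0\le F\le1$ is equivalent to $-1\le h_I\le1$ on the upper triangle.

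Next I translate the two monotonicity conditions of \eqref{eq:bp-A}. On the upper triangle, $F_{\ell r}\ge F_{\ell+1,r}$ says that $h$ increases when the left endpoint moves left. Likewise $F_{\ell r}\le F_{\ell,r+1}$ says that $h$ increases when the right endpoint moves right. Together these are exactly inclusion-monotonicity $h_I\le h_J$ for $I\subseteq J$.

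At the diagonal, set $\ell+1=r$. The two conditions give $F_{\ell,\ell+1}\ge F_{\ell+1,\ell+1}=1/2$, so $h_{[\ell,\ell+1]}\ge0$. Since every interval contains a unit-length interval, this yields $h_I\ge0$ for all $I$.

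Conditions on the lower triangle follow by complementarity. For example, $F_{r\ell}\ge F_{r+1,\ell}$ is equivalent to $F_{\ell r}\le F_{\ell,r+1}$. So nothing new is imposed there, and the map is a bijection onto the order polytope $0\le h_I\le1$, $h_I\le h_J$ for $I\subseteq J$.

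\emph{Minimax.} For $F\in\mathcal A_m$, Lemma~\ref{lem:bp-finite-price} lets me restrict the inner infimum to the compact convex set $\widehat{\mathcal C}_m$. There I rewrite the objective in scaled variables:
\[
\frac2m\sum_{\ell,r,p}F_{\ell r}x=\frac{2}{m^2}\sum_{\ell,r}F_{\ell r}M_{\ell r}=\frac1{m^2}\Bigl[\sum_{\ell,r}M_{\ell r}+\sum_{I}h_I\delta_I\Bigr].
\]
The second equality pairs $(\ell,r)$ with $(r,\ell)$ and uses $2F_{\ell r}=1+h_I$ and $2F_{r\ell}=1-h_I$. The diagonal terms contribute $M_{\ell\ell}$ directly.

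This expression is bilinear in $h$ and in $(x,y)$, and both domains are compact convex polytopes. Von Neumann's (or Sion's) minimax theorem therefore gives
\[
V_m^A=\min_{(x,y)\in\widehat{\mathcal C}_m}\frac1{m^2}\Bigl[\sum_{\ell,r} M_{\ell r}+\max_{h}\sum_I h_I\delta_I\Bigr].
\]
This also shows that the max defining $V_m^A$ is attained, because $\mathscr R_m$ is a minimum of linear functions over a compact set and is therefore concave and upper semicontinuous.

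\emph{Main obstacle.} It remains to show that the maximum of the linear functional $\sum_I h_I\delta_I$ over the order polytope equals $\sigma(\delta)$. I would use the standard fact that the vertices of an order polytope are exactly the indicator vectors of upward-closed sets (Stanley). A short self-contained argument goes by layer decomposition: write $h=\int_0^1\one\{h>\theta\}\dd\theta$. Each level set $\{I: h_I>\theta\}$ is upward closed, so $\sum_I h_I\delta_I$ is an average of values $\sum_{I\in\mathcal U}\delta_I$ over upward-closed families $\mathcal U$, each at most $\sigma(\delta)$. Conversely, every indicator $\one_{\mathcal U}$ of an upward-closed family is feasible. Hence the inner maximum equals $\sigma(\delta)$, which gives \eqref{eq:bp-A-minimax}.
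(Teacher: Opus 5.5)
Your proposal is correct and follows essentially the same route as the paper: you identify $\mathcal A_m$ with the order polytope coordinatewise, using $F_{rr}=1/2$ for nonnegativity and complementarity for the lower triangle, and you use the same transpose-pairing identity $\frac{2}{m^2}\sum F_{\ell r}M_{\ell r}=\frac1{m^2}[\sum M_{\ell r}+\sum_I h_I\delta_I]$. You then apply the finite-dimensional minimax interchange on the compact convex sets $\mathcal A_m$ and $\widehat{\mathcal C}_m$, and a level-set (layer-cake) decomposition to identify the inner maximum with $\sigma(\delta)$. The only differences are cosmetic: you use strict level sets, and you remark explicitly that the maximum defining $V_m^A$ is attained.
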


\begin{proof}
For \(\ell<r\), price monotonicity compares
\(F_{\ell r}\) with \(F_{rr}=1/2\), hence \(h_{[\ell,r]}\ge0\).
Moving the left endpoint left or the right endpoint right can only
increase \(F_{\ell r}\), so \(I\subseteq J\) implies
\(h_I\le h_J\).
Conversely, these inequalities, together with
\(F_{r\ell}=1-F_{\ell r}\) and \(F_{\ell\ell}=1/2\), recover both
coordinate monotonicities and the bounds in
\eqref{eq:bp-A}.

Grouping each off-diagonal term with its transpose gives
\begin{equation}
  \frac2{m^2}\sum_{\ell,r}F_{\ell r}M_{\ell r}
  =
  \frac1{m^2}
  \left[
    \sum_{\ell,r}M_{\ell r}
    +\sum_{I\in\mathcal I_m}h_I\delta_I
  \right].
  \label{eq:bp-price-pairing}
\end{equation}
For every feasible \(h\),
\[
  \sum_I h_I\delta_I
  =
  \int_0^1
  \sum_{I:h_I\ge t}\delta_I\,\dd t.
\]
Every level set \(\{I:h_I\ge t\}\) is upward closed, so the last
display is at most \(\sigma(\delta)\).  Conversely, the indicator of
an upward-closed family attaining the maximum in
\eqref{eq:bp-sigma} belongs to the order polytope and attains
\(\sigma(\delta)\).  Therefore
\[
  \max_{h\text{ in the order polytope}}\sum_Ih_I\delta_I
  =
  \sigma(\delta).
\]
Both \(\mathcal A_m\) and
\(\widehat{\mathcal C}_m\) are compact and convex, and
\eqref{eq:bp-price-pairing} is bilinear.  The finite-dimensional
minimax theorem therefore allows the max and min to be interchanged,
which gives \eqref{eq:bp-A-minimax}.
\end{proof}

\subsection{Balanced completion by predecessor transport}

By Lemma~\ref{lem:bp-partial-dual} and
\(\mathcal A_m\subseteq\mathcal D_m\), we already have
\(V_m^A\le\operatorname{val}(\polyLPp(m))\).
To prove the reverse inequality, it is enough, by
\eqref{eq:bp-A-minimax}, to choose a minimizer with aggregate matrix
\(M\) and construct a point feasible for the original uncapped region
\(\mathcal C_m\), with aggregate matrix \(M'\), such that
\[
  M'_{\ell r}=M'_{r\ell}
  \qquad(\ell,r\in[m])
\]
and
\[
  \sum_{\ell,r}M'_{\ell r}
  \le
  \sum_{\ell,r}M_{\ell r}+\sigma(\delta).
\]
We construct such a balanced completion below.

We first make a normalization that preserves feasibility and adds no
occupied states.  By \eqref{eq:bp-row-set}, an \(A\)-row that is not declared
full can nevertheless be physically full only if
\begin{equation}
  t=1,\qquad a_i=0,\qquad z_i(b)=m,
  \label{eq:bp-exceptional-row}
\end{equation}
Equivalently, such a row has \(t=1\), \(a_i=0\), and \(b_m<i\).
Call such a row
\emph{exceptional}; define exceptional \(B\)-rows symmetrically.

Starting from a canonical transpose-overlap-free binary doubled cover, every
exceptional row can be eliminated without adding occupied states.
Indeed, if \(A\) has an exceptional row, let
\(i_\star:=\max\{i:a_i=0\}\).  Then \(i_\star>b_m\), so the reflected
row \((1,i_\star)\) is physically full.  Replace
\(a_{i_\star}=0\) by \(1\).  Since \(i_\star\) is the last zero of
the nondecreasing sequence \(a\), the sequence remains nondecreasing.
This change only declares an already physically full row, and hence
adds no occupied \(A\)-state. Recompute the \(B\)-copy using the
canonical generator rule. Relative to the old generator set, this
removes exactly the diagonal generator forced by the former
non-fullness of the newly declared \(A\)-row, while all other full-row
and forced-diagonal generators remain present.

Thus every crossed-cover constraint other than the one corresponding
to the deleted generator retains its previous canonical witness, and
that remaining constraint is now witnessed by the newly declared full
\(A\)-row. Consequently, the resulting pair is again a feasible
canonical doubled cover. Its \(A\)-occupied set is unchanged, whereas
its \(B\)-occupied set can only shrink.  The only possible new transpose overlap would involve
the newly declared reflected row \((1,i_\star)\), but such an overlap
would require \(b_1\ge i_\star\), contrary to
\(b_1\le b_m<i_\star\).  At each step the integer
\(\sum_i(a_i+b_i)\) increases by one and is at most \(2m^2\), so
alternating between the two copies terminates.

We call the resulting canonical doubled cover \emph{saturated}; in a
saturated cover, every physically full row is declared full.

\begin{lemma}[Predecessor transport]
\label{lem:bp-predecessor}
Among the minimizers \((x,y)\in\widehat{\mathcal C}_m\) of
\eqref{eq:bp-A-minimax}, one may choose one whose scaled cumulative
array \(v:=my\) admits a representation
\[
  (v,v)=\sum_s\lambda_s(A^s,B^s)
\]
as a finite convex combination of saturated canonical binary doubled
covers with generator sequences \(a^s,b^s\in\mathcal G_m\). Let
\(M=M(v)\) be the aggregate matrix defined in
\eqref{eq:bp-scaled-M}.

Suppose that \(e_{krL}\ge0\), for \(1\le L<k\le m\) and
\(r\in[m]\), satisfy
\begin{equation}
  \sum_{L<k}e_{krL}\le (M_{kr}-M_{rk})^+
  \qquad (k=2,\ldots,m,\ r\in[m]).
  \label{eq:bp-transport-demands}
\end{equation}
Then there is a point \((x',y')\in\widehat{\mathcal C}_m\), with
scaled cumulative array \(v':=my'\) and aggregate matrix \(M'\), such
that
\begin{equation}
  M'_{\ell r}
  =M_{\ell r}
   -\sum_{L<\ell}e_{\ell rL}
   +\sum_{k>\ell}e_{kr\ell}
  \qquad(\ell,r\in[m]).
  \label{eq:bp-transport-outcome}
\end{equation}
In particular, taking \(e_{kr,k-1}=e\) and all other requests equal
to zero moves any \(0\le e\le(M_{kr}-M_{rk})^+\) from \(M_{kr}\)
to \(M_{k-1,r}\), while preserving feasibility and total mass.
\end{lemma}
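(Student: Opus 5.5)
The argument has two parts. The first produces a minimizer that is a convex combination of saturated canonical covers. The second carries out the transport one binary cover at a time.

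\emph{Existence of the decomposition.} By Lemma~\ref{lem:bp-finite-price} and the compactness in Lemma~\ref{lem:bp-interval-minimax}, the minimum in \eqref{eq:bp-A-minimax} is attained. The objective $\sum M+\sigma(\delta)$ is convex in $M$ and nondecreasing under componentwise shrinking of the occupied states. It is linear in $M$ plus the support function $\sigma$, and I expect shrinking within an $\ell$-row to be harmless, since $\sigma$ is a max of linear forms with coefficients in $\{0,1\}$. Applying the doubling trick of the fixed-price contraction, a minimizer $v$ gives the pair $(v,v)$ in the doubled cover polytope. That polytope is integral, so $(v,v)$ is a convex combination of binary doubled covers. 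I then replace each binary cover by its canonical reduction (Lemma~\ref{lem:bp-canonical-reduction}) and by the saturation procedure above. Each replacement is componentwise contained in the original. The averaged pair therefore dominates a new feasible point of no larger objective, and I take this averaged point as the minimizer.

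The averages of the two copies may differ. I average them again: with $v:=(A+B)/2$, feasibility holds by the crossed-cover argument, and $(v,v)$ is again a convex combination of the swapped and unswapped covers. The doubled representation is closed under swapping $A\leftrightarrow B$, which preserves saturation and canonicity. So I may assume $(v,v)=\sum_s\lambda_s(A^s,B^s)$ exactly.

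\emph{Transport within one cover.} For a fixed saturated canonical cover, I view the mass in row $(k,r)$ as the count $C^A$ of occupied states in \eqref{eq:bp-row-count}. Moving $e$ units from $M_{kr}$ to $M_{Lr}$ with $L<k$ means lowering states in row $k$ and raising states in row $L$ in the cumulative array. In terms of $v(\cdot,r,\cdot)$ this is the same as raising $v(\ell,r,p)$ for $L\le\ell<k$ on selected states. Raising cumulative values never violates the covers \eqref{eq:MY-P1}, nor the lower bounds in \eqref{eq:MY-P2} and \eqref{eq:MY-P4}, provided the raise is done on an upward-closed set of states. The equalities \eqref{eq:MY-P3} must also be respected, so the raised set must treat whole tail blocks uniformly. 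The cap $v\le1$ has to be kept. So I will only raise states at which $v<1$, and this is where the demand bound \eqref{eq:bp-transport-demands} enters.

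\emph{Main obstacle: enough unsaturated room.} I must show that the available room at row $L$ and rank $r$ is at least $(M_{kr}-M_{rk})^+$. In a binary saturated cover, a row that is not full is not physically full, so it has an empty state. The question is whether the number of such empty states below row $k$ dominates the transpose excess. My approach is to use transpose-overlap-freeness together with the exact row structure \eqref{eq:bp-row-set}. When $M_{kr}>M_{rk}$ in a cover, row $(k,r)$ carries more mass than its transpose. Through the crossed constraint \eqref{eq:bp-crossed-cover}, this forces the transposed row to be not declared full. The diagonal cones then imply that the lower rows $(L,r)$ contain at least as many unoccupied states.

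I would carry out this count cover by cover and then average with the weights $\lambda_s$. The inequality is linear, so it survives averaging, although the demands are split across covers, and I would allocate them proportionally to each cover's excess. Finally, I set $x'$ by differencing $y'=v'/m$ and read off \eqref{eq:bp-transport-outcome}. The special case $e_{kr,k-1}=e$ is then immediate, and total mass is conserved because each unit removed from row $k$ is added at row $L$.
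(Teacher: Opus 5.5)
\textbf{Verdict: there is a genuine gap.} The capacity bound you name as the main obstacle is attacked with the wrong quantity, and the multi-row transport is never constructed.

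Your first step matches the paper's construction: integral decomposition of $(v,v)$, canonical, overlap and saturation normalization of each binary component, then copy-swap symmetrization. Your justification that shrinking occupied states cannot raise the objective does not hold, though. The fact that $\sigma$ is a maximum of $\{0,1\}$-linear forms in $\delta$ gives no monotonicity in $v$, because $\delta_I=M_{\ell r}-M_{r\ell}$ and $M$ is itself a difference of consecutive cumulative rows. The correct reason is that the objective of \eqref{eq:bp-A-minimax} is the maximum over $F\in\mathcal A_m$ of fixed-price costs. By \eqref{eq:bp-sbp}, each of these equals $\frac2m\sum_{\ell,r,p}d_{\ell r}y(\ell,r,p)$ with $d_{\ell r}\ge0$.

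The capacity bound is where the proof breaks. The cap $v\le1$ plays no role, since raising binary states preserves it automatically. The number of empty states below row $k$ does not measure movable mass either. Row $k-1$ can only be raised up to the coordinatewise largest feasible predecessor \eqref{eq:bp-max-predecessor} of row $k$. Consider a diagonal-cone row, $a_i<t\le z_i(b)$ with $t=m+1-k$ and $i=m+1-r$. There row $k$ carries a unit of aggregate mass, $W^A_{kr}=1$; this mass is the increment $N_t-N_{t+1}$, not the count $C^A$. Yet row $k-1$ already equals its maximal predecessor, so that unit is immovable even though lower rows contain many empty states. Your count would certify room that does not exist.

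What is missing is the exact per-component case analysis \eqref{eq:bp-w-cases}:
\begin{itemize}
\item The one-step movable capacity is $\kappa^A_{kr}=W^A_{kr}-U^A_{kr}$, with $U^A_{kr}=\one_{\{a_i<t,\ b_t<i\}}$.
\item The immovable unit is paid for across copies: $U^A_{kr}\le W^B_{rk}$, because the transposed $B$-row is then itself a diagonal-cone row.
\item Summing both copies with weights $\lambda_s/2$ gives $\operatorname{cap}_{kr}\ge M_{kr}-M_{rk}$.
\end{itemize}
Your single-copy reasoning (``the transposed row is not declared full, hence lower rows have empty states'') does not yield this.

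The construction for $L<k-1$ is also missing. Every cumulative row $L,\ldots,k-1$ must rise by the same amount, and row $\ell$ may be raised only where row $\ell+1$ is already full. Saying the raised set must be upward closed states this requirement but does not produce such a set. The paper fills whole rows on nested subatoms $S_D\subseteq\cdots\subseteq S_1$ with coefficients $\varepsilon/c_j$. This nesting works because $c_1\le c_2\le\cdots$. It then realizes all requests simultaneously on a common refinement, using the fact that each copy--column atom has positive movable capacity at no more than one source rank.
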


\begin{proof}
\emph{A saturated minimizing representation.}
Choose a minimizing \((x^0,y^0)\in\widehat{\mathcal C}_m\), and put
\(v^0:=my^0\). By integrality of the bounded doubled-cover polytope,
\((v^0,v^0)\) is a finite convex combination
\(\sum_s\alpha_s(A_0^s,B_0^s)\) of binary doubled covers. Apply to
each component the canonical reduction, the monotone and
transpose-overlap normalization of Lemma~\ref{lem:bp-canonical-reduction}, and then the saturation
procedure above; denote the result by
\((\widetilde A^s,\widetilde B^s)\), and put
\(\widetilde v:=\sum_s\alpha_s
(\widetilde A^s+\widetilde B^s)/2\).

Every transformation above produces another feasible binary doubled
cover. Therefore their convex combination is feasible for the
doubled polytope. Averaging the two copies shows that $\widetilde v
  $ is feasible for the capped one-copy system. Indeed, averaging the two
crossed-cover inequalities gives the corresponding one-copy cover
inequality, while the order and equality constraints are preserved
under averaging. Since every component is binary, \(0\le\widetilde
v\le1\), so the cap is also satisfied.

At the level of occupied states, each normalized component is
componentwise contained in the original one. Since
\(d_{\ell r}\ge0\), \eqref{eq:bp-sbp} therefore shows that the
fixed-price cost of \(\widetilde v\) is no larger than that of \(v^0\)
for every \(F\in\mathcal A_m\). By
\eqref{eq:bp-price-pairing} and the definition of \(\sigma\), the
objective in \eqref{eq:bp-A-minimax} is the maximum of these
fixed-price costs. Hence the objective value at \(\widetilde v\) is
no larger than that at \(v^0\). Since \(v^0\) is already a minimizer,
equality holds, and \(\widetilde v\) is again a minimizer.

Finally, replace every normalized component by it and its copy-swapped
version, each with half of the original weight. Copy swapping
preserves all the stated properties. After renaming components and
weights, and relabeling \(\widetilde v\) as \(v\), we obtain
\begin{equation}
  (v,v)=\sum_s\lambda_s(A^s,B^s),
  \qquad
  v=\sum_s\frac{\lambda_s}{2}(A^s+B^s),
  \qquad
  \lambda_s\ge0,\quad\sum_s\lambda_s=1.
  \label{eq:bp-saturated-mixture}
\end{equation}
In particular, the weighted averages of the \(A\)- and \(B\)-copies
are both \(v\).

\emph{Movable capacity in one component.}
Fix a component, suppress its index, and set
\(X(0,r,p):=0\) for \(X\in\{A,B\}\). For all
\(\ell,r\in[m]\), define its contribution to the aggregate matrix by
\begin{equation}
  W^X_{\ell r}
  :=\sum_{p=1}^m
  \bigl(X(\ell,r,p)-X(\ell-1,r,p)\bigr).
  \label{eq:bp-component-increment}
\end{equation}
Defining \(W^X_{\ell r}\) also at \(\ell=1\) will be needed when a
coordinate is transposed.

Consider the \(A\)-copy. Fix \(r\in[m]\), put
\(i:=m+1-r\), \(x:=a_i\), and \(z:=z_i(b)\), and abbreviate
\(N_s:=C^A_{si}(a,b)=|R^A_{si}(a,b)|\) for \(s\in[m]\), with
\(N_{m+1}:=0\). If \(t:=m+1-k\), then
\(W^A_{kr}=N_t-N_{t+1}\), also when \(k=1\).

Now let \(k\ge2\), and write \(q_p:=A(k,r,p)\) and
\(u_p:=A(k-1,r,p)\). Once \(q\) is fixed, the coordinatewise largest
feasible predecessor is
\begin{equation}
  \overline u_p=
  \begin{cases}
    \min\{q_p,q_{p+1}\},&p<k,\\
    \min_{j\in[m]}q_j,&p\ge k.
  \end{cases}
  \label{eq:bp-max-predecessor}
\end{equation}
Let \(w=(w_p)_{p\in[m]}\in\{0,1\}^m\) be any feasible replacement
for the predecessor row \(u=A(k-1,r,\cdot)\), with all other states
fixed. Then
\[
  w_p\le q_p,
  \qquad
  w_p\le q_{p+1}
  \quad (p<k),
\]
by the two cumulative-order relations. The tail equalities force all
coordinates \(w_p\), \(p\ge k\), to have a common value, and the order
relations bound this value by \(\min_{j\in[m]}q_j\). Thus every
feasible predecessor is coordinatewise bounded by the vector
\(\overline u\) displayed in
\eqref{eq:bp-max-predecessor}.

Conversely, \(\overline u\) satisfies the tail equalities and all
order constraints linking rows \(k-1\) and \(k\). Increasing the
current predecessor \(u\) toward \(\overline u\) only relaxes the
constraints coming from row \(k-2\) and the crossed-cover
constraints. Hence \(\overline u\) is indeed the coordinatewise
largest feasible predecessor. Consequently,
$
  \kappa^A_{kr}
  :=\sum_{p=1}^m(\overline u_p-u_p)$
is exactly the one-step movable capacity in this copy. Also put
\(U^A_{kr}:=\one_{\{x<t\le z\}}\).

Equations~\eqref{eq:bp-row-set}--\eqref{eq:bp-row-count} and
\eqref{eq:bp-max-predecessor} give the following mutually exclusive
cases:
\begin{equation}
\begin{array}{c|ccc}
  \text{condition}&W^A_{kr}&\kappa^A_{kr}&U^A_{kr}\\ \hline
  t<x&0&0&0\\
  t=x&m-(z-x)^+&m-(z-x)^+&0\\
  x<t\le z&1&0&1\\
  t>\max\{x,z\}&0&0&0.
\end{array}
\label{eq:bp-w-cases}
\end{equation}
The only point requiring more than direct substitution is the third
line. In the case \(x<t\le z\), formula
\eqref{eq:bp-row-set} gives
\[
  R^A_{ti}(a,b)
  =\{m+1-z,\ldots,m+1-t\}.
\]
This row is physically full if and only if \(z=m\) and \(t=1\).
Since \(x<t\), this is precisely the exceptional configuration
\((x,t,z)=(0,1,m)\), which is excluded by saturation. Outside this
configuration, direct substitution into
\eqref{eq:bp-max-predecessor} gives
\(\overline u=u\), and hence \(\kappa^A_{kr}=0\). Consequently
\begin{equation}
  \kappa^A_{kr}=W^A_{kr}-U^A_{kr},
  \qquad
  U^A_{kr}
  =\one_{\{a_i<t\le z_i(b)\}}
  =\one_{\{a_i<t,\ b_t<i\}}.
  \label{eq:bp-exact-movable}
\end{equation}
The last equality uses that \(b\in\mathcal G_m\): the set
\(\{j:b_j<i\}\) is a prefix, so \(t\le z_i(b)\) if and only if
\(b_t<i\).

If \(U^A_{kr}=1\), then \(b_t<i\le z_t(a)\). At the transposed
coordinate \((r,k)\) of the \(B\)-copy, the local parameters are
\(b_t\) and \(z_t(a)\). The row-count calculation in the third line
of \eqref{eq:bp-w-cases}, which remains valid at first index \(1\)
because of the convention in \eqref{eq:bp-component-increment}, gives
\(W^B_{rk}=1\). Hence \(U^A_{kr}\le W^B_{rk}\). By symmetry,
\(\kappa^B_{kr}=W^B_{kr}-U^B_{kr}\) and
\(U^B_{kr}\le W^A_{rk}\).

Restoring the component index and using
\eqref{eq:bp-saturated-mixture}, we obtain
\begin{equation}
\begin{aligned}
  \operatorname{cap}_{kr}
  &:=\sum_s\frac{\lambda_s}{2}
       \bigl(\kappa^{A,s}_{kr}+\kappa^{B,s}_{kr}\bigr)\\
  &=M_{kr}-\sum_s\frac{\lambda_s}{2}
       \bigl(U^{A,s}_{kr}+U^{B,s}_{kr}\bigr)\\
  &\ge M_{kr}-M_{rk}.
\end{aligned}
\qquad\text{Hence}\qquad
  \operatorname{cap}_{kr}\ge(M_{kr}-M_{rk})^+.
\label{eq:bp-movable-capacity}
\end{equation}

\emph{Transport within one atom.}
Fix a component \(s\), a copy \(X\in\{A,B\}\), and a column \(r\).
Call this copy--column piece an atom. Its coefficient in the second
average in \eqref{eq:bp-saturated-mixture} is
\(\omega:=\lambda_s/2\). A subatom of coefficient
\(0\le\rho\le\omega\) is realized by splitting off doubled-mixture
weight \(2\rho\); augmenting one cumulative row by \(c\) states on
that subatom raises its weighted row total by \(\rho c\).

Put \(i:=m+1-r\), and let \(N_q\) denote the occupied-state count of
reflected row \((q,i)\) in the designated copy; thus
\(N_q=C^A_{qi}(a^s,b^s)\) for \(X=A\), and
\(N_q=C^A_{qi}(b^s,a^s)\) for \(X=B\). By \eqref{eq:bp-w-cases}, this atom has positive movable capacity at most one source rank. If that rank is \(k\), put \(t:=m+1-k\).
Positive movable capacity can occur only in the case \(t=x\) of
\eqref{eq:bp-w-cases}. Hence row \(t=x\) is the last declared full
row. Since \(k\ge2\), row \(t+1\) exists; it is not declared full
and, by saturation, is not physically full. Consequently,
$c_1:=m-N_{t+1}=\kappa^X_{kr}>0$.

Fix \(L<k\), put \(D:=k-L\), and define
\(c_j:=m-N_{t+j}\) for \(j=1,\ldots,D\). The row counts in
\eqref{eq:bp-row-count} are nonincreasing, so
\begin{equation}
  0<c_1\le c_2\le\cdots\le c_D.
  \label{eq:bp-capacity-chain}
\end{equation}
For \(0<\varepsilon\le\omega c_1\), set
\(\rho_j:=\varepsilon/c_j\). Then
\(\omega\ge\rho_1\ge\cdots\ge\rho_D\), so choose nested subatoms
\(S_D\subseteq\cdots\subseteq S_1\) with coefficients \(\rho_j\).
Proceed inductively for \(j=1,\ldots,D\). On \(S_1\), reflected row
\(t\) is already full. For \(j>1\), the inclusion
\(S_j\subseteq S_{j-1}\) implies that reflected row \(t+j-1\) was
filled on \(S_j\) at the preceding step. Thus, on \(S_j\), the
successor row of \(t+j\) is full. Formula
\eqref{eq:bp-max-predecessor} therefore permits reflected row
\(t+j\) to be filled completely.

The operation adds \(c_j\) states per unit coefficient and hence
raises original cumulative row
$m+1-(t+j)=k-j$
by \(\rho_jc_j=\varepsilon\). It also preserves feasibility:
the predecessor construction preserves the order constraints, a full
row satisfies the tail equalities, and every crossed-cover inequality
can only be relaxed because all changes are from \(0\) to \(1\).

Thus the weighted cumulative-row totals for
\(L,\ldots,k-1\) each increase by \(\varepsilon\). For either the
pre- or post-modification cumulative array \(w\), write
\(T_{\ell r}(w):=\sum_{p=1}^m w(\ell,r,p)\), with
\(T_{0r}(w):=0\). The modification increases
\(T_{\ell r}\) by \(\varepsilon\) exactly for
\(L\le\ell\le k-1\), and leaves all other cumulative-row totals
unchanged. Since
\(M_{\ell r}(w)=T_{\ell r}(w)-T_{\ell-1,r}(w)\), the consecutive
differences telescope: \(M_{kr}\) decreases by \(\varepsilon\),
\(M_{Lr}\) increases by \(\varepsilon\), and every intermediate entry
is unchanged. Thus the requested mass is transported from
\(M_{kr}\) to \(M_{Lr}\), with total aggregate mass preserved.


\emph{Simultaneous realization.}
For fixed \((k,r)\), let \(\mathfrak A_{kr}\) be the finite set of
atoms with that movable source. These sets are pairwise disjoint,
because an atom has at most one movable source. For
\(\nu\in\mathfrak A_{kr}\), let \(\omega_\nu\) be its coefficient
and \(c^\nu_1\) its capacity per unit coefficient. Then
\(\operatorname{cap}_{kr}
=\sum_{\nu\in\mathfrak A_{kr}}\omega_\nu c^\nu_1\). By
\eqref{eq:bp-transport-demands} and
\eqref{eq:bp-movable-capacity}, the requests can be allocated so that
\begin{equation}
  \sum_{\nu\in\mathfrak A_{kr}}e^\nu_{krL}=e_{krL}
  \quad(L<k),
  \qquad
  \sum_{L<k}e^\nu_{krL}\le\omega_\nu c^\nu_1
  \quad(\nu\in\mathfrak A_{kr}).
  \label{eq:bp-atom-allocation}
\end{equation}
A greedy allocation from the atom capacities gives such numbers. For each atom \(\nu\),
\eqref{eq:bp-atom-allocation} gives
$
  \sum_{L<k}\frac{e^\nu_{krL}}{c^\nu_1}
  \le\omega_\nu$.
Hence the initial subatoms, one for each \(L<k\), can indeed be
chosen pairwise disjoint.

Within atom \(\nu\), choose disjoint initial subatoms of coefficients
\(e^\nu_{krL}/c^\nu_1\), one for each \(L<k\), and apply the nested
construction above. To realize all splittings in a single finite
mixture, represent a doubled component of weight \(\lambda_s\) by an
interval of length \(\lambda_s\). A subatom of coefficient \(\rho\)
uses an interval of length \(2\rho\). Take the common refinement of the finitely many initial and nested
interval partitions. Partitions belonging to different copies or
columns may overlap, but on each cell of the common refinement all
prescribed modifications are unambiguous. Operations belonging to
different copy--column pieces affect disjoint order systems, while
within a fixed piece the initial subatoms for different destinations
are disjoint. Moreover, all modifications are coordinatewise
increases, so every crossed-cover inequality remains satisfied.
Thus every cell of the common refinement carries a feasible binary
doubled cover.

Let \((\overline A,\overline B)\) be the weighted average of the
modified doubled covers and put
$
  v':=\frac{\overline A+\overline B}{2}$.
Copy swapping preserves feasibility, and therefore
$
  (v',v')
  =
  \frac12(\overline A,\overline B)
  +\frac12(\overline B,\overline A)$
is feasible for the doubled polytope. Equivalently, \(v'\) is
feasible for the capped one-copy system. Setting
$
  y':=\frac{v'}m$ and $
  x'(\ell,r,p)
  :=y'(\ell,r,p)-y'(\ell-1,r,p)$ 
with \(y'(0,r,p):=0\), gives
\((x',y')\in\widehat{\mathcal C}_m\).

Since the aggregate matrix is linear in \(v\) by
\eqref{eq:bp-scaled-M}, the atom-level changes add and give exactly
\eqref{eq:bp-transport-outcome}. Each request removes and adds the
same amount, so the total aggregate mass is preserved. This completes
the proof.
\end{proof}

\begin{theorem}[Monotone-price normalization]\label{thm:appendix-price-normalization}
\[
  \max_{F\in\mathcal A_m}\mathscr R_m(F)
  =
  \sup_{F\in\mathcal D_m}\mathscr R_m(F)
  =
  \operatorname{val}\bigl(\polyLPp(m)\bigr).
\]
\end{theorem}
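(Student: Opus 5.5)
Since \(\mathcal A_m\subseteq\mathcal D_m\), Lemma~\ref{lem:bp-partial-dual} already gives
\[
  \max_{F\in\mathcal A_m}\mathscr R_m(F)\le\sup_{F\in\mathcal D_m}\mathscr R_m(F)=\operatorname{val}\bigl(\polyLPp(m)\bigr),
\]
and the maximum over \(\mathcal A_m\) is attained: \(\mathcal A_m\) is compact, and \(\mathscr R_m\) is a finite pointwise infimum of linear functions over the compact set \(\widehat{\mathcal C}_m\) (Lemma~\ref{lem:bp-finite-price}), hence continuous on \(\{F\ge0\}\). So the whole content is the reverse inequality \(\operatorname{val}(\polyLPp(m))\le V_m^A\). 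By Lemma~\ref{lem:bp-interval-minimax}, it suffices to take a minimizer \((x,y)\in\widehat{\mathcal C}_m\) of \eqref{eq:bp-A-minimax}, with aggregate \(M\), and produce a point of \(\mathcal C_m\) whose aggregate \(M'\) is transpose-balanced and satisfies \(\sum M'\le\sum M+\sigma(\delta)\). Such a point is feasible for \(\polyLPp(m)\) with objective at most \(m^{-2}[\sum M+\sigma(\delta)]=V_m^A\).

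I would choose the minimizer provided by Lemma~\ref{lem:bp-predecessor}, and then balance \(M\) in two stages.

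\textbf{Stage 1: transport.} Using the predecessor transport, I move excess mass downward in the row index. For every pair \(\ell<r\) with \(\delta_{[\ell,r]}<0\), the entry \(M_{r\ell}\) exceeds \(M_{\ell r}\); transporting this excess from row \(r\) to some row \(L<r\) within column \(\ell\) is allowed by \eqref{eq:bp-transport-demands}. Conversely, an excess at \((k,r)\) with \(k>r\) can be moved to \((k-1,r)\), and so on.

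The transport routes should be read off from the dual certificate of \(\sigma(\delta)\). The maximum in \eqref{eq:bp-sigma} is a max-weight closure problem on the interval order, so its LP dual is a flow: negative \(\delta_I\) are routed to positive \(\delta_J\) with \(J\subseteq I\). Each unit routed this way is realized by a request \(e_{krL}\), which shifts mass so as to cancel an imbalance at one pair and reduce one at a nested pair. Transport preserves total mass and realizes every negative imbalance that is matched in the flow.

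\textbf{Stage 2: padding.} Any imbalance \(\delta_I\) left unmatched has total at most \(\sigma(\delta)\) by flow/closure duality. I absorb it by adding mass \(sN\), with \(N=E^{\ell r}\) or \(E^{r\ell}\), as in the proof of Lemma~\ref{lem:bp-finite-price}. This keeps the point feasible in \(\mathcal C_m\) (the cap is irrelevant there) and raises the total by exactly the leftover amount. After padding, \(M'\) is symmetric and \(\sum M'\le\sum M+\sigma(\delta)\), which closes the inequality.

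\textbf{Main obstacle.} The hard step is matching the transport geometry to \(\sigma\). Transport moves mass only within a fixed passive column \(r\), from row \(k\) to a smaller row \(L\), and only up to the current excess \((M_{kr}-M_{rk})^+\). In interval language, moving mass from \((k,r)\) to \((L,r)\) changes \(\delta_{[r,k]}\) and \(\delta_{[r,L]}\) when \(r<L\), which are nested intervals sharing the left endpoint \(r\); the cases \(L<r<k\) and \(L,k<r\) are handled analogously.

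I would first verify that the dual of the closure problem \eqref{eq:bp-sigma} decomposes into flows along single-endpoint shrinkings \([\ell,r]\to[\ell,r-1]\) and \([\ell,r]\to[\ell+1,r]\), since the inclusion order is generated by these covering relations. Each such elementary move would then be realized by one predecessor-transport request. I would process the moves in an order, such as decreasing interval length, that guarantees the excess needed by \eqref{eq:bp-transport-demands} is present when each request is issued, re-invoking Lemma~\ref{lem:bp-predecessor} after each batch. The intermediate points are no longer minimizers, so I would either check that the lemma's hypotheses still hold along the way or issue all requests simultaneously in a single application.
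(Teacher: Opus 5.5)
Your skeleton is the paper's: take the saturated minimizer from Lemma~\ref{lem:bp-predecessor}, compute $\sigma$ by closure/flow duality, transport along the flow, and pad the remainder with the homogeneous direction. There are, however, two genuine gaps.

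\textbf{Stage~2 accounting.} Flow/closure duality gives $\sigma(\delta)=P-\mathsf{mf}$ with $P=\sum_I\delta_I^+$. So $\sigma$ equals only the \emph{positive} imbalance left after a maximum flow. The unmatched negative imbalance $\sum_I\delta_I^- -\mathsf{mf}$ is not controlled by $\sigma$ at all, and padding it with $E^{\ell r}$ pays it in full. Already for $m=2$ with $M_{21}>M_{12}$ this fails: there are no strict inclusions, so the flow is empty and $\sigma(\delta)=0$, yet your padding costs $M_{21}-M_{12}>0$. You observe in Stage~1 that lower-triangular excess may move to any row $L<r$ of column $\ell$, but you never use the destination $L=\ell$. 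That is the missing idea. For a negative $J=[L,R]$, the request $M_{RL}\to M_{LL}$ has source rank $R\ge2$ and amount within $\delta_J^-$. It raises $\delta_J$ to zero, preserves total mass, and touches no other off-diagonal imbalance. Upper-triangular excess can never reach the diagonal, because requests only lower the row index within a column. That asymmetry is why exactly the positive residual, totaling $P-\mathsf{mf}=\sigma(\delta)$, must be padded on the reverse entries $M_{r\ell}$.

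\textbf{Realizing the flow.} A single request $e_{krL}$ moves imbalance only between intervals that share an endpoint. The case $L<r<k$ is not analogous to the others: it raises both $\delta_{[L,r]}$ and $\delta_{[r,k]}$, which are not nested. For $I=[\ell,r]\subsetneq J=[L,R]$ with $L<\ell$ and $r<R$ you need two requests, $M_{\ell r}\to M_{Lr}$ in column $r$ and $M_{RL}\to M_{rL}$ in column $L$, whose effects on $[L,r]$ cancel. Your decomposition along covering relations would instead issue requests at intermediate coordinates whose original excess may be zero. That violates \eqref{eq:bp-transport-demands}, and Lemma~\ref{lem:bp-predecessor} cannot be re-invoked at intermediate points, which are no longer minimizers with a saturated representation.

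The paper resolves this by adding an arc $I\to J$ for every strict inclusion, so each flow path shortens to $\mathsf s\to I\to J\to\mathsf t$. Every request is then sourced either at $(\ell,r)$ above the diagonal, with total at most $\delta_I^+$, or at $(R,L)$ below it, with total at most $\delta_J^-$ including the diagonal discharges. These sources never share a coordinate, so the demand bound holds against the original $M$. A single simultaneous application of the lemma, which permits any $L<k$, then realizes all requests.
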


\begin{proof}
Choose a capped minimizer in \eqref{eq:bp-A-minimax} with the
saturated binary-mixture representation furnished by
Lemma~\ref{lem:bp-predecessor}. Write \(M\) for its aggregate matrix
and \(\delta\) for its interval-imbalance vector, and put
\(P:=\sum_{I\in\mathcal I_m}\delta_I^+\).

\smallskip
\noindent\emph{The interval flow.}
Construct a network with source \(\mathsf s\), sink \(\mathsf t\),
and one node for each \(I\in\mathcal I_m\). Add arcs
\(\mathsf s\to I\) and \(I\to\mathsf t\) of capacities
\(\delta_I^+\) and \(\delta_I^-\), respectively, and, whenever
\(I\subsetneq J\), an arc \(I\to J\) of capacity \(P+1 \).

The cut with all interval nodes on the sink side has capacity \(P\),
so no minimum cut crosses a capacity-\((P+1)\) arc. Hence the interval
nodes \(\mathcal U\) on the source side of a minimum cut form an
upward-closed family. Conversely, every upward-closed
\(\mathcal U\) defines a cut of capacity
\begin{equation}
  \operatorname{cap}(\mathcal U)
  =
  \sum_{I\notin\mathcal U}\delta_I^+
  +
  \sum_{I\in\mathcal U}\delta_I^-
  =
  P-\sum_{I\in\mathcal U}\delta_I.
  \label{eq:bp-closure-cut}
\end{equation}
If \(\mathsf{mf}\) denotes the maximum-flow value, max-flow/min-cut
duality and the definition of \(\sigma\) give
\begin{equation}
  \sigma(\delta)=P-\mathsf{mf}.
  \label{eq:bp-closure-flow}
\end{equation}

Decompose a maximum flow into source--sink paths. Since a direct
containment arc exists for every strict inclusion, each positive-value
path may be shortened to
\(\mathsf s\to I\to J\to\mathsf t\), without changing its value or
endpoint loads, where
\[
  I=[\ell,r],\qquad
  J=[L,R],\qquad
  \delta_I>0,\qquad
  \delta_J<0,\qquad
  I\subsetneq J.
\]
The shortcut arcs have sufficient capacity because the total flow is
at most \(P<P+1\). For a positive interval \(I\), let \(p_I\) be the
flow on \(\mathsf s\to I\); for a negative interval \(J\), let \(n_J\)
be the flow on \(J\to\mathsf t\). Thus
\(\sum_Ip_I=\sum_Jn_J=\mathsf{mf}\).

\smallskip
\noindent\emph{Transporting the flow.}
A path of value \(\varepsilon\) from
\(I=[\ell,r]\) to \(J=[L,R]\) generates the requests
\begin{equation}
  M_{\ell r}\longrightarrow M_{Lr}\quad(L<\ell),
  \qquad
  M_{RL}\longrightarrow M_{rL}\quad(r<R),
  \label{eq:bp-matched-requests}
\end{equation}
each of amount \(\varepsilon\); a request whose displayed strict
inequality fails is omitted. Equivalently, these amounts are added
to \(e_{\ell,r,L}\) and \(e_{R,L,r}\), respectively.

For every negative interval \(J=[L,R]\), set
\(q_J:=\delta_J^- -n_J\ge0\), and add the unmatched request
\(M_{RL}\to M_{LL}\) of amount \(q_J\), equivalently adding \(q_J\)
to \(e_{R,L,L}\).

For a positive interval \(I=[\ell,r]\), the first-request mass
actually sourced at \(M_{\ell r}\) is at most
$
  p_I
  \le \delta_I^+
  =(M_{\ell r}-M_{r\ell})^+$.
For a negative interval \(J=[L,R]\), the second-request and unmatched
mass actually sourced at \(M_{RL}\) is at most
$
  n_J+q_J
  =\delta_J^-
  =(M_{RL}-M_{LR})^+$.
The first-request sources lie above the diagonal
(\(\ell<r\)), whereas the second and unmatched sources lie below the
diagonal (\(R>L\)). Hence the two bounds never need to be added at
the same source coordinate. These are the only source types, so
\eqref{eq:bp-transport-demands} holds.

Moreover, every nontrivial request has source rank at least \(2\):
the first has source rank \(\ell\ge2\) because \(L<\ell\); the second
has source rank \(R\ge2\) because \(r<R\); and the unmatched request
has source rank \(R\ge2\) because \(L<R\). Lemma~\ref{lem:bp-predecessor} therefore realizes all requests
simultaneously.

Let \(M^{\mathrm{tr}}\) be the resulting aggregate matrix and let
\(\delta^{\mathrm{tr}}\) be its interval imbalance. By \eqref{eq:bp-transport-outcome}, request effects superpose.
For one path, the net effect is to decrease \(\delta_I\) by
\(\varepsilon\) and increase \(\delta_J\) by \(\varepsilon\).
Indeed, when \(L<\ell\) and \(r<R\), the two changes at the
intermediate interval \([L,r]\) cancel; if \(L=\ell\) or \(r=R\),
the remaining request gives the same endpoint effect directly.

The unmatched request raises the imbalance of a negative interval
\(J\) from
\(-\delta_J^-+n_J=-q_J\) to zero and affects no other off-diagonal
imbalance. A positive interval \(I\) has residual imbalance
\(\delta_I^{\mathrm{tr}}=\delta_I^+-p_I\ge0\), and an initially
balanced interval remains balanced. Consequently,
\begin{equation}
  \delta_I^{\mathrm{tr}}\ge0
  \quad(I\in\mathcal I_m),
  \qquad
  \sum_{I\in\mathcal I_m}\delta_I^{\mathrm{tr}}
  =
  P-\mathsf{mf}
  =
  \sigma(\delta),
  \qquad
  \sum_{\ell,r}M^{\mathrm{tr}}_{\ell r}
  =
  \sum_{\ell,r}M_{\ell r}.
  \label{eq:bp-transport-residual}
\end{equation}
The last equality holds because predecessor transport only moves
aggregate mass.

\smallskip
\noindent\emph{Homogeneous completion.}
Define the nonnegative residual matrix
$
  N^{\mathrm{res}}
  :=
  \sum_{I=[\ell,r]\in\mathcal I_m}
  \delta_I^{\mathrm{tr}}E^{r\ell}$.
Use the homogeneous feasible direction constructed in the proof of
Lemma~\ref{lem:bp-finite-price}, with its matrix parameter equal to
\(N^{\mathrm{res}}/m\). By that construction, the scaled aggregate
matrix increases by exactly \(N^{\mathrm{res}}\). Adding this
direction to the transported point therefore gives
$
  M':=M^{\mathrm{tr}}+N^{\mathrm{res}}$.
For every \(I=[\ell,r]\), this adds
\(\delta_I^{\mathrm{tr}}\) only to the reverse entry
\(M^{\mathrm{tr}}_{r\ell}\), and hence eliminates the residual
imbalance. Thus \(M'=M'^{\mathsf T}\).

The homogeneous direction preserves every constraint of the
uncapped region \(\mathcal C_m\). It may violate the auxiliary cap,
but that cap is not a constraint of \(\polyLPp(m)\). Moreover,
\(M'=M'^{\mathsf T}\), together with \(M'=mW'\) from
\eqref{eq:bp-scaled-M}, gives the transpose-balance constraints of
\(\polyLPp(m)\). Hence the completed point is feasible for
\(\polyLPp(m)\).

By \eqref{eq:bp-transport-residual} and the choice of the initial
minimizer,
\begin{equation}
  \frac1{m^2}\sum_{\ell,r}M'_{\ell r}
  =
  \frac1{m^2}
  \left(
    \sum_{\ell,r}M_{\ell r}+\sigma(\delta)
  \right)
  =
  V_m^A.
  \label{eq:bp-completed-objective}
\end{equation}
Therefore
\[
  \operatorname{val}\bigl(\polyLPp(m)\bigr)
  \le
  V_m^A
  =
  \max_{F\in\mathcal A_m}\mathscr R_m(F)
  \le
  \sup_{F\in\mathcal D_m}\mathscr R_m(F)
  =
  \operatorname{val}\bigl(\polyLPp(m)\bigr).
\]
The second inequality uses
\(\mathcal A_m\subseteq\mathcal D_m\), and the last equality is
Lemma~\ref{lem:bp-partial-dual}. Hence all inequalities are equalities.
\end{proof}
\begin{corollary}[Balance--price duality]
\label{cor:bp-balance-price}
\[
  V_m
  =
  \operatorname{val}\bigl(\polyLPp(m)\bigr).
\]
\end{corollary}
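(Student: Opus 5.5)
The corollary follows by chaining the two results just proved with the reflection bijection between monotone price tables and complementary gain-sharing tables. The plan is to show
\[
  V_m=\max_{g\in\mathcal K_m}\min_{\boldsymbol a,\boldsymbol b\in\mathcal B_m}\Phi_g(\boldsymbol a,\boldsymbol b)
  =\max_{F\in\mathcal A_m}\mathscr R_m(F)
  =\operatorname{val}\bigl(\polyLPp(m)\bigr).
\]
The right-hand equality is exactly Theorem~\ref{thm:appendix-price-normalization}. The left equality is definition~\eqref{eq:Vm}. So it remains to identify the middle two quantities.

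For the middle equality, use the reflection~\eqref{eq:bp-reflection}. It sends \(F\in\mathcal A_m\) to \(g(i,q):=F_{m+1-q,m+1-i}\), with \(g(i,0):=0\). As noted after~\eqref{eq:bp-reflection}, this map is a bijection from \(\mathcal A_m\) onto \(\mathcal K_m\). I would double-check the monotonicity directions here: \(F\) nonincreasing in \(\ell\) and nondecreasing in \(r\) becomes \(g(i,q)\) nondecreasing in \(q\). I would also check that complementarity \(F_{\ell r}+F_{r\ell}=1\) becomes \(g(i,q)+g(q,i)=1\), and that \(0\le F\le 1\) gives \(g(i,1)\ge0\).

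Theorem~\ref{thm:appendix-fixed-price} then gives, for each \(F\in\mathcal A_m\) with reflected \(g\),
\[
  \mathscr R_m(F)=\min_{\boldsymbol a,\boldsymbol b}\Phi_g(\boldsymbol a,\boldsymbol b).
\]
Taking the maximum of both sides over the bijection gives the middle equality.

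The only point needing care is attainment of the maxima. \(\mathcal K_m\) is a compact polytope. For each fixed pair \((\boldsymbol a,\boldsymbol b)\), the map \(g\mapsto\Phi_g(\boldsymbol a,\boldsymbol b)\) is continuous, indeed affine in the entries of \(g\). A minimum over the finite set \(\mathcal B_m^2\) of continuous functions is continuous, so the maximum defining \(V_m\) is attained. This matches the ``max'' on the \(\mathcal A_m\) side in Theorem~\ref{thm:appendix-price-normalization}, which is also a maximum over the compact set \(\mathcal A_m\).

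Beyond this bookkeeping, no step is an obstacle: all the substantive work sits in Theorems~\ref{thm:appendix-fixed-price} and~\ref{thm:appendix-price-normalization}.
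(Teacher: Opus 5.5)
Your proposal is correct and matches the paper's proof: both chain Theorem~\ref{thm:appendix-price-normalization} (giving \(\operatorname{val}(\polyLPp(m))=\max_{F\in\mathcal A_m}\mathscr R_m(F)\)) with the reflection bijection \(\mathcal A_m\to\mathcal K_m\) and Theorem~\ref{thm:appendix-fixed-price}, then maximize over the bijection. Your extra remark on attainment is valid but not needed, since the bijection already transfers the attained maximum on the \(\mathcal A_m\) side to \(\mathcal K_m\).
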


\begin{proof}
By Theorem~\ref{thm:appendix-price-normalization},
\(\operatorname{val}(\polyLPp(m))
=\max_{F\in\mathcal A_m}\mathscr R_m(F)\).  The map \(F\mapsto g\),
defined by \(g(i,q):=F_{m+1-q,m+1-i}\), is a bijection from
\(\mathcal A_m\) to \(\mathcal K_m\), with inverse
\(F_{\ell r}=g(m+1-r,m+1-\ell)\).
For corresponding \(F\) and \(g\),
Theorem~\ref{thm:appendix-fixed-price} gives
\[
  \mathscr R_m(F)
  =
  \min_{\boldsymbol a,\boldsymbol b\in\mathcal B_m}
  \Phi_g(\boldsymbol a,\boldsymbol b).
\]
Therefore
\[
  \operatorname{val}\bigl(\polyLPp(m)\bigr)
  =
  \max_{g\in\mathcal K_m}
  \min_{\boldsymbol a,\boldsymbol b\in\mathcal B_m}
  \Phi_g(\boldsymbol a,\boldsymbol b)
  =
  V_m,
\]
where the last equality is the definition of \(V_m\).
\end{proof}

\end{document}